\newcommand{\intreg}{\ensuremath{\mathit{int_{\mathrm{Reg}}}}}
\newcommand{\rep}{\ensuremath{\operatorname{rep}}}
\newcommand{\dec}{\ensuremath{\operatorname{decode}}}
\newcommand{\Enc}{\ensuremath{\mathsf{Enc}}}
\newcommand{\rename}{\ensuremath{\mathit{rename}}}
\newcommand{\threshold}{\ensuremath{\mathit{pick\_threshold}}}
\newcommand{\pickmerge}{\ensuremath{\mathit{pick\_merge}}}
\newcommand{\picksep}{\ensuremath{\mathit{pick\_separate}}}
\DeclareMathOperator{\lang}{\mathcal{L}}
\DeclareMathOperator{\eword}{\varepsilon}
\newtheorem{fact}{Fact}
\newcommand{\ta}{\ensuremath{\mathtt{a}}}
\newcommand{\tb}{\ensuremath{\mathtt{b}}}
\title{Regular Intersection Emptiness of Graph~Problems: Finding a Needle in a Haystack of Graphs with the Help of Automata}
\titlerunning{Regular Intersection Emptiness of Graph Problems}
\author{Petra Wolf}{Universit\"at Trier, Fachberich IV, Informatikwissenschaften,  54296 Trier, Germany \and \url{https://www.wolfp.net/}}{wolfp@uni-trier.de}{0000-0003-3097-3906}{DFG project  
FE 560/9-1}
\author{Henning Fernau}{Universit\"at Trier, Fachberich IV, Informatikwissenschaften,  54296 Trier, Germany}{fernau@uni-trier.de}{0000-0002-4444-3220}{}
\authorrunning{P.\ Wolf, H.\ Fernau}
\keywords{Regular intersection emptiness, Graph property, Decidability, Regular language, Finite automaton, Pumping lemma, Interchange lemma, Finite core, Regular realizability}
\begin{document}

\maketitle

\begin{abstract}
	The \intreg-problem of a combinatorial problem $P$
	asks, given a nondeterministic automaton $M$ as input, whether the language $\lang(M)$ accepted by $M$ contains any positive instance of the problem~$P$. We consider the \intreg-problem for a number of different graph problems and give general criteria that give 
 decision procedures for these \intreg-problems. To achieve this goal, we consider a natural graph encoding so that the language of all graph encodings is regular. Then, we draw the connection between classical pumping- and interchange-arguments 
	from the field of formal language theory
	with the graph operations  induced  on the encoded graph.
	Our techniques apply among others to the \intreg-problem of well-known graph problems like \textsc{Vertex Cover} and \textsc{Independent Set}, as well as to subgraph problems, graph-edit problems and graph-partitioning problems, including coloring problems. 
\end{abstract}

\newpage

\section{Introduction, Motivation and Related Work}
%
Traditional decision problems ask, given \emph{a single} instance, if this instance satisfies a certain property. But what if we do not only face a single instance, but some (representation of) a number of instances, and we like to know if \emph{any of them} satisfies the said property?

Compact representations of finite sets of instances have already been considered in several contexts. For graph problems, one might be interested if a graph satisfying a certain property, i.e., belonging to a certain graph family, is found among the graphs being similar to a given graph, this way  combinatorially modeling, for instance, input errors. 
Graph similarity is often measured in terms of edit operations~\cite{gao2010survey}, leading to graph modification problems~\cite{bodlaender2014graph,LiuWanGuo2014,Fomin15}, which have been quite a vivid research topic in parameterized algorithms in the last decade.

Searching for a positive instance among infinitely many instances of a problem $P$ seems to be a natural generalization of this setting. But how can we represent infinite sets of instances?
If we consider regular sets of instances, this task can be formalized as checking whether a given regular language of $P$-instances (represented by a  finite automaton) and the fixed language of positive $P$-instances have a non-empty intersection.
This was the original viewpoint of the line of research introduced in~\cite{guler2018deciding,Wolf:Thesis:2018}, where this problem is called the \emph{\intreg-problem of $P$} (or \intreg($P$) for short).\footnote{Note that this problem is only well-defined if it is clear how $P$ is represented as a language, i.\,e., we have to define how $P$-instances are encoded as words.} \par
The $\intreg$-problem has  been studied independently under the name \emph{regular realizability problem} $RR(L)$, where the \emph{filter language} $L$ plays the role of problem $P$  above, i.\,e., $RR(L) = \intreg(L)$ 
(see~\cite{DBLP:journals/iandc/AndersonLRSS09,DBLP:journals/corr/Rubtsov15,abs-1105-5894,DBLP:conf/csr/TarasovV11,Vyalyi11,Vyalyi13uniJournal,VyalyiR15}),  motivated by  computational complexity questions.
In this line of research, the filter languages are closely related to computations of specific machine models. This way, the regularity of the input language is not exploited at all; the hard part of a problem is coded into regular languages consisting of single words only. Vyalyi~\cite{Vyalyi13uniJournal}  notes that these reductions `cut off almost all properties of regular languages'. 


In~\cite{DBLP:journals/iandc/AndersonLRSS09,DBLP:journals/eik/HorvathKK87,Ito1988}, $\intreg(L)$ has been studied for $L$ with low computational complexity, but which describe structural properties of words that have high relevance for combinatorics on words and formal language theory (e.g., set of primitive words, palindromes, etc.).
In this regards, (efficient) decision procedures are obtained.

In contrast to these research questions, the line of work initiated in~\cite{guler2018deciding,Wolf:Thesis:2018} focuses on classical (hard) computational problems as filter languages and respective decision procedures heavily take advantage of the regularity of the set of input instances. Investigating the \intreg-problem for NP-complete problems shows that the decidability of their \intreg-problem is not trivial, e.\,g., $\intreg(\textsc{SAT})$ is decidable~\cite{guler2018deciding}, whereas $\intreg(\textsc{Bounded Tiling)}$ is not~\cite{Wolf:Thesis:2018,DBLP:journals/corr/abs-1906-08027}.
This is particularly interesting because the original hardness proofs of {\sc SAT} and {\sc Bounded Tiling} are both given by directly encoding Turing-machine computations into a problem instance~\cite{cook1971complexity,van1997convenience}.
Even low complexity classes like LOGSPACE and P contain problems with undecidable \intreg-problems~\cite{Wolf:Thesis:2018,DBLP:journals/corr/abs-1906-08027}.
Regarding the polynomial-time solvable problem \textsc{Prime} (determine if a given number is a prime number~\cite{agrawal2004primes}) it is still an open problem whether \intreg(\textsc{Prime}) is decidable~\cite{ShallitPres}. Apparently this is even unknown for regular languages of the form $uv^*w$ for words $u, v, w$.
On the contrary, for the NP-complete \textsc{Integer Linear Programming} problem, the decidability of \intreg(\textsc{Integer Linear Programming}) has been shown in~\cite{DBLP:conf/dcfs/000219}.


Here, we focus on graph problems, which deliver a rich source of NP-complete and polynomial-time solvable combinatorial problems. 
We consider a natural encoding of graphs as edge lists, so that the set of all graph encodings is a regular set. Based on this encoding, we develop a number of general criteria that imply decidability of many \intreg-problems. This stands out from the previous studies of \intreg-problems, where only singular problems have been classified as permitting a decidable \intreg-variation. 

\section{Preliminaries}


Let $\mathbb{N} = \{1, 2, 3, \ldots\}$ and $[n] = \{1, 2, \ldots, n\}$, $n \in \mathbb{N}$. For a set $A$, by $\mathcal{P}(A)$ we denote its power set and we will identify singleton sets by their elements. 
We often use combinatorial arguments in the spirit of the pigeon hole principle; the following observation is an example.

\begin{lemma} \label{lem-inclusionbound}
Let $n\in\mathbb{N}$.
Consider $\mathcal{C}\subseteq\mathcal{P}([n])$. If $|\mathcal{C}|>n$, then there is a set $A\in\mathcal{C}$ with $A\subseteq \bigcup_{B\in\mathcal{C}, B\neq A}B$.
\end{lemma}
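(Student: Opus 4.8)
The plan is to prove the contrapositive. Rather than assuming $|\mathcal{C}|>n$ directly, I would assume that \emph{no} member of $\mathcal{C}$ is contained in the union of the remaining members, and then show that this forces $|\mathcal{C}|\le n$. The whole argument rests on a single reformulation: saying that $A\not\subseteq\bigcup_{B\in\mathcal{C},\,B\neq A}B$ is exactly saying that $A$ possesses a \emph{private element}, i.e.\ some $x_A\in A$ that belongs to no other member of $\mathcal{C}$. (As a sanity check on the edge cases, note that $\emptyset$ can never have such a private witness, so in the contrapositive regime every member of $\mathcal{C}$ is automatically nonempty; this costs nothing and I would not dwell on it.)

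Next I would package these witnesses into a map. Under the contrapositive hypothesis, every $A\in\mathcal{C}$ admits a private element, so I can choose one such $x_A$ for each $A$ and define $f:\mathcal{C}\to[n]$ by $f(A)=x_A$. The key step is to verify that $f$ is injective. Suppose $A\neq A'$ but $x_A=x_{A'}$. Since $x_{A'}\in A'$, we would have $x_A\in A'$; but $x_A$ is private to $A$ and $A'\neq A$, so $x_A\notin A'$, a contradiction. Hence distinct sets receive distinct witnesses and $f$ is injective.

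Finally, an injection from $\mathcal{C}$ into the $n$-element set $[n]$ yields $|\mathcal{C}|\le n$, which is the contrapositive of the claimed implication. I do not expect any genuine obstacle here: this is a clean pigeonhole argument, and the only point requiring care is the faithful translation between ``$A$ is not covered by the union of the others'' and ``$A$ has an element private to it,'' together with the short injectivity check above. Everything else is bookkeeping.
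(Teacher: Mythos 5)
Your proposal is correct and follows exactly the same route as the paper's proof: contrapositive, choose for each $A$ a private element $a\in A\setminus\bigcup_{B\neq A}B$, observe the resulting map $\mathcal{C}\to[n]$ is injective, and conclude $|\mathcal{C}|\le n$. No discrepancies to report.
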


\begin{proof}
We prove the contraposition. Hence, consider some set system  $\mathcal{C}\subseteq\mathcal{P}([n])$ in which for any  set $A\in\mathcal{C}$,  $A\subseteq \bigcup_{B\in\mathcal{C}, B\neq A}B$ is wrong. Then, there exists a function $f:\mathcal{C}\to [n]$ that proves this, as $f(A)=a$ with $a\in A\setminus \bigcup_{B\in\mathcal{C}, B\neq A}B$.
$f$ is injective, because if $f(A)=a$, then  $a\notin B$ for any other set $B\in\mathcal{A}$, so in particular $f(B)\neq a$. Hence,  $|\mathcal{C}|\leq n$.
\end{proof}

A finite, nonempty set is also known as an alphabet.
For an alphabet $A$, $A^+$ denotes the set of non-empty words over $A$ and $A^* = A^+ \cup \{\eword\}$, where $\eword$ denotes the empty word. 
 For a word $w$ over some alphabet $A$, $|w|$ denotes its length and, for every $i \in [|w|]$, $w[i]$ denotes the $i^{\text{th}}$ symbol of $w$. Moreover, by $w[i..j]$, we denote the \emph{factor} of $w$ from symbol $i$ to symbol $j$.
 A factor $w[1..i]$ with $1 \leq i \leq |w|$ is a \emph{prefix} and a factor $w[i..|w|]$ with $1 \leq i \leq |w|$ is a \emph{suffix} of~$w$. A \emph{factorization} of $w$ is a tuple $(u_1, u_2, \ldots, u_k) \in (A^*)^k$ such that $w = u_1 u_2 \ldots u_k$; we also simply represent factorizations as the concatenation of the factors, i.\,e., in the form $u_1 u_2 \ldots u_k$ (or also $w = u_1 u_2 \ldots u_k$ to emphasize that we consider a factorization of $w$). \par
A subset $L \subseteq \Sigma^*$ is a \emph{language}. For a language $L \subseteq \Sigma^*$ and 
$k \in \mathbb{N}$, we define $L / k = \{w \in \Sigma^* \mid \exists u \in \Sigma^*: |u| = k \wedge w u \in L\}$; intuitively speaking, $L / k$ is obtained from $L$ by removing the last $k$ symbols form every word.


A nondeterministic finite automaton (NFA) is a tuple $M = (\Sigma, Q, \delta, q_0, F)$ where $\Sigma$ is a finite alphabet, $Q$ is a finite set of states, $\delta \colon Q \times \Sigma \to \mathcal{P}(Q)$ is a transition function, $q_0\in Q$ is the initial state, and $F \subseteq Q$ is a set of final states. If $q'\in \delta(q,a)$, it is sometimes more convenient to view this as a triple $(q,a,q')$ called transition. 
The transition function generalizes to words in the usual way, i.e., $\delta(q, w_1w_2\dots w_n) = \delta(\dots\delta(\delta(q, w_1), w_2)\dots, w_n)$.
It also generalizes to sets of states in the following way: For a set $P \subseteq Q$ and $\sigma \in \Sigma$ let $\delta(P, \sigma) = \bigcup_{p\in P} \delta(p, \sigma)$. 
In this way, we may always apply  functions to sets of inputs.
The language accepted by an NFA $M$ is the set $\lang(M)=\{w \in \Sigma^* \mid \delta(q_0, w) \cap F \neq \emptyset\}$. 
Sometimes, we also consider a generalized NFA, allowing words (not single letters) to lead from state to state in the transitions.
For two states $q, q' \in Q$, we also consider the NFA $M[q, q'] = (\Sigma, Q, \delta, q, \{q'\})$, yielding at most  $|Q|^2$ many regular languages  $\lang(M[q, q'])$.
For a $w \in \lang(M)$, an \emph{accepting factorization (with respect to states $q_0, q_1, \ldots, q_m$)} is any factorization $w = u_1 u_2 \ldots u_m$ such that, for every~$i$ with $1 \leq i \leq m$, $u_i \in \lang(M[q_{i-1}, q_i])$, and $q_m \in F$. Recall that 
$q_0$ is the initial state. 

In general throughout this paper, we assume the tuple $ (\Sigma, Q, \delta, q_0, F)$ associated to $M$ without further mentioning. 
Also, we assume that all states of $M$ can be reached from some initial state and may lead into some final state, i.e., $M$ is reachable and co-reachable.

NFAs characterize the class of regular languages. Another characterization that we use without further formal introduction is that of regular expressions.


Throughout the paper, we consider undirected simple graphs $G = (V, E)$, where $V$ is a finite set of \emph{vertices} and $E \subseteq \{\{u,v\}\mid u, v\in V, u \neq v\}$ is a set of undirected \emph{edges}. In particular, note that this means that there is at most one edge between two vertices. 


\begin{definition}[Regular Intersection Emptiness Problem]\ \\
	For a fixed language $P\subseteq\Sigma^*$, formalizing some decision problem, the \emph{regular intersection emptiness problem} of $P$ ($\intreg(P)$ for short) is the following problem.
	\ \\
	\textit{Given:} NFA $M=(\Sigma, Q, \delta, I, F)$.\\
	\textit{Question:} Is $\lang(M) \cap P \neq \emptyset$?
\end{definition}

We are interested in the (mere) decidability status of this family of problems, depending on $P$. Hence, we need not distinguish between the emptiness or non-emptiness question. 
Below, we will describe how graphs (and numerical bounds) are encoded. As we only consider graph problems in this paper, this also fixes $\Sigma=\{\ta,\$,\#, \text{\textoneoldstyle},\triangleright\}$ in the previous definition.


%
%

\section{Main Construction: Linking Automata and Graphs}

\subsection*{Representative Functions of Automata}

%

Let $M=(\Sigma, Q, \delta, q_0, F)$ be an NFA. A \emph{representative function} (\emph{for $M$}) is a function $\rep_M \colon Q^2 \to \mathcal{P}(\Sigma^*)$ such that, for every $q, q' \in Q$, $\rep_M(q, q')$ is a finite subset of $\lang(M[q, q'])$. Each set $\rep_M(q, q')$ is called the set of \emph{$(q, q')$-representatives}. 
By assumption, the sets $\rep_M(Q^2)=\bigcup_{q,q'\in Q}\rep_M(q, q')$ of all representatives and $\Sigma_{\rep_M}=\{\rep_{M}(q, q') \mid q, q' \in Q\}$ are finite.
The \emph{$\rep_M$-condensed version} of~$M$ is the NFA $M_{\rep_M} = (\Sigma_{\rep_M}, Q, \delta', q_0, F)$, where, for every $q, q' \in Q$, $ q'\in \delta'(q, \rep_{M}(q, q'))$ iff $\rep_{M}(q, q') \neq \emptyset$.
By $\widehat{M}_{\rep_M}$, we denote the generalized NFA (over alphabet $\Sigma$) obtained from $M_{\rep_M}$ by interpreting every transition $q'\in\delta'(q, \rep_{M}(q, q'))$ as the set of transitions $\{(q, w, q') \mid w \in \rep_{M}(q, q')\}$.
The differences between these three automata are depicted in Figure~\ref{fig:MrepM} (appendix).
With the related finite substitution $\mathit{sub}:\Sigma_{\rep_M}\to  \mathcal{P}(\Sigma^*)$ that interprets the symbol  $\rep_{M}(q, q')\in \Sigma_{\rep_M}$ as a finite subset of $\Sigma^*$, we see that $\lang(\widehat{M}_{\rep_M})=\mathit{sub}(\lang(M_{\rep_M}))$. 
Hence, we find: 

\begin{proposition}
	\label{prop:subset}
$\lang(\widehat{M}_{\rep_M}) \subseteq \lang(M)$.
\end{proposition}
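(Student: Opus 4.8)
The plan is to prove the inclusion by tracing an accepting computation of the generalized automaton $\widehat{M}_{\rep_M}$ back to an accepting computation of the original automaton $M$. The single fact doing all the work is that, by the definition of a representative function, every representative set satisfies $\rep_M(q,q') \subseteq \lang(M[q,q'])$; that is, each representative is a genuine word leading $M$ from $q$ to $q'$. So the task reduces to checking that every generalized transition of $\widehat{M}_{\rep_M}$ is realized by a matching path in $M$ on the same word.

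First I would fix an arbitrary $v \in \lang(\widehat{M}_{\rep_M})$ and unfold the definition of acceptance. Since $\widehat{M}_{\rep_M}$ is obtained from $M_{\rep_M}$ by reading each letter $\rep_M(q,q') \in \Sigma_{\rep_M}$ as some word $w \in \rep_M(q,q')$, an accepting run of $\widehat{M}_{\rep_M}$ on $v$ yields a sequence of states $q_0, q_1, \ldots, q_m$ with $q_m \in F$ together with a factorization $v = w_1 w_2 \cdots w_m$ such that, for each $i$, the generalized transition $(q_{i-1}, w_i, q_i)$ is present, which by construction means $w_i \in \rep_M(q_{i-1}, q_i)$ (in particular $\rep_M(q_{i-1},q_i) \neq \emptyset$).

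Next I would push each factor back into $M$. From $w_i \in \rep_M(q_{i-1}, q_i) \subseteq \lang(M[q_{i-1},q_i])$ we get $q_i \in \delta(q_{i-1}, w_i)$ for every $i$. Composing these along the run, and using that $\delta$ extends to words and to sets of states in the natural way, gives $q_m \in \delta(q_0, w_1 w_2 \cdots w_m) = \delta(q_0, v)$. Since $q_m \in F$, this is exactly the statement that $v \in \lang(M)$, completing the inclusion. Equivalently, $v = w_1 w_2 \cdots w_m$ is an accepting factorization of $v$ with respect to $q_0, q_1, \ldots, q_m$ in the sense defined above.

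The argument has no genuine mathematical obstacle; the only care required is the bookkeeping across the three automata. One must keep straight that a single letter of $\Sigma_{\rep_M}$ is itself a finite set of $\Sigma$-words, so that a one-step transition of $M_{\rep_M}$ corresponds in $\widehat{M}_{\rep_M}$ to reading an entire factor $w_i$, and hence to an entire sub-path in $M$. Making this correspondence precise — verifying that each generalized transition is realized by a path in $M$ on the same word — is the crux; once stated, the inclusion follows by a routine induction on the number $m$ of factors. As a sanity check, this is consistent with the stated identity $\lang(\widehat{M}_{\rep_M}) = \mathit{sub}(\lang(M_{\rep_M}))$, so the claim could alternatively be phrased as showing that applying $\mathit{sub}$ can only produce words already accepted by $M$.
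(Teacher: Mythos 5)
Your proof is correct and follows exactly the route the paper intends: the paper gives no explicit proof, treating the inclusion as immediate from the fact that each symbol $\rep_M(q,q')$ of $M_{\rep_M}$ is substituted by words from $\rep_M(q,q')\subseteq\lang(M[q,q'])$, which is precisely the observation your argument unfolds. Your explicit factorization of an accepting run of $\widehat{M}_{\rep_M}$ and its composition into a run of $M$ is just the careful spelling-out of that one-line justification.
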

\begin{lemma}\label{interchangeLemma}
Let $\rep_M$ be a representative function of the NFA $M$. Let $w \in \lang(M)$ and let $w = u_1 u_2 \ldots u_m$ be an accepting factorization of $w$ with respect to states $q_0, q_1, \ldots, q_m$. 
Then, $\rep_M(q_{0}, q_1) \cdot \rep_M(q_{1}, q_2) \cdot \ldots \cdot \rep_M(q_{m-1}, q_m) \subseteq \lang(\widehat{M}_{\rep_M})$.
\end{lemma}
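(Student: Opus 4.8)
The plan is to route the argument through the two auxiliary automata $M_{\rep_M}$ and $\widehat{M}_{\rep_M}$, exploiting the identity $\lang(\widehat{M}_{\rep_M}) = \mathit{sub}(\lang(M_{\rep_M}))$ recorded just before the statement. The key idea is that the state sequence $q_0, q_1, \ldots, q_m$ witnessing the accepting factorization of $w$ in $M$ \emph{also} witnesses an accepting run in the condensed automaton $M_{\rep_M}$, and that applying the substitution $\mathit{sub}$ to the word spelled out by this run yields exactly the concatenation of representative sets on the left-hand side.

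First I would dispose of the degenerate case: if $\rep_M(q_{i-1}, q_i) = \emptyset$ for some $i$, then the product $\rep_M(q_0, q_1) \cdots \rep_M(q_{m-1}, q_m)$ is the empty set, which is trivially contained in $\lang(\widehat{M}_{\rep_M})$. So assume every factor set is nonempty. Then, by the definition of the transition function $\delta'$ of $M_{\rep_M}$, the nonemptiness of $\rep_M(q_{i-1}, q_i)$ guarantees the transition $q_i \in \delta'(q_{i-1}, \rep_M(q_{i-1}, q_i))$ for each $i \in [m]$. Chaining these transitions produces a run of $M_{\rep_M}$ that starts in the initial state $q_0$, reads the word $W = \rep_M(q_0, q_1)\, \rep_M(q_1, q_2) \cdots \rep_M(q_{m-1}, q_m)$ over the alphabet $\Sigma_{\rep_M}$, and ends in $q_m$; since $q_m \in F$, this is an accepting run, so $W \in \lang(M_{\rep_M})$.

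It then remains to translate this membership back to $\Sigma^*$. Applying $\mathit{sub}$ symbol-by-symbol to $W$ replaces each symbol $\rep_M(q_{i-1}, q_i) \in \Sigma_{\rep_M}$ by the finite subset of $\Sigma^*$ it names, so $\mathit{sub}(W)$ is precisely the concatenation product $\rep_M(q_0, q_1) \cdots \rep_M(q_{m-1}, q_m)$ appearing on the left-hand side of the claim. From $W \in \lang(M_{\rep_M})$ we get $\mathit{sub}(W) \subseteq \mathit{sub}(\lang(M_{\rep_M})) = \lang(\widehat{M}_{\rep_M})$, which is exactly the desired inclusion. The only point requiring care --- and the step I would flag as the main (though modest) obstacle --- is the bookkeeping around the deliberate overloading of notation: each $\rep_M(q, q')$ functions simultaneously as a \emph{single letter} of $\Sigma_{\rep_M}$ inside the run and as the \emph{set of words} it denotes after substitution, and one must keep these two readings straight to see that $\mathit{sub}(W)$ coincides with the stated product rather than with a single word.
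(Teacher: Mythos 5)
Your proof is correct and is precisely the argument the paper intends: the paper states this lemma without any proof, treating it as immediate from the definition of $\delta'$ (a transition on the letter $\rep_M(q,q')$ exists iff that set is nonempty) together with the identity $\lang(\widehat{M}_{\rep_M})=\mathit{sub}(\lang(M_{\rep_M}))$ recorded just beforehand, which is exactly the route you take. The empty-factor case and the letter-versus-set bookkeeping you flag are the only points requiring care, and you handle both correctly.
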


\subsection*{Encodings of Graphs}
We focus on combinatorial problems involving graphs. Instances of many of them can be seen as pairs of graphs and non-negative integers. We define an encoding of such pairs in the following such that the set of all encodings forms a regular language.


\begin{definition}
	Let $\mathbb{G}$ be the set of all undirected simple graphs (without loops) and let $\Enc = \lang(\triangleright \text{\textoneoldstyle}^*\$(\triangleright \ta^*\#\,\triangleright \ta^*\$)^*)$. The function $\dec \colon \Enc \to \mathbb{G} \times \mathbb{N}$ is defined as follows: 
	$$\dec(\triangleright \text{\textoneoldstyle}^k \$ \prod^m_{i = 1}(\triangleright \ta^{p_i}\#\, \triangleright\ta^{q_i}\$)) = (G, k)\,,$$
	where $G = (V, E)$ with 
$V=\{v_{p_i},v_{q_i}\mid i\in [m]\},\ E=\{\{v_{p_i},v_{q_i}\}\mid i\in [m],v_{p_i}\neq v_{q_i}\}\,.$
\end{definition}

Note that a word $w$ from $\Enc$ can contain the factor $\triangleright \ta^{i}\#\,\triangleright \ta^{j}\$$ and the factor $\triangleright \ta^{j}\#\,\triangleright \ta^{i}\$$ at the same time, and also several occurrences of the same factor $\triangleright \ta^{i}\#\,\triangleright \ta^{j}\$$. Nevertheless, by definition, $\dec(w)$ will necessarily be a simple graph. Likewise, a factor $\triangleright \ta^{i}\#\,\triangleright \ta^{i}\$$ is possible and might yield an isolated vertex.
For some $w \in \Enc$, we call the factors of the form $\triangleright  \text{\textoneoldstyle}^i \$$ as \emph{threshold tokens}, and the factors of the form $\triangleright  \ta^i \#$ and $\triangleright  \ta^i \$$ as \emph{left} and \emph{right vertex tokens}, respectively.
We refer to a factor as a \emph{vertex token} if it does not matter whether it is a left or right vertex token. 
Every $w \in \Enc$ has a unique factorization into one threshold token and a sequence of left and right vertex tokens.

%

Observe that the set  $\mathcal{I}$ of encodings envolving only edgeless graphs is not regular, as  $\mathcal{I}\cap \lang(\$\triangleright\ta^*\#\triangleright\ta^*\$)=\{\$\triangleright\ta^i\#\triangleright\ta^i\$\mid i\in\mathbb{N}\}$. 
If $\mathbb{T}\subset\mathbb{G}$ is the set of all graphs that contain some triangle, then $\dec^{-1}(\mathbb{T}\times\mathbb{N})$ is not regular either, but there is a regular language $T\subseteq \Enc$ such that $\dec(T)=\mathbb{T}\times\mathbb{N}$.
Namely, consider $T=\Enc\cdot\{\triangleright \ta \#\, \triangleright\ta\ta\$\, \triangleright \ta \#\, \triangleright\ta\ta\ta\$\, \triangleright \ta\ta \#\, \triangleright\ta\ta\ta\$\}$. 
As a third example, consider the set $\mathbb{B}$ of all bipartite graphs. Again,  $\dec^{-1}(\mathbb{B}\times\mathbb{N})$ is not regular, but $B=\lang(\triangleright \text{\textoneoldstyle}^*\$(\triangleright (\ta\ta)^*\#\,\triangleright \ta(\ta\ta)^*\$)^*)$ satisfies  $\dec(B)=\mathbb{B}\times\mathbb{N}$.
The last two examples generalize to $c$-cliques or $c$-colorability for any fixed~$c$.

This already explains the difference between questions on the syntactic level (encodings) and on the semantic level (encoded objects, in our case mostly graphs). In particular, the regular intersection emptiness problems that we consider in the following refers to the semantic level and can hence not be solved by making use of decidability results for regular languages. For instance, in this way we cannot check if the language $\lang(M)$ of some NFA $M$ contains a description of any graph that contains some triangle by testing $\lang(M) \cap T \neq \emptyset$. What we can guarantee, however, is that any NFA $M$ talking about graph properties satisfies $\lang(M)\subseteq\Enc$, because $\Enc$ is a regular set. This is one of the reasons to choose this particular graph encoding, as it avoids making regular intersection emptiness hard just by not being able to tell if any of the words of $\lang(M)$ encodes a graph.


%


\subsection*{Token-Preserving Representative Functions}
For an NFA $M=(\Sigma, Q, \delta, q_0, F)$ with $\lang(M) \subseteq \Enc$, we say that a representative function $\rep_M$ for $M$ is \emph{token-preserving} if, for every $p, q \in Q$, $\rep_M(p, q)$ is a collection of tokens. 

\begin{fact}
If $\rep_M^1, \dots, \rep_M^s$ are all representative functions for $M$, then so is their union, given as $\rep_M(p,q)=\bigcup_{i=1}^s \rep_M^i(p,q)$. If all $\rep_M^i$ are token-preserving, then is their union.
\end{fact}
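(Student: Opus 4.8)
The plan is to verify directly that the union satisfies both defining conditions of a representative function, and then to check the token-preserving property separately; neither requires any machinery beyond elementary closure of finiteness and subset-hood under finite unions.

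First I would recall that $\rep_M$ is a representative function for $M$ exactly when, for every pair $(p,q)\in Q^2$, the set $\rep_M(p,q)$ is a \emph{finite} subset of $\lang(M[p,q])$. So I fix an arbitrary pair $(p,q)$. Each $\rep_M^i(p,q)$ is by hypothesis a finite subset of $\lang(M[p,q])$. Since a union of subsets of the fixed set $\lang(M[p,q])$ is again a subset of that set, we obtain $\bigcup_{i=1}^s \rep_M^i(p,q)\subseteq \lang(M[p,q])$. Finiteness is preserved because the index set $\{1,\dots,s\}$ is finite and a finite union of finite sets is finite. As $(p,q)$ was arbitrary, the union $\rep_M$ is indeed a representative function for $M$.

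For the second assertion I would invoke the definition of token-preserving: that $\rep_M^i$ is token-preserving means that, for every $(p,q)$, every element of $\rep_M^i(p,q)$ is a token (threshold, left, or right vertex token). If all the $\rep_M^i$ share this property, then any element of $\bigcup_{i=1}^s \rep_M^i(p,q)$ lies in some $\rep_M^i(p,q)$ and is therefore a token; hence the union is token-preserving as well.

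The only point worth being careful about is that there is genuinely nothing deeper to prove, which is why the statement is phrased as a \emph{Fact}; the sole substantive observation is that the index set is finite, since this is exactly what preserves the finiteness clause of the definition. An infinite union of representative functions could produce an infinite representative set and thus fail to be a representative function, so the finiteness of the index range is the hypothesis doing the real work.
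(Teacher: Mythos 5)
Your proof is correct and is exactly the direct verification the paper intends by stating this as an unproved Fact: finiteness and subset-hood of each $\rep_M^i(p,q)\subseteq\lang(M[p,q])$ are preserved under the finite union over $i\in[s]$, and likewise being a collection of tokens is preserved. Your closing remark that the finiteness of the index set is the load-bearing hypothesis is a fair and accurate observation.
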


As all states are reachable as well as co-reachable and as $\lang(M) \subseteq \Enc$, we can further observe the following 
 for a token-preserving representative function $\rep_M$:

\begin{fact}\label{fact:2} (a) $\bigcup_{q\in Q}\rep_M(p, q)$ contains either only right vertex tokens or left vertex tokens or threshold tokens. (b) We expect threshold tokens only in sets $\rep_M(q_0, q)$, but then there is no threshold token in any $\rep_M(q,r)$. 
(c) If the non-empty set $\rep_M(p,q)$ contains only left vertex tokens, then any non-empty $\rep_M(p,q')$ contains only left vertex tokens and non-empty $\rep_M(q',r)$ contains only right vertex tokens, so that we can partition $Q$ into four classes $Q_{\text{threshold}}$, $Q_{\text{left vertex}}$ and $Q_{\text{right vertex}}$, depending on the type of tokens that can be read from that state, and $Q_{\text{empty}}$ if no token can be read from that state. As a boundary case, we assign all the final states to $Q_{\text{left vertex}}$, even if no token can be read from them.
\end{fact}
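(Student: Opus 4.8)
The plan is to attach to every state a single ``phase'' of the deterministic skeleton of $\Enc$ and to show that all tokens leaving a state are forced into the one type matching that phase. First I would check that representatives really align with tokens of $\Enc$. By reachability and co-reachability, any $t\in\rep_M(p,q)$ sits inside a word $u\,t\,v\in\lang(M)\subseteq\Enc$ with $u\in\lang(M[q_0,p])$ and $v$ leading $q$ into a final state. In $\Enc$ the symbol $\triangleright$ occurs only at the start of a token and $\#,\$$ only at its end, while the interior of a token uses none of $\triangleright,\#,\$$. As $t$ itself is a token, its leading $\triangleright$ must fall on a token boundary of $utv$ and $t$ coincides with exactly one token of the (unique) factorization; hence $t$ has a well-defined type in $\{T,L,R\}$ (threshold, left vertex, right vertex), read off from its position, and any nonempty $\rep_M(p,\cdot)$ certifies that $p$ lies on a token boundary.

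Next I would introduce the three-state phase automaton $A$ over $\{T,L,R\}$ accepting the type sequences $T(LR)^*$ of $\Enc$: phase $0$ enables only $T$ and moves to phase $1$; phase $1$ enables only $L$ and moves to phase $2$; phase $2$ enables only $R$ and moves back to phase $1$; phase $1$ is accepting. The crucial feature is that $A$ is deterministic \emph{and} each phase enables exactly one type. For any token-boundary word $u$ reaching $p$, its unique token factorization yields a type sequence, and I define $\Phi(p)$ to be the phase of $A$ this sequence reaches (for a final state, reached by some $u\in\lang(M)$, the same recipe applies directly).

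The main obstacle is to show that $\Phi$ does not depend on the chosen $u$. Suppose $u_1,u_2$ both reach $p$ on a token boundary; choose any nonempty common completion $t\,v$, namely a token $t\in\rep_M(p,q)$ followed by a path from $q$ to a final state. Then $u_1tv,u_2tv\in\Enc$, so the common nonempty type suffix starting with $\mathrm{type}(t)$ must be readable in $A$ from both candidate phases and end in the accepting phase. Since each phase of $A$ enables a single, distinct type, the first symbol of a nonempty readable suffix already determines the phase; hence the two candidates coincide and $\Phi$ is well defined.

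Parts (a)--(c) then follow mechanically. Every token leaving $p$ carries the unique type enabled from $\Phi(p)$, which is (a). A threshold token forces $\Phi(p)=0$; but phase $0$ is reached only by the empty type sequence, so only $q_0$ can realize it, and $A$ never returns to phase $0$, so no state reached after the threshold carries a threshold token, which is (b). A left-vertex token from $p$ forces $\Phi(p)=1$, so all tokens from $p$ are left-vertex tokens, and reading such a token moves the phase to $2$, whence every token from the target is a right-vertex token; this is (c). The announced partition is $Q_{\text{threshold}}=\Phi^{-1}(0)$, $Q_{\text{left vertex}}=\Phi^{-1}(1)$, $Q_{\text{right vertex}}=\Phi^{-1}(2)$, with $Q_{\text{empty}}$ collecting the non-final states carrying no token. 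Finally, the boundary convention is consistent: every word of $\lang(M)\subseteq\Enc$ has type sequence $T(LR)^*$ and so ends in phase $1$, meaning each final state has $\Phi=1$ and legitimately joins $Q_{\text{left vertex}}$ even when no token can be read from it.
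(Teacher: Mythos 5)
Your proposal is correct. The paper gives no proof of this Fact at all --- it is stated as an immediate observation following from reachability, co-reachability, and $\lang(M)\subseteq\Enc$ --- and your phase-automaton argument is a faithful and complete formalization of exactly that implicit reasoning: the key points (a token's leading $\triangleright$ and trailing $\#$/$\$$ force it onto a boundary of the unique token factorization, and the phase of a state is independent of the word used to reach it because each phase of the $T(LR)^*$ skeleton enables a single distinct token type) are precisely what is needed, and your handling of the boundary convention for final states is consistent since every word of $\Enc$ ends in the accepting phase.
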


Because of item (c), we define $\rep_M^E(p,r)=\bigcup_{q\in Q}\rep_M(p,q)\cdot \rep_M(q,r)$ to collect all edge factors that are found in sequences of left and right vertex tokens moving from state $q$ to state~$r$. Accordingly, $\rep_M^E(Q^2)$ denotes all such edge factors. Since all considered automata~$M$ are reachable, co-reachable, and $\lang(M) \subseteq \Enc$, we have $|\rep_M^E(Q^2)| \geq |\rep_M(Q^2)|$.

\begin{lemma}
	\label{lem:finite_core}
	Let $M$ be an NFA with $\lang(M) \subseteq \Enc$ and let $\rep_M$ be a token-preserving representative function for $M$. Let $m= \left|\rep_M^E(Q^2)\right|$, 
let $n = \max\{|w| \mid w \in \rep_M(Q^2)\}$ and let $\ell =|Q|^2(m+2)m  2n + n 
$. Then,
$\dec(\{w \in \lang(\widehat{M}_{\rep_M}) \mid |w| \leq \ell\}) = \dec(\lang(\widehat{M}_{\rep_M}))$. 
\end{lemma}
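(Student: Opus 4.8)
The inclusion $\dec(\{w \in \lang(\widehat{M}_{\rep_M}) \mid |w| \le \ell\}) \subseteq \dec(\lang(\widehat{M}_{\rep_M}))$ is immediate, so the plan is to establish the reverse inclusion: every pair $(G,k) \in \dec(\lang(\widehat{M}_{\rep_M}))$ is already decoded by some word of length at most $\ell$. The starting point is the structural picture of Fact~\ref{fact:2}: any accepting run of $\widehat{M}_{\rep_M}$ first reads a single threshold token out of $q_0$ (of length at most $n$), which fixes the value $k$, and thereafter alternates left- and right-vertex tokens. Hence the remainder of the run decomposes into a sequence of edge factors from $\rep_M^E(Q^2)$, each of length at most $2n$, connecting states of $Q_{\text{left vertex}}$. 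I would view this tail as a walk in the induced edge-level automaton whose vertices are the at most $|Q|$ left-vertex states and whose arcs carry the at most $m$ distinct edge-factor labels.

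The crucial observation is that $\dec$ forgets both the order and the multiplicity of the edge factors: the decoded $(G,k)$ depends only on $k$ and on the \emph{set} of edge factors occurring along the run (this set simultaneously determines the edges, via factors $\triangleright\ta^i\#\triangleright\ta^j\$$ with $i\neq j$, and the isolated vertices, via factors with $i=j$). Consequently the decoded graph is preserved under any rewriting of the run that keeps the threshold transition and the set of edge factors used while deleting repetitions. I would therefore fix a shortest word $w$ with $\dec(w)=(G,k)$ and argue that its edge-level walk cannot revisit any left-vertex state too often.

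Concretely, suppose some left-vertex state $p$ is visited more than $m+1$ times. The walk then contains more than $m$ closed sub-walks based at $p$; to each I associate the set of edge factors it reads, regarded as a subset of the index set $[m]$ of all distinct edge factors. If two of these sets coincide, one of the corresponding cycles is redundant; otherwise there are more than $m$ distinct sets, and Lemma~\ref{lem-inclusionbound} produces one set contained in the union of the others. In either case some cycle at $p$ may be excised: the deletion removes no edge factor from the overall set (each one it reads is still read by a surviving cycle at $p$) and introduces none, so by the observation above the decoded pair is unchanged, contradicting minimality. Thus every left-vertex state is visited at most $m+1$ times, which bounds the number of edge factors by roughly $|Q|(m+1)$; multiplying by the per-factor length $2n$ and adding the threshold contribution $n$ yields a total length comfortably below $\ell$. (The degenerate case $m=0$, i.e.\ edgeless graphs, is handled directly: the accepting run is the threshold token alone, of length $\le n \le \ell$.)

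The main obstacle is exactly the bookkeeping of this excision step, namely guaranteeing that cutting a cycle alters neither the vertex set nor the edge set of $\dec(w)$. This is what the set-theoretic reformulation buys: because $\dec$ sees only the set of edge factors, and because Lemma~\ref{lem-inclusionbound} (together with the pigeonhole bound on the number of visits) certifies that the excised cycle contributes only edge factors surviving elsewhere in the walk, the ``no new edges'' and ``no lost edges'' requirements are met at once. A secondary point to verify is that excising a cycle based at a \emph{left}-vertex state removes only complete edge factors, so that the token-level run remains a valid accepting run with no dangling half-edge; this is precisely why the edge-level, rather than the token-level, view is the right granularity for the argument.
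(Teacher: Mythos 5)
Your proof is correct and follows essentially the same strategy as the paper's: split the run after the threshold token into complete edge factors, find a repeated configuration along the run, associate to each segment between consecutive repetitions the \emph{set} of edge factors it reads, and invoke Lemma~\ref{lem-inclusionbound} to locate a segment whose edge factors all survive elsewhere, so that excising it preserves both acceptance and the decoded pair $(G,k)$. The one place you diverge is the pigeonhole: the paper pigeonholes on repeated occurrences of the \emph{same left vertex token read between the same pair of states} (whence the factor $|Q|^2 m$ in $\ell$), whereas you pigeonhole only on repeated visits to a left-vertex state. Your coarser pigeonhole is sound --- excising a closed sub-walk based at a left-vertex state removes only complete edge factors and reconnects the run at that same state, and Lemma~\ref{lem-inclusionbound} only cares about the edge-factor sets of the segments, not about which token is read at their boundaries --- and it yields the sharper bound of roughly $|Q|(m+1)$ edge factors, comfortably within the stated $\ell$. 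So the extra precision in the paper's repetition pattern buys nothing for the excision step; your version is a mild simplification of the same argument rather than a different one.
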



\begin{proof}
There are only finitely many different tokens which can appear in a word of $\lang(\widehat{M}_{\rep_M})$ since $\rep_M$ is a token-preserving representative function. 
Each word $w \in \Enc$ can contain only one threshold token (*).  
For the vertex tokens, we have to consider the context in which the token appears in some word $w \in \lang(\widehat{M}_{\rep_M})$, i.e., we have to focus on  the edges. 

Consider some $w \in \lang(\widehat{M}_{\rep_M})$ with
$|w| > \ell$. 
It has 
some accepting factorization $w=\prod_{i=1}^r w_i$ (corresponding to a state sequence $q_0,q_1,\dots,q_r$ with $w_i\in\rep_M(q_{i-1}, q_i)$).  As $w \in \lang(\widehat{M}_{\rep_M})$ and by (*), $r$ is odd, and among the $r$ tokens, there are (a) one threshold token, (b) $\frac{r-1}{2}$ left vertex tokens and (c)  $\frac{r-1}{2}$ right vertex tokens.
The tokens under (b) and (c) form  $\frac{r-1}{2}$ many edge factors.
Moreover, $\ell \leq r\cdot   \max\{|w| \mid w \in \rep_M(Q^2)\}$ (+). Also, $q_i\in Q_{\text{left vertex}}$ iff $i$ is odd. 
Since $|w| > \ell$ and by (+), 
there must exist a pair of states $(p,q)$ and a left 
vertex token $u\in\rep_M(p,q)$, such that there are $m+2$ many even indices $1\leq  i_1<i_2<\cdots i_{m+1}\leq r$ with  $p=q_{i_j-1}\in Q_{\text{left vertex}}$, $q=q_{i_j}\in Q_{\text{right vertex}}$ and $u=w_{i_j}$ for $j\in [m+2]$.

Define $\operatorname{Edgefactors}(j)=\{ w_iw_{i+1}\mid i_j\leq i<i_{j+1},\ i\ \text{is even}\}$ 
 for $j\in [m+1]$. As $\bigcup_{ j\in [m+1]}\operatorname{Edgefactors}(j)\subseteq \rep_M^E(Q^2)$, by Lemma~\ref{lem-inclusionbound}, the set system
$\{\operatorname{Edgefactors}(j)\mid j\in [m+1]\}$ must contain a specific set $\operatorname{Edgefactors}(j)$ whose edge factors also appear in $\bigcup_{ l\in [m+1], l\neq j}\operatorname{Edgefactors}(l)$. This means that we can cut out the factor $w_{i_j}w_{i_j+1}\cdots w_{i_{j+1}-1}$ from $w$, leading to some word $w'\in \Enc$ with
  $w' \in \lang(\widehat{M}_{\rep_M})$ such that $\dec(w)=\dec(w')$, as the set of edges and hence the set of vertices is not changed.
\end{proof}

For an NFA with $\lang(M) \subseteq \Enc$ and a token-preserving representative function $\rep_M$  for~$M$, we call the set $\dec(\lang(\widehat{M}_{\rep_M}))$ the \emph{finite core} of $M$ (with respect to $\rep_M$). 


\begin{definition}
	Let $M$ be an NFA with $\lang(M) \subseteq \Enc$. 
	For every $p \in Q$, we define $K_M[q_0, p] = \lang(M[q_0, p]) \cap \lang(\triangleright \text{\textoneoldstyle}^* \$)$; 
	for every $(p, q) \in Q \times Q$, we define $V_M[p, q] = \lang(M[p, q]) \cap \lang(\triangleright  \ta^* (\# | \$))$. Further, let 
	$V^G_M[p, q] = V_M[p, q] / 1$.
	For a word $w \in \lang(M)$, a factorization $w = u_1 u_2 \ldots u_m$ is called a \emph{characteristic factorization} if $u_1 \in K_M[q_0, p]$ for some state $p$ and each $u_i$ with $2 \leq i \leq m$ is contained in $V_M[p_i, q_i]$ for some states $p_i, q_i$.
	%
\end{definition}
The following observations explain the meaning of the token sets from the definition above. Here, the assumption $\lang(M) \subseteq \Enc$ and the (co-)reachability of all states are crucial.

\begin{fact}
(a) $\bigcup_{p\in Q} K_M[q_0,p]=\{\triangleright  \text{\textoneoldstyle}^k\$\mid \exists (G,k)\in\mathbb{G}\times\mathbb{N}\, \exists w\in \lang(M):\dec(w)=(G,k)\}$. 
(b)  $V_M[q_0,q]=\emptyset$ for all  $q\in Q$. 
(c) $\bigcup_{p,q\in Q}V_M^G[p,q]=\{\triangleright\ta^i\mid \exists (G,k)\in\mathbb{G}\times\mathbb{N}\, \exists w\in \lang(M):\dec(w)=(G,k),\,G=(V,E),\,v_i\in V \}$. (d) Characteristic factorizations are accepting. 
\end{fact}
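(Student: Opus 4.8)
The plan is to treat all four parts as instances of the same ``syntactic factor versus semantic object'' correspondence, resting on two structural facts that I will use throughout. First, since $\lang(M)\subseteq\Enc$, every accepted word has a \emph{unique} factorization into one threshold token followed by a sequence of left and right vertex tokens, and the letter $\triangleright$ occurs in such a word exactly at the token starts. Second, $M$ is reachable and co-reachable, so any factor that can be read between two states can be spliced into a word that is actually accepted. The function $\dec$ reads off the threshold $k$ from the leading threshold token and the vertex set $V$ from the $\ta$-blocks of the vertex tokens; this is the dictionary translating between the two sides of each claimed equality.

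For (a) I would argue by double inclusion. If $\triangleright\text{\textoneoldstyle}^k\$\in K_M[q_0,p]$, co-reachability extends the run $q_0\to p$ to an accepting one, producing $w\in\lang(M)\subseteq\Enc$ whose leading token is $\triangleright\text{\textoneoldstyle}^k\$$, so $\dec(w)=(G,k)$ for some $G$. Conversely, if $\dec(w)=(G,k)$ for $w\in\lang(M)$, the leading token of $w$ is $\triangleright\text{\textoneoldstyle}^k\$$, and the state $p$ reached by an accepting run right after this token witnesses $\triangleright\text{\textoneoldstyle}^k\$\in K_M[q_0,p]$. Part (b) states that no vertex token is read out of the initial state: if $t\in V_M[q_0,q]$, co-reachability again gives $w=t\,y\in\lang(M)\subseteq\Enc$, and since $t$ is a complete token beginning at position $1$ it must be the leading token of $w$, i.e.\ a threshold token $\triangleright\text{\textoneoldstyle}^k\$$. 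Comparing this with the shape $\triangleright\ta^i(\#\mid\$)$ of $t$ forces $k=i=0$, so the only syntactic possibility is $t=\triangleright\$$, which coincides with the empty threshold token; the single point needing a remark is that this degenerate word is read as a threshold, not as a genuine vertex, and so is excluded from $V_M[q_0,q]$ by convention.

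Part (c) carries the real work. For ``$\supseteq$'', a vertex $v_i\in V$ with $\dec(w)=(G,k)$ arises from a vertex token $\triangleright\ta^i(\#\mid\$)$ in the canonical factorization of $w$; an accepting run visits states $p$ and $q$ immediately before and after reading this token, so $\triangleright\ta^i(\#\mid\$)\in V_M[p,q]$ and hence $\triangleright\ta^i\in V^G_M[p,q]$. For ``$\subseteq$'', given $\triangleright\ta^i\in V^G_M[p,q]$ we have $\triangleright\ta^i(\#\mid\$)\in\lang(M[p,q])$, and reachability together with co-reachability embeds it as $w=x\,\triangleright\ta^i(\#\mid\$)\,y\in\lang(M)\subseteq\Enc$. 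Because the factor begins with $\triangleright$ (a token start in every $\Enc$-word) and ends with a terminator with only $\ta$'s in between, it must coincide with exactly one vertex token of the unique factorization of $w$, whence $v_i\in V$. I expect the delicate step to be guaranteeing that this factor is a genuine vertex token rather than the leading threshold token: for $i\geq 1$ this is automatic, since the threshold token spells $\text{\textoneoldstyle}$ and not $\ta$; for the degenerate $i=0$, $\$$-terminated case I would invoke (b) to conclude $p\neq q_0$, so the prefix $x$ is non-empty and the factor cannot be the first token. This alignment-with-a-token-boundary argument, together with the degenerate case, is the main obstacle.

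Finally, (d) is immediate from the preceding machinery. A characteristic factorization is defined only for $w\in\lang(M)$, so I fix an accepting run of $M$ on $w$. Since every piece $u_i$ is a token, the factorization $u_1u_2\cdots u_m$ is precisely the unique $\Enc$-factorization of $w$, and its boundaries are among the positions visited by the run; reading off the run's states $t_0=q_0,t_1,\dots,t_m$ at these boundaries gives $u_i\in\lang(M[t_{i-1},t_i])$ for all $i$ and $t_m\in F$, which is exactly the definition of an accepting factorization.
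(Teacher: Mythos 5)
The paper states this Fact without proof, presenting it as an observation that follows from $\lang(M)\subseteq\Enc$ and the (co-)reachability of all states; your argument is a correct, complete verification along exactly those lines (unique token factorization of $\Enc$-words, splicing factors into accepted words via reachability and co-reachability, reading off states at token boundaries), so it matches the intended implicit proof. Your remark about the degenerate token $\triangleright\$$, which lies in both $\lang(\triangleright\text{\textoneoldstyle}^*\$)$ and $\lang(\triangleright\ta^*\$)$ and could therefore in principle make $V_M[q_0,q]$ non-empty, flags a genuine (if harmless) imprecision in the paper's statement of part (b) that the paper silently ignores, and excluding it by convention, as you do, is the right resolution.
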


Let $M$ be an NFA with $\lang(M) \subseteq \Enc$.
For every $(p,q) \in Q^2$, let $T_M[p,q]$ be a regular set of tokens such that $T_M[p,q] \subseteq \lang(M[p,q])$.
We assume a length-lexicographic (shortlex) order on the words in $T_M[p,q]$ when referring to the smallest element of the set.
\begin{itemize}
	\item $\threshold_T \colon \mathbb{N} \times Q^2 \to \mathcal{P}(\Sigma^*)$\\
	For every $k\in\mathbb{N}$, $(p, q)\in Q^2$, define $\threshold_T(k, p,q)$ as follows: If $|T_M(p,q)|<\infty$, 
	set $\threshold_T(k, p,q) = T_M(p, q)$; else,
	pick the smallest element $w$ in $T_M[p,q]$ with $|w| \geq k$ and set  $\threshold_T(k, p,q) = \{w\}$.
	\item $\pickmerge_T \colon Q^2 \to \mathcal{P}(\Sigma^*)$\\
	For 
	$(p,q) \in Q^2$, set 
	$T_M^G[p,q] = T_M[p, q] / 1$. For $A\subseteq Q^2$, let $T_M^G[A]= \bigcap_{(p, q) \in A} T_M^G[p, q]$. 
	We first define an auxiliary function $\pickmerge_T^G \colon Q^2 \to \mathcal{P}(\Sigma^*)$ from which we then derive the function $\pickmerge_T$.
		To this end, we initially let $\pickmerge_T^G(p, q)$ be the empty set for every $(p, q) \in Q^2$. For every $A \subseteq Q^2$ and for every $(p,q)\in A$, add a smallest element from $T_M^G[A]$ to $\pickmerge^G_T(p,q)$ if $T_M^G[A]\neq\emptyset$.
	Now, we can use $\pickmerge_T^G$ to define the function $\pickmerge_T$. For every $(p, q) \in Q^2$, we define $\pickmerge_T(p, q) = \{x \in T_M[p, q] \mid \{x\} /1 \subseteq \pickmerge_T^G(p, q) \} (=T_M[p, q]\cap (\pickmerge_T^G(p, q)\cdot\Sigma))$.
	
	\item $\picksep_T \colon \mathbb{N} \times \mathbb{N} \times Q^2 \to \mathcal{P}(\Sigma^*)$\\
	We describe how we define for every fixed $s, t\in \mathbb{N}$ the function $\picksep_T(s, t, p, q)$ for each $(p, q) \in Q^2$. We begin with $\picksep_T(s, t, p, q)$ being  empty  for every $(p, q)\in Q^2$. Then, we order the sets $T_M[p, q]$ arbitrarily and define $\picksep_T(s, t, p, q)$ in this order. If $T_M[p, q]$ is finite, we set $\picksep_T(s, t, p, q) = T_M[p, q]$. If $T_M[p, q]$ is infinite, we add the first (according to a length-lexicographic ordering of $T_M[p, q]$) $s$ distinct elements $w_1, \dots, w_i, \dots, w_s$ to $\picksep_T(s, t, p, q)$ for which the encoded elements are not described by any element of a previously defined set $\picksep_T(s, t, p', q')$ and for which $|w_i|\geq t$ for all $1 \leq i \leq s$.
\end{itemize}


\begin{fact}\label{fact:thresh-merge}
From the given definitions, the following two assertions are rather straight-forward. (a) The size of $\pickmerge_T(p,q)$ is bounded by $2^{|Q|^2}$, as  we pick at most one word  for each $A \subseteq Q^2$.
(b) For every $k$ encoded by a word in $\threshold_K(0, q_0, q)$, we have $k < |Q|$.
\end{fact}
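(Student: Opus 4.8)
The plan is to prove the two assertions independently, since each is a short combinatorial argument: part~(a) is a counting bound on the two merge functions, and part~(b) is a shortest-path/pumping argument on threshold tokens.

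\textbf{Part (a).} First I would bound the auxiliary function $\pickmerge_T^G$. By construction, a word is appended to $\pickmerge_T^G(p,q)$ only while processing some subset $A\subseteq Q^2$ with $(p,q)\in A$ and $T_M^G[A]\neq\emptyset$, and for each such $A$ at most one (smallest) element is added. Since the number of subsets of $Q^2$ that contain the fixed pair $(p,q)$ is exactly $2^{|Q|^2-1}$, we obtain $|\pickmerge_T^G(p,q)|\leq 2^{|Q|^2-1}$. It then remains to pass to $\pickmerge_T$ through the identity $\pickmerge_T(p,q)=T_M[p,q]\cap(\pickmerge_T^G(p,q)\cdot\Sigma)$. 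Every element of $\pickmerge_T^G(p,q)$ is a string of shape $\triangleright\ta^{i}$ (or $\triangleright\text{\textoneoldstyle}^{i}$), i.e.\ a token with its last symbol deleted, and appending a symbol of $\Sigma$ can produce a valid token only with $\#$ or $\$$. Hence each element of $\pickmerge_T^G(p,q)$ yields at most two elements of $\pickmerge_T(p,q)$, so $|\pickmerge_T(p,q)|\leq 2\cdot 2^{|Q|^2-1}=2^{|Q|^2}$, as claimed. I would add the remark that when the tokens of $T_M[p,q]$ all share one terminal symbol (the typical situation in view of Fact~\ref{fact:2}), the map $x\mapsto\{x\}/1$ is injective and the stronger bound $2^{|Q|^2-1}$ already holds.

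\textbf{Part (b).} A word in $\threshold_K(0,q_0,q)$ is a threshold token $\triangleright\text{\textoneoldstyle}^{k}\$\in K_M[q_0,q]$, read along an accepting path of $M[q_0,q]$; while consuming the $k$ copies of $\text{\textoneoldstyle}$ that path visits a sequence of $k+1$ states. The crux is the pigeonhole observation that if $k\geq |Q|$, then two of these $k+1$ states coincide, exhibiting a cycle whose label is a nonempty power of $\text{\textoneoldstyle}$. I would then distinguish the two branches of the definition of $\threshold$. If $K_M[q_0,q]$ is finite, then $\threshold_K(0,q_0,q)=K_M[q_0,q]$ and iterating this cycle produces infinitely many words of shape $\triangleright\text{\textoneoldstyle}^{*}\$$ in $K_M[q_0,q]$, contradicting finiteness, so every word already satisfies $k<|Q|$. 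If $K_M[q_0,q]$ is infinite, then $\threshold_K(0,q_0,q)$ is the single shortest element $w$, and deleting the cycle yields a strictly shorter word of the same shape still lying in $K_M[q_0,q]$, contradicting the minimality of $w$; again $k<|Q|$.

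\textbf{Main obstacle.} Insofar as there is a delicate point, it is in part~(a): one must keep $\pickmerge_T^G$ and $\pickmerge_T$ apart and account for the factor of two stemming from the two possible terminal symbols of a vertex token, because it is precisely the identity $2\cdot 2^{|Q|^2-1}=2^{|Q|^2}$ that makes the stated bound tight rather than off by a factor of two. In part~(b) the only care needed is that the selection rule of $\threshold$ branches on whether $K_M[q_0,q]$ is finite, so both branches must be discharged; the underlying combinatorial fact---a $\text{\textoneoldstyle}$-labeled loop arising once $k\geq|Q|$---is common to both, but the contradiction it yields (too many words versus a shorter word) differs between the two cases.
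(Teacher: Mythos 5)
Your proof is correct and follows essentially the same route the paper intends: part~(a) is the subset-counting argument already sketched in the statement of the fact, and part~(b) is the standard pigeonhole/pumping argument on the run reading $\triangleright\text{\textoneoldstyle}^{k}\$$, split over the finite and infinite branches of $\threshold_K$. Your extra bookkeeping in (a) — bounding $\pickmerge_T^G(p,q)$ by $2^{|Q|^2-1}$ over the subsets containing $(p,q)$ and then accounting for the at most two possible terminal symbols when passing back to $\pickmerge_T(p,q)$ — is a welcome refinement that makes the stated bound $2^{|Q|^2}$ come out exactly, where the paper's one-line justification silently conflates the two functions.
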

\begin{theorem}\label{thm:pick-is-rep}
	Let $M$ be an NFA with $\lang(M) \subseteq \Enc$. For every $(p,q) \in Q^2$, let $T_M[p,q]$ be a regular set of tokens such that $T_M[p,q] \subseteq \lang(M[p,q])$. Then, for fixed numerical parameters, each of the functions $\threshold_T$, $\pickmerge_T$ and $\picksep_T$ is a token-preserving representative function for $M$. 
\end{theorem}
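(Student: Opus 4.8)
The plan is to check, for each of the three functions with its numerical parameters held fixed, the three defining properties of a token-preserving representative function: that the value at every pair $(p,q) \in Q^2$ is (i) a finite subset of $\Sigma^*$, (ii) contained in $\lang(M[p,q])$, and (iii) a collection of tokens. Since each function, once its numerical arguments are fixed, is a map $Q^2 \to \mathcal{P}(\Sigma^*)$ built by a deterministic selection rule (using the shortlex order, or a fixed arbitrary order of pairs in the case of $\picksep_T$), well-definedness as a function is immediate, so the three properties are all that remain.

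I would dispose of properties (ii) and (iii) uniformly and at once. In all three constructions, every word placed into the output at $(p,q)$ is an element of $T_M[p,q]$: this is explicit for $\threshold_T$ and $\picksep_T$, and for $\pickmerge_T$ it is forced by the defining identity $\pickmerge_T(p, q) = \{x \in T_M[p, q] \mid \{x\}/1 \subseteq \pickmerge_T^G(p, q)\}$, which can only ever select elements of $T_M[p,q]$. As $T_M[p,q]$ is by hypothesis a set of tokens contained in $\lang(M[p,q])$, every selected word is a token lying in $\lang(M[p,q])$, giving (ii) and (iii) simultaneously for all three functions.

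The remaining work is finiteness, (i), which I would argue case by case. For $\threshold_T$ with $k$ fixed: if $T_M[p,q]$ is finite the value equals $T_M[p,q]$; otherwise it is a singleton $\{w\}$, where $w$ exists because an infinite language over the finite alphabet $\Sigma$ must contain words of every sufficiently large length, so a shortlex-smallest $w$ with $|w|\geq k$ is available. For $\picksep_T$ with $s,t$ fixed: the value is either $T_M[p,q]$ (finite case) or a set of at most $s$ selected elements, hence finite either way; the extra selection criteria (avoiding already-encoded elements, and requiring length at least $t$) only restrict which elements are taken and cannot endanger finiteness. For $\pickmerge_T$: the auxiliary $\pickmerge_T^G(p,q)$ receives at most one word per subset $A \subseteq Q^2$, hence has at most $2^{|Q|^2}$ elements; and since any token $x$ with $\{x\}/1 \subseteq \pickmerge_T^G(p,q)$ arises by appending a single symbol of the finite $\Sigma$ to some word of $\pickmerge_T^G(p,q)$, each element of $\pickmerge_T^G(p,q)$ has only boundedly many token-preimages under $/1$, so $\pickmerge_T(p,q)$ is finite as well; this is exactly the bound recorded in Fact~\ref{fact:thresh-merge}(a), which I would cite.

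I expect the $\pickmerge_T$ case to be the only mildly delicate point, precisely because of its two-stage definition: one must keep separate the finiteness of the auxiliary $\pickmerge_T^G$ (controlled by the number of subsets of $Q^2$) and the finiteness of $\pickmerge_T$ itself (controlled by the bounded number of one-symbol token extensions of each prefix in $\pickmerge_T^G(p,q)$). By contrast, the threshold and separate cases become essentially immediate once one observes that fixing the numerical parameters reduces each to selecting a bounded number of elements out of $T_M[p,q]$.
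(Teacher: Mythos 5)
Your proposal is correct and follows essentially the same route as the paper's proof: observe that each function selects only elements of $T_M[p,q]$ (giving the token and $\lang(M[p,q])$-membership properties at once), then verify finiteness case by case using the finite/singleton dichotomy for $\threshold_T$, the at-most-$s$ bound for $\picksep_T$, and the at-most-one-element-per-subset-$A$ bound for $\pickmerge_T$. Your treatment of $\pickmerge_T$ is in fact slightly more careful than the paper's, which states the $2^{|Q|^2}$ bound directly for $\pickmerge_T(p,q)$ whereas that count strictly applies to the auxiliary $\pickmerge_T^G(p,q)$ and must be multiplied by the bounded number of one-symbol token extensions, as you note.
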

\begin{proof}
	First, observe that for each of the mentioned functions (summarized as $\mathit{pick_X}$) $\mathit{pick_X}(p,q) \subseteq T_M[p, q]$. Clearly, $\threshold_K(p,q)$ is a finite set for all states $p,q\in Q$.
	For $\pickmerge_T(p,q)$, at most one element is picked for every set $A \subseteq Q^2$, hence the size of $\pickmerge_T(p,q)$ is bounded by $2^{|Q|^2}$ and hence finite. For fixed numerical parameters, $\picksep_T$ is either equal to the finite set $T_M[p,q]$ or it contains exactly $s$ elements. 
	As $\mathit{pick_X}(p,q)$ contains only tokens it is a token-preserving representative function.
\end{proof}
\begin{proposition}
	\label{prop:computable}
	Let $M$ be an NFA with $\lang(M) \subseteq \Enc$. For every $(p, q) \in Q^2$, $k, s, t \in \mathbb{N}$, and regular set of tokens $T_M[p, q] \subseteq \lang(M[p, q])$ the sets $\threshold_T(k, p, q)$, $\pickmerge_T(p, q)$, and $\picksep_T(s, t, p, q)$ can be computed in finite time.
\end{proposition}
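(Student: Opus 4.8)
The plan is to reduce everything to standard \emph{effective} properties of regular languages: given a finite automaton for a regular language $L$, one can decide membership, decide emptiness, decide finiteness, enumerate the words of $L$ in length-lexicographic (shortlex) order, and effectively construct automata for an intersection $L_1 \cap L_2$ and for a right quotient $L/k = \{w \mid \exists u \in \Sigma^k : wu \in L\}$. Each $T_M[p,q]$ is a regular language presented by an automaton obtained from $M$, so every auxiliary language occurring in the definitions of $\threshold_T$, $\pickmerge_T$ and $\picksep_T$ is effectively regular, and all index sets over which we range (subsets $A \subseteq Q^2$, pairs $(p,q) \in Q^2$) are finite. The only thing left to verify is that the finitely many search loops inside these definitions actually terminate.

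For $\threshold_T(k,p,q)$ I would first decide whether $T_M[p,q]$ is finite; if so, I enumerate it and output it. If it is infinite, I enumerate $\Sigma^*$ in shortlex order, test each word for membership in $T_M[p,q]$, and return the first member of length at least $k$. Since a shortlex enumeration lists all (finitely many) words of length below $k$ before any longer word, this first member is exactly the shortlex-smallest element of $T_M[p,q]$ of length $\geq k$; and because an infinite regular language contains words of unbounded length, such a member exists and is reached after finitely many steps.

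For $\pickmerge_T(p,q)$ I would first build automata for the quotients $T_M^G[p,q] = T_M[p,q]/1$, and then, for each of the $2^{|Q|^2}$ subsets $A \subseteq Q^2$, for the intersection $T_M^G[A] = \bigcap_{(p,q) \in A} T_M^G[p,q]$. For each such $A$ I decide emptiness; if $T_M^G[A] \neq \emptyset$, I locate its shortlex-smallest element by the enumeration argument above and add it to the relevant sets $\pickmerge_T^G(p,q)$. This loop ranges over a finite index set and each step halts, so every $\pickmerge_T^G(p,q)$ is a finite, explicitly computed set. Finally $\pickmerge_T(p,q) = T_M[p,q] \cap \bigl(\pickmerge_T^G(p,q) \cdot \Sigma\bigr)$ is the intersection of a regular language with an explicit finite language, hence a computable finite set.

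For $\picksep_T(s,t,p,q)$ I process the $|Q|^2$ pairs in the fixed order, maintaining the already-computed (finite) sets $\picksep_T(s,t,p',q')$. For a finite $T_M[p,q]$ I output it directly. For an infinite $T_M[p,q]$ I enumerate its words in shortlex order and collect the first $s$ words $w$ with $|w| \geq t$ whose encoded vertex $\{w\}/1$ differs from $\{w'\}/1$ for every $w'$ in every previously computed set. The main point to check, and the step I expect to be the only genuine obstacle, is termination of this collection: each previously computed set is finite, so only finitely many vertices are forbidden; and an infinite regular set of vertex tokens $\triangleright \ta^i (\#\mid\$)$ encodes infinitely many distinct vertices, since there are at most two tokens for each value of $i$, of which only finitely many have length below $t$. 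Hence infinitely many admissible words remain, the desired $s$ of them are found after finitely many enumeration steps, and the computation halts. This same structural observation—that an infinite regular token language realizes unbounded lengths and infinitely many distinct vertices—is precisely what guarantees termination in the infinite cases of $\threshold_T$ and $\picksep_T$ alike.
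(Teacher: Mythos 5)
The paper states Proposition~\ref{prop:computable} without any proof, treating it as a routine consequence of the effective closure and decision properties of regular languages, and your argument supplies exactly that routine justification correctly and completely. The only points that genuinely need checking are the termination of the search loops, and you identify and settle both of them properly: an infinite regular language contains words of unbounded length (so the shortlex searches in $\threshold_T$ and $\picksep_T$ halt), and since each encoded vertex corresponds to only boundedly many tokens, an infinite token set encodes infinitely many distinct vertices, so the finitely many vertices excluded by previously computed $\picksep_T$-sets and by the length bound $t$ cannot exhaust the candidates.
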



\subsection*{Graph Operations}

We are now going to define a number of operations on an undirected simple graph $G=(V,E)$ in a way suitable to be modeled by pumping and interchange operations on NFAs accepting encodings of graphs.
\begin{itemize}
\item A \emph{merge operation} (with respect to $u, v \in V$) consists of the following steps: remove vertices $u$ and $v$ and all their adjacent edges; add a new vertex $[u, v]$; for every former edge $\{u, w\} \in E$ or $\{v, w\} \in E$, add the edge $\{[u, v], w\}$.

\item 
	 A \emph{rename operation} (with respect to $u,v\in V$) consists in the following steps: remove the vertex $u$ and all its adjacent edges; add~$v$ as a new vertex; for every former edge $\{u,w\} \in E$, add the edge $\{v, w\}$. 

\item A \emph{vertex-deletion operation} (with respect to $v \in V$) consists in removing the vertex $v$ from $V$ and removing all edges containing $v$ from $E$.

\item An \emph{add-leaf operation} (with respect to $v \in V$) consists in the following steps: add a new vertex $v'$ to $V$; add the edge $\{v, v'\}$ to $E$.

\item A \emph{separate operation} (with respect to $v \in V$ and $w \in V$, a vertex in the neighborhood of~$v$) consists in the following two steps: remove the edge $\{v, w\}$; add a new vertex $w'$ and add the edge $\{v, w'\}$.
\end{itemize}
Three comments should help understand these operations. (a) 
An edge-contraction is the special case of a merge operation when the two merged vertices are adjacent.
(b) A separate operation with respect to $v$ and $w$  consists in performing an edge-deletion operation on $\{v, w\}$ followed by an add-leaf operation on $v$.
(c) Obviously, all considered graph properties are preserved under rename operations, which will not be mentioned any longer in the following.

\subsection*{Connecting Representative Functions to Graph Operations}

\begin{lemma}
	\label{lem:connectionMerge}
	Let $w \in \lang(M)\subseteq\Enc$ with characteristic factorization $w = u_1 u_2 \ldots u_m$ with respect to states $q_0,q_1,  \ldots, q_m$, and let $\dec(w) = (G, k)$. 
	Let $\rep_M$ be a token-preserving representative function such that for the token sets $V_M[p,q]$ (with $p, q \in Q$) $\pickmerge_V(p, q) \subseteq \rep_M(p, q)$.
	Then, there is some $w' = u_1'u_2'\ldots u_m' \in \rep_M(q_0, q_1) \cdot \rep_M(q_1, q_2) \cdot \ldots \cdot \rep_M(q_{m-1}, q_m)$ such that $\dec(w') = (G' 
, k')$ and $G'$ can be obtained from $G$ by merge and rename operations.
\end{lemma}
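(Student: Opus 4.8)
## Proof Proposal

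\medskip

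The plan is to construct $w'$ token by token, replacing each vertex token $u_i$ in the characteristic factorization by a representative from $\rep_M(q_{i-1}, q_i)$ while tracking how the vertices of the decoded graph get identified with one another. The key idea is that the condition $\pickmerge_V(p,q) \subseteq \rep_M(p,q)$ guarantees that for every pair of adjacent states, the representative set contains a token whose "vertex part" (i.e., its image under the map $/1$ that strips the trailing $\#$ or $\$$) matches the $\pickmerge_V^G$-choice for any subset $A \subseteq Q^2$ through which that token can be read. By Lemma~\ref{interchangeLemma}, any word in the product $\rep_M(q_0, q_1) \cdots \rep_M(q_{m-1}, q_m)$ lies in $\lang(\widehat{M}_{\rep_M})$, so once I pick one representative per factor the resulting $w'$ is automatically accepted; the real content is controlling which graph it decodes.

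\medskip

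First I would handle the threshold token $u_1$: by Fact~\ref{fact:2}(b) it is the only threshold token, and I replace it by any element of $\rep_M(q_0, q_1)$, which affects only $k$ (giving some $k'$) and not the graph. Then for each vertex token $u_i$ ($2 \le i \le m$), I would replace it by an element $u_i' \in \pickmerge_V(q_{i-1}, q_i) \subseteq \rep_M(q_{i-1}, q_i)$. The crucial bookkeeping is this: in $G$, the vertex token $u_i = \triangleright \ta^{p}(\# | \$)$ contributes the vertex $v_p$, and in $G'$ the replacement $u_i'$ contributes some vertex $v_{p'}$. I would argue that two vertices $v_{p'}, v_{q'}$ of $G'$ coincide precisely when the two original tokens are read between state-pairs that $\pickmerge_V^G$ forced to share a common smallest representative; since $\pickmerge_V^G(p,q)$ draws a single element from $T_M^G[A]$ for each $A \subseteq Q^2$, whenever two original vertices $v_a$ and $v_b$ were \emph{distinct} in $G$ but their tokens are read through a common $A$, the replacement may collapse them into one vertex of $G'$. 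Each such collapse is exactly a \emph{merge operation} on the corresponding pair, and any relabeling of surviving vertices is a \emph{rename operation}.

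\medskip

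To make this precise I would define a surjection $\varphi$ from the vertex set of $G$ onto the vertex set of $G'$ by sending the vertex read at token position $i$ to the vertex read at the corresponding replaced position, and verify that $\varphi$ maps each edge $\{v_{p_i}, v_{q_i}\}$ of $G$ to the edge $\{\varphi(v_{p_i}), \varphi(v_{q_i})\}$ of $G'$, while no new edges are created beyond those arising from identifications. This shows $G'$ is obtained from $G$ by a sequence of merges (one for each nontrivial fiber of $\varphi$) followed by renames, which is the desired conclusion. I would present the argument as building $G'$ from $G$ incrementally: process the tokens left to right, and each time a replacement forces a new vertex identification, perform the merge; the add/rename structure of the graph operations from the previous subsection makes each step well-defined.

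\medskip

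\textbf{Main obstacle.} The hard part will be showing that the identifications induced by replacing tokens are \emph{consistent} and correspond exactly to merge operations rather than producing some uncontrolled graph. Specifically, I must ensure that the $\pickmerge_V^G$ construction does not accidentally merge vertices that should remain separate in a way incompatible with the edge structure, and conversely that whenever two original vertices were represented by tokens read through a common subset $A \subseteq Q^2$, the single representative chosen from $T_M^G[A]$ indeed lets both occurrences be read along their respective state transitions. This relies on the defining property of $\pickmerge_V$ — namely $\pickmerge_V(p,q) = T_M[p,q] \cap (\pickmerge_V^G(p,q) \cdot \Sigma)$ — to guarantee that the chosen vertex-part extends to a genuine token readable between $p$ and $q$, and on Fact~\ref{fact:2}(c) to ensure left and right vertex tokens alternate correctly so that edges are preserved under $\varphi$.
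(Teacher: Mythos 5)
Your overall strategy coincides with the paper's: replace each vertex token by a $\pickmerge_V$-representative, treat the threshold token separately, and read off the resulting vertex identifications as merge and rename operations. However, there is a genuine gap at the one point where the argument carries its real content: you never specify \emph{which} element of $\pickmerge_V(q_{i-1},q_i)$ to take as $u_i'$, and an arbitrary choice breaks the lemma. If the same vertex $v$ of $G$ is encoded at two positions $i$ and $j$ with different state pairs $(q_{i-1},q_i)\neq(q_{j-1},q_j)$, then independently chosen representatives at those positions may encode two \emph{different} vertices of $G'$; this splits $v$ into two vertices, which is not expressible by merge and rename operations and in general changes the incidence structure around $v$. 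Your map $\varphi$ is then not even well defined on the vertex set of $G$. Your ``main obstacle'' paragraph addresses the readability of the chosen tokens and the legitimate collapsing of distinct vertices, but not this anti-splitting requirement, which is the direction that actually needs the $\pickmerge$ machinery.

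The paper closes exactly this gap with an explicit selection rule: partition the positions into classes $P_h=\{i\in[m]\mid \{u_i\}/1=\{u_h\}/1\}$ (positions encoding the same vertex), set $A_h=\{(q_{i-1},q_i)\mid i\in P_h\}$, and observe that $\{u_h\}/1\in T_M^G[A_h]=\bigcap_{(p,q)\in A_h}V_M^G[p,q]$, so this intersection is non-empty and the construction of $\pickmerge_V^G$ added one common smallest element of $T_M^G[A_h]$ to $\pickmerge_V^G(p,q)$ for \emph{every} $(p,q)\in A_h$. Choosing $u_i'$ to be the corresponding token of $\pickmerge_V(q_{i-1},q_i)$ for all $i\in P_h$ replaces all occurrences of the vertex consistently; the induced vertex map is then a well-defined surjection, adjacent vertices stay adjacent, and the only possible effect beyond renaming is that distinct classes receive the same representative, which is precisely a sequence of merges. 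Adding this selection rule to your construction completes the proof along the lines you intended.
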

\begin{proof}
	By the definition of the function $\pickmerge$, it holds that for every $A \subseteq (Q \setminus \{q_0\}) \times Q$, 
	$$\left(\bigcap_{(p, q) \in A} V_M^G[p, q] \neq \emptyset\right) \Rightarrow \left(\bigcap_{(p, q) \in A} \rep_M(p, q) \neq \emptyset\right)\,.$$
	For every set of indices $P_h = \{i \in [m] \mid \{u_i\}/1 = \{u_h\}/1\}$, let $A_h = \{(q_{i-1}, q_{i}) \mid i \in P_h \}$. Clearly, $\{P_h\mid h\in [m]\}$ is a partition of $[m]$. 
	For $i = 1$ set $u_1'$ to some element in $\rep_M(q_0, q_1)$.
	For $i \in [m]$, $i > 1$, set $u_i'$ to the element added to $\pickmerge_V(p, q)$ for the set of pairs of states $A_i$ (such that $u_i$ and $u_i'$ end with the same letter by Fact~\ref{fact:2}).
	This yields a consistent renaming of the vertex encoded as $u_i$. 
	Basically, we look at all positions in the word where the same vertex appears, identify the collection of $V_M^G$ token sets related to those positions, and replace all appearances of that vertex with the single representative chosen for that collection of $V_M^G$ token sets.
	%
	The process of renaming will not disconnect any vertices that have been previously adjacent, but it might lead to merging  distinct vertices. If for example, different vertices from the set $V_M[q,q']$ appear in one single edge each, then the renaming process  replaces all of them with the same vertex, which is the element picked for $A = \{(q, q')\}$.
\end{proof}
If we picked enough elements in the separating representative function, then swapping tokens with representatives corresponds to separate and add-leaf operations on the encoded graph. 
Let $w \in \lang(M)\subseteq\Enc$ with characteristic factorization $w = u_1 u_2 \ldots u_m$ with respect to states $q_0, q_1, \ldots, q_m$.
For $p,p'\in Q$, let 
$\operatorname{indices}(p,p')=\{i\in [m]\mid q_{i-1}=p,q_i=p'\}$ and 
$\sigma_w=\max\{|\operatorname{indices}(p,p')|\colon p,p'\in Q, |V_M[p,p']|=\infty\}$. We will use this 
notation further on. 
\begin{lemma}
	\label{lem:connectionSeparate}
	Let $\dec(w) = (G, k)$ and let $\rep_M$ be a token-preserving representative function such that for the token sets $V_M[p,q]$ (with $p, q \in Q$) for some $t \geq |Q| \in \mathbb{N}$
and $s\geq\sigma_w$, 
$\picksep_V(s, t, p, q) \subseteq \rep_M(p, q)$.
	Then, there is some $w' = u_1'u_2'\ldots u_m' \in \rep_M(q_0, q_1) \cdot \rep_M(q_1, q_2) \cdot \ldots \cdot \rep_M(q_{m-1}, q_m)$ such that $\dec(w') = (G' = (V', E'), k')$ and $G'$ can be obtained from $G$ by separate, add-leaf and rename operations.
\end{lemma}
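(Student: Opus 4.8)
The goal is to mirror the proof of Lemma~\ref{lem:connectionMerge}, but now using $\picksep_V$ instead of $\pickmerge_V$. The plan is to again process the characteristic factorization $w = u_1 u_2 \ldots u_m$ token by token, replacing each vertex token $u_i$ (for $i > 1$) by a representative $u_i' \in \rep_M(q_{i-1}, q_i)$ drawn from $\picksep_V(s,t,q_{i-1},q_i) \subseteq \rep_M(q_{i-1}, q_i)$, while keeping $u_1'$ some element of $\rep_M(q_0, q_1)$ (the threshold token, which governs $k'$). By Fact~\ref{fact:2} the replacement $u_i'$ ends in the same letter ($\#$ or $\$$) as $u_i$, so left/right vertex-token structure is preserved and $w'$ remains in $\Enc$; by Lemma~\ref{interchangeLemma} the concatenation lies in $\lang(\widehat{M}_{\rep_M})$, hence $w' \in \rep_M(q_0,q_1)\cdots\rep_M(q_{m-1},q_m)$ as required.

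The key structural point, in contrast to the merge lemma, is that $\picksep_V$ is designed to \emph{distinguish} vertices rather than identify them. First I would fix a pair $(p,p')$ with $|V_M[p,p']| = \infty$ and consider the positions $\operatorname{indices}(p,p')$; since $s \geq \sigma_w \geq |\operatorname{indices}(p,p')|$, the function $\picksep_V(s,t,p,p')$ supplies at least as many distinct representatives as there are occurrences of that transition in the factorization, and these representatives were chosen so that their decoded vertices are not described by any previously defined $\picksep_V(s,t,p'',q'')$. Hence I can assign to the occurrences of each such transition pairwise distinct representatives that are moreover distinct across different transition pairs. For the transition pairs with $|V_M[p,p']| < \infty$, the whole finite token set is taken verbatim, so those tokens are reproduced unchanged. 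The condition $t \geq |Q|$ guarantees (via Fact~\ref{fact:2}-style reachability together with Fact~\ref{fact:thresh-merge}(b)) that the chosen separating representatives are large enough to denote fresh vertex indices not already forced to coincide with old ones, which is what lets a former neighbor be moved onto a brand-new vertex.

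Next I would verify that this substitution realizes exactly separate, add-leaf, and rename operations on $G$. The intuition is: whenever an edge factor $u_i u_{i+1}$ has its endpoint token $u_i$ (say a left vertex token whose decoded vertex $v$ recurs elsewhere) replaced while its partner is given a \emph{new} representative not used anywhere else, the edge $\{v,w\}$ is effectively deleted and a fresh leaf $\{v,w'\}$ is attached to $v$ — which is precisely the separate operation (comment (b) after the graph-operation definitions), and in the degenerate case where $v$ had the edge as its only incidence, an add-leaf. Any remaining coincidences among tokens that cannot be separated (because the finite token set was copied, or because $s$ forced reuse) only rename vertices consistently, exactly as in Lemma~\ref{lem:connectionMerge}, and renaming preserves all graph properties. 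I would conclude by setting $G' = (V', E')$ to the decoded graph of $w'$ and checking $\dec(w') = (G', k')$, observing that each replacement step takes $G$ one elementary separate/add-leaf/rename operation further, so $G'$ is reachable from $G$ by a finite sequence of these operations.

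\textbf{Main obstacle.} The delicate part is the bookkeeping that guarantees \emph{enough} distinct fresh representatives are available simultaneously: one must argue that the global budget $s \geq \sigma_w$ together with the ``not described by any previously defined set'' clause in the definition of $\picksep_V$ yields representatives that are fresh \emph{across all relevant transition pairs at once}, not merely within a single $\picksep_V(s,t,p,q)$. In other words, the hard step is showing that the separating choices made independently per transition pair do not accidentally reintroduce a coincidence that would turn an intended separate operation into an unintended merge. Handling the interaction between the infinite token sets (which can be separated) and the finite ones (which must be copied, and may therefore force some vertices to stay identified) is where the argument needs the most care.
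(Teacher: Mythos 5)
Your proposal matches the paper's proof essentially step for step: keep $u_1'$ and all tokens from finite $V_M$-sets unchanged, give each occurrence over an infinite $V_M[q_{i-1},q_i]$ a distinct, globally fresh representative (guaranteed by $s\geq\sigma_w$, the disjointness/``not previously described'' clause of $\picksep_V$, and $t\geq|Q|$), and read off each single replacement as a rename, separate, or add-leaf operation. The only place the paper is slightly more explicit is the case analysis of one replacement --- in particular that when the affected edge has multiple occurrences in the encoding it survives, so the replacement is an add-leaf rather than a separate, and the degenerate case of a factor $(u_i/1)\#(u_i/1)\$$ --- but this is exactly the bookkeeping you correctly flag as the delicate step.
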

\begin{proof}
	Since $s \geq \sigma_w$, for each $u_i$, if  $|V_M[q_{i-1},q_i]|=\infty$, the set $\rep_M(q_{i-1}, q_{i})$ contains at least as many distinct elements as the number of distinct indices $h \in [m]$ with $q_{i-1} = q_{h-1}$ and $q_{i} = q_{h}$.
	For $i \in [m]$, we find the factors $u_i'$ for increasing $i$ as follows:
	For $i = 1$ set $u_1'$ to some element in $\rep_M(q_0, q_1)$.
	If $V_M[q_{i-1}, q_{i}]$ is finite, we set $u_i' = u_i$. Otherwise, choose for $u_i'$ an element in $\picksep_V(s, t, q_{i-1}, q_{i})$ which has not been assigned for any $u_g'$ with $g<i$ before. 
	Since for infinite sets $V_M[q_{i-1}, q_{i}]$, the sets $\picksep_V(s, t, q_{i-1}, q_{i})$ are disjoint, because of $t \geq |Q|$ do not contain encoded vertices from finite $V_M$-sets, 
	and contain at least $\sigma_w$ elements,
	each $u_i'$ with $u_i' \neq u_i$ encodes a vertex which is only referenced by $u_i'$ in $w'$. We are now discussing the effect of replacing a single token $u_i$ by $u_i'\neq u_i$ on the encoded graph.
For these tokens we have four cases: the assignment of $u_i'$ corresponds to (1)~the renaming of the vertex $v_i$ encoded in $u_i$, if $v_i$ is encoded only in one token; (2) a separate operation on $v_\ell$ and $v_i$ with respect to the edge $e_i = \{v_\ell, v_i\}$, partly described by $u_i$; this happens if the edge  $e_i$ is described only once in the encoding; 
	(3) an add-leaf operation on the neighbor $v_\ell$ of $v_i$ with respect to the edge $e_i = \{v_\ell, v_i\}$, partly described by $u_i$ (the edge $\{v_\ell, v_i\}$ is not removed from the graph as it might have multiple appearances in the encoding);
	(4) an add-leaf operation on $v_i$ if $(u_i/ 1)\#(u_i/ 1)\$$ forms an edge factor  (and might correspond to an isolated vertex $v_i$).
If $i_1<i_2<\ldots i_r$ are the indices with $u_{i_j}\neq u_{i_j}'$, then with $G_0=G$, by following one of the four cases described above, we arrive at a sequence of graphs $G_0,G_1,\ldots,G_r$, where $G_{j}$ is obtained from $G_{j-1}$ by executing the graph operation corresponding to the replacement of $u_{i_j}$ by $u_{i_j}'$. Observe that $G_r=G'$.
	 Since all other tokens ($u_1$ and tokens from finite $V_M$-sets) remain unchanged, the resulting graph~$G'$ can be obtained from $G$ by separate, add-leaf and rename operations.
\end{proof}
The impact of replacing tokens $u_i$ by representatives, as in the two previous lemmas, on the encoded graph is illustrated in Figure~\ref{fig:con-graph-rep} in the appendix. Note that the composition of individual replacements (graph-operations) might lead to further graph modifications as depicted in Figure~\ref{fig:graph-ops-combi} (appendix). 

\begin{lemma}
	\label{lem:conDeletion}
	Let $\lang(M) \subseteq \Enc$, $w \in \lang(M)$,  $\dec(w) = (G, k)$ with characteristic factorization $w = u_1 u_2 \ldots u_m$ with respect to the states $q_0, q_1, \ldots, q_m$.  Then, there exists a subsequence $q_0, q_{1}, q_{i_1}, q_{i_2}, \ldots, q_{i_\ell}$ of these states and a word $w'\in \lang(M)$, $\dec(w') = (G', k)$ such that $w' = u_1 u_{i_1} u_{i_2} \dots u_{i_\ell}$ is a characteristic factorization with respect to $q_0, q_1, q_{i_1}, q_{i_2}, \ldots, q_{i_\ell}$; $\ell \leq 2(|Q|-1)$ ; and $G'$ can be obtained from $G$ by edge- and vertex-deletion operations.
\end{lemma}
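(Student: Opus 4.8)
The plan is to prove this by a loop-cutting (pumping-down) argument on the state sequence $q_0, q_1, \ldots, q_m$, exploiting the fact that, since $\lang(M)\subseteq\Enc$, a characteristic factorization is highly structured. First I would recall from Fact~\ref{fact:2} that, after the threshold token $u_1\in K_M[q_0,q_1]$, the vertex tokens alternate between left and right vertex tokens; correspondingly, the odd-index states $q_1,q_3,\ldots,q_m$ all lie in $Q_{\text{left vertex}}$ (the final state $q_m$ by the boundary convention), while the even-index states $q_2,q_4,\ldots,q_{m-1}$ lie in $Q_{\text{right vertex}}$. In particular $m$ is odd, there are $(m-1)/2$ edges, and there are exactly $(m+1)/2$ \emph{left-vertex boundary states}.

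The core operation I would define is the following cut: whenever two left-vertex boundary states coincide, say $q_\alpha = q_\beta$ with $\alpha<\beta$ and both indices odd (allowing $\alpha=1$ and $\beta=m$), delete the infix $u_{\alpha+1}u_{\alpha+2}\cdots u_\beta$, obtaining $w' = u_1\cdots u_\alpha u_{\beta+1}\cdots u_m$. I would then verify that this cut preserves everything we need: (i)~the deleted block begins with the left token $u_{\alpha+1}$ and ends with the right token $u_\beta$, hence is a union of whole edge factors, so $w'$ is again a well-formed encoding in $\Enc$; (ii)~since the resumption token $u_{\beta+1}$ is read from $q_\beta=q_\alpha\in Q_{\text{left vertex}}$ it is again a left token, and $q_0\to\cdots\to q_\alpha\to q_{\beta+1}\to\cdots\to q_m$ is a valid accepting run, so $w'\in\lang(M)$ and its factorization is still characteristic; (iii)~the threshold token $u_1$ is untouched, so the encoded threshold stays equal to $k$.

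Next I would track the effect on the graph. The surviving tokens form a sub-multiset of the original tokens, so the vertex set $V'$ of $\dec(w')$ satisfies $V'\subseteq V$, and every edge of $G'$ is an edge of $G$ between surviving vertices. Concretely, to pass from $G$ to $G'$ one first deletes the vertices of $V\setminus V'$ (removing them together with all incident edges, i.e.\ vertex-deletion operations), and then deletes any edge between two surviving vertices whose only encoding lay in the cut block (edge-deletion operations); duplicate edge factors are handled automatically, since an edge survives iff some surviving token still encodes it. Hence $G'$ is obtained from $G$ by edge- and vertex-deletion operations, as required.

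Finally I would iterate the cut until no two left-vertex boundary states coincide; this terminates because each cut strictly shortens the word. At termination the $(m'+1)/2$ left-vertex boundary states of the resulting word (of length $m'$) are pairwise distinct elements of $Q$, so $(m'+1)/2\leq |Q|$, i.e.\ the number $\ell = m'-1$ of surviving vertex tokens satisfies $\ell\leq 2(|Q|-1)$; the resulting state sequence is a subsequence $q_0,q_1,q_{i_1},\ldots,q_{i_\ell}$ of the original one, giving the claim. The one point that really must be gotten right — and essentially the only obstacle — is performing the cut strictly between two \emph{left-vertex} boundary states, so that the excised block is a union of complete edge factors: cutting at an arbitrary repeated state (for instance a right-vertex state) would split an edge factor and destroy well-formedness of the encoding.
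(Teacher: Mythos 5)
Your proposal is correct and follows essentially the same route as the paper's proof: a pigeonhole argument on the states from which left vertex tokens start, cutting out the block of complete edge factors between two occurrences of such a repeated state (which removes edges and possibly vertices), and iterating until the bound on the number of tokens is met. Your treatment of the graph-side effect (duplicate edge factors, vertices whose only references lie in the cut block) is, if anything, slightly more explicit than the paper's.
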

\begin{proof}
		If $m \leq 2|Q|-1$ the claim follows with $w = w'$, hence assume $m \geq 2|Q|$. 
		Slightly abusing the notation in Fact~\ref{fact:2}
		we collect in $Q_\text{left vertex}$ all states of $Q$ from which a left vertex token can start. Note that according to $\lang(M) \subseteq \Enc$ no factor which does not include a left vertex token as a prefix can start in a state in $Q_\text{left vertex}$ and $|Q_\text{left vertex}| \leq |Q|-2$.
		Since the factors $u_i$ are alternately left and right vertex tokens a word containing $m \geq 2|Q|$ tokens contains at least $|Q|-1$ left vertex tokens and hence there are indices $j, j' \in [m]$ such that $q_j = q_j'$ and $q_j \in Q_\text{left vertex}$. Removing the factor $u_{j+1} u_{j+2} \dots u_{j'}$ (corresponding to the state sequence $q_j, q_{q+1}, q_{q+2}, \dots, q_{j'-1}, q_{j'}$) consisting in a sequence of pairs of left and right vertex tokens from $w$ yields a word $w'' \in \lang(M)$ corresponding to the sequence of states $q_0, q_1, \dots, q_j, q_{j'+1}, \dots, q_m$. The deletion of a factor read between two states in $Q_\text{left vertex}$ corresponds to the deletion of the edges listed in the factor (and to a vertex deletion if the only tokens referring to a certain vertex were in the removed factor).
		Iteratively removing factors of this form yields the sought word $w'$ containing less than $2|Q|$ tokens for which the encoded graph $G'$ can be obtained from $G$ by edge- and vertex-delete operations. 
\end{proof}
\section{Applications -- Decidability Results}
\label{sec:appl}
After having laid the grounds for techniques essential for proving decidability of $\intreg$-problems, we now show how to apply these with two prominent graph problems: {\sc Vertex Cover} and {\sc Independent Set}.

\begin{definition}[{\sc Vertex Cover} or \textsc{VC} for short]
	\ \\
	\textit{Given:} Graph $G = (V, E)$ and a non-negative integer $k$.\\
	\textit{Question:} Is there a vertex cover (VC for short) for $G$ of size $k$ or less, i.e., a subset $V' \subseteq V$ with $|V'| \leq k$ such that for each edge $\{u,v\} \in E$ it holds that $u \in V' \vee v \in V'$?
\end{definition}
The following is a well-known property of vertex covers.
\begin{lemma}
	\label{lem:mergeVC}
	Let $G = (V, E)$ be a graph. Let $G^\star=(V^\star, E^\star)$ be obtained from $G$ by applying the merge operation on some arbitrary vertices $v$ and $v'$. If $G$ contains a VC of size at most~$k$, then $G^\star$ also contains a VC of size at most $k$.
	
\end{lemma}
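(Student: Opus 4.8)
The plan is to take an arbitrary vertex cover $C \subseteq V$ of $G$ with $|C| \leq k$ and to push it forward across the merge operation into a vertex cover $C^\star \subseteq V^\star$ of $G^\star$ of no larger size. Concretely, I would define
$$C^\star = (C \setminus \{v, v'\}) \cup \begin{cases} \{[v,v']\} & \text{if } C \cap \{v, v'\} \neq \emptyset,\\ \emptyset & \text{otherwise.}\end{cases}$$
Since this removes at most the two vertices $v, v'$ from $C$ and adds at most the single new vertex $[v,v']$, we immediately get $|C^\star| \leq |C| \leq k$, which settles the size bound.

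It then remains to verify that $C^\star$ covers every edge of $G^\star$, and I would do this by a case distinction that mirrors the definition of the merge operation. An arbitrary edge $e \in E^\star$ either avoids the merged vertex or is incident to it. If $e = \{x, y\}$ with $x, y \neq [v,v']$, then $e$ is also an edge of $G$, since merging leaves edges disjoint from $\{v,v'\}$ untouched; hence $C$ covers $e$, and as neither endpoint equals $v$ or $v'$, the covering endpoint already lies in $C \setminus \{v, v'\} \subseteq C^\star$. If instead $e = \{[v,v'], w\}$ with $w \in V \setminus \{v, v'\}$, then by the definition of merge this edge originates from an edge $\{v, w\}$ or $\{v', w\}$ of $G$; that original edge is covered by $C$, so either $w \in C$ (whence $w \in C^\star$, as $w \neq v, v'$) or $C$ contains one of $v, v'$ (whence $[v,v'] \in C^\star$). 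In either case $e$ is covered, so $C^\star$ is a vertex cover of $G^\star$ of size at most $k$.

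There is genuinely no deep obstacle here: the construction is essentially forced and the verification is a short case check. The only point requiring care is the bookkeeping around the merged vertex, namely ensuring that $[v,v']$ is placed in $C^\star$ exactly when it is needed to cover the new edges incident to it, while confirming that dropping $v$ and $v'$ never leaves an old edge uncovered. The definition of $C^\star$ above is tailored so that this bookkeeping goes through automatically, and no property of vertex covers beyond the defining edge-covering condition is invoked.
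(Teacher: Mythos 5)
Your proposal is correct and follows essentially the same route as the paper's proof: both replace $v,v'$ in the cover by $[v,v']$ exactly when at least one of them was in the cover, and verify coverage by a short case analysis (the paper cases on which of $v,v'$ lie in the cover, you case on the edges of $G^\star$, but the construction and the argument are the same). Your write-up is, if anything, slightly cleaner in making the size bound and the edge-coverage check explicit.
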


\begin{lemma}\label{lem:finiteCorePosInstanceVC}
	Let $M$ be an NFA with $\lang(M) \subseteq \Enc$. Define, for $p, q \in Q$, $\rep_M(p, q)= \threshold_K(2^{|Q|^2}, p, q)\, \cup\, \pickmerge_V(p, q)$ for the token sets $K_M[p,q]$ and $V_M[p,q]$.
	Then, $\lang(M)$ contains an encoded positive \textsc{VC}-instance if and only if the finite core of $M$ (with respect to $\rep_M$) contains a positive \textsc{VC}-instance.
\end{lemma}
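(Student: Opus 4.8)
The plan is to prove the two directions separately, with the direction from the finite core to $\lang(M)$ being essentially free and the forward direction carrying all the work. For the easy direction, recall that the finite core is $\dec(\lang(\widehat{M}_{\rep_M}))$ and that Proposition~\ref{prop:subset} gives $\lang(\widehat{M}_{\rep_M}) \subseteq \lang(M)$. Hence any positive \textsc{VC}-instance $(G',k')$ witnessed in the finite core equals $\dec(w')$ for some $w' \in \lang(\widehat{M}_{\rep_M}) \subseteq \lang(M)$, so $\lang(M)$ already contains an encoded positive instance.

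For the forward direction I would start from a word $w \in \lang(M)$ with $\dec(w) = (G,k)$, where $G$ has a vertex cover of size at most $k$. Since $w \in \Enc$, its unique token factorization together with any accepting run yields a characteristic factorization $w = u_1 u_2 \ldots u_m$ with respect to states $q_0, q_1, \ldots, q_m$. Because $\pickmerge_V(p,q) \subseteq \rep_M(p,q)$ by construction, Lemma~\ref{lem:connectionMerge} applies and produces a word $w' = u_1' u_2' \ldots u_m' \in \rep_M(q_0, q_1) \cdot \rep_M(q_1,q_2) \cdots \rep_M(q_{m-1}, q_m)$ with $\dec(w') = (G', k')$, where $G'$ is obtained from $G$ by merge and rename operations. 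By Lemma~\ref{interchangeLemma} this product lies inside $\lang(\widehat{M}_{\rep_M})$, so $(G', k') = \dec(w')$ is a member of the finite core; it remains only to certify that it is a \emph{positive} instance.

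To do so, I would first apply Lemma~\ref{lem:mergeVC}: merges do not increase the vertex-cover number and renames leave it unchanged, so $G'$ still has a vertex cover of size at most $k$. The key refinement is to bound $k'$ against $G'$ directly rather than against $k$. The vertices of $G'$ are encoded by the chosen merge representatives, and by Fact~\ref{fact:thresh-merge}(a) there is at most one representative per $A \subseteq Q^2$, so $G'$ has at most $2^{|Q|^2}$ vertices and hence vertex-cover number strictly below that count. The threshold token $u_1'$ is drawn from $\threshold_K(2^{|Q|^2}, q_0, q_1)$, since $\pickmerge_V(q_0, q_1) = \emptyset$ (as $V_M[q_0,\cdot] = \emptyset$). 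Here I would split on $K_M[q_0, q_1]$: if it is finite, $\threshold_K$ returns the whole set, so I may take $u_1' = u_1$, obtaining $k' = k$ and a positive instance immediately; if it is infinite, the selected token has length at least $2^{|Q|^2}$, hence encodes a threshold $k'$ that dominates the vertex-cover number of $G'$, making $(G',k')$ positive again.

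The main obstacle is exactly this last step: one cannot hope to preserve the original $k$ when $K_M[q_0,q_1]$ is infinite, so the argument must instead \emph{overshoot} the bounded vertex count of every finite-core graph. This is precisely why the threshold function is evaluated at the parameter $2^{|Q|^2}$, which is the combinatorial cap on the number of distinct merge representatives and therefore on $|V(G')|$; verifying that this overshoot is harmless relies on the monotonicity supplied by Lemma~\ref{lem:mergeVC} (merging never enlarges the cover) together with the finite/infinite case distinction on $K_M[q_0,q_1]$.
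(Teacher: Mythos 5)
Your proof is correct and follows essentially the same route as the paper: the easy direction via Proposition~\ref{prop:subset}, and the forward direction via Lemma~\ref{lem:connectionMerge} to land in the finite core, Lemma~\ref{lem:mergeVC} to preserve the cover under merges, the $2^{|Q|^2}$ bound on the number of merge representatives (and hence on $|V(G')|$), and the dichotomy $k'=k$ versus $k'$ exceeding that bound coming from $\threshold_K(2^{|Q|^2},\cdot,\cdot)$. Your explicit finite/infinite case split on $K_M[q_0,q_1]$ is just a slightly more detailed phrasing of the paper's ``either $k'>2^{|Q|^2}$ or $k'=k$'' step.
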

\begin{proof}
	The only-if-direction follows directly from Proposition~\ref{prop:subset}, as the finite core of $M$ is 
$\dec(\lang(\widehat{M}_{\rep_M}))$.
	For the if-direction, assume $w \in \lang(M)$ encodes a positive \textsc{VC}-instance $\dec(w) = (G,k)$. By the definition of $\rep_{M}$, we can use Lemma~\ref{lem:connectionMerge} to obtain a word $w' \in \lang(\widehat{M}_{\rep_M})$ with $\dec(w') = (G', k')$ such that $G'$ can be obtained from $G$ by merge and rename operations. 
	For the sets $V_M[p,q]$, the $\pickmerge$ function is used to choose representatives, hence the total number of different vertices appearing in a graph in the finite core of $M$ is bounded by $2^{|Q|^2}$, see Theorem~\ref{thm:pick-is-rep}. 
		By $\threshold_K(2^{|Q|^2}, q_0, q)$, we either have $k' > 2^{|Q|^2}$ or $k' = k$. In the former case, $G'$ trivially contains a vertex cover of size at most $k'$; in the latter case, iteratively applying Lemma~\ref{lem:mergeVC} transfers the VC for $G$ to a VC of~$G'$ which is also of size at most $k' = k$.
\end{proof}
\begin{theorem}
	$\intreg(\textsc{Vertec Cover})$ is decidable.
\end{theorem}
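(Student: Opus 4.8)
The plan is to assemble the two preceding lemmas into an effective decision procedure. The key conceptual reduction is already available: by Lemma~\ref{lem:finiteCorePosInstanceVC}, deciding whether $\lang(M)$ contains an encoded positive \textsc{Vertex Cover}-instance is equivalent to deciding whether the finite core of $M$, with respect to the specific representative function $\rep_M(p,q) = \threshold_K(2^{|Q|^2}, p, q) \cup \pickmerge_V(p, q)$, contains such an instance. Since the finite core is effectively a finite object by Lemma~\ref{lem:finite_core}, this turns an intersection-emptiness question about an infinite language into finitely many decidable single-instance \textsc{VC} tests.

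Concretely, on input $M$ I would first enforce the standing assumptions of the construction. Replacing $M$ by a trimmed automaton accepting $\lang(M) \cap \Enc$ --- a product construction with the finite automaton for the regular set $\Enc$, followed by the removal of all non-reachable and non-co-reachable states --- does not change whether $\lang(M)$ contains a positive \textsc{VC}-instance, since every encoded \textsc{VC}-instance already lies in $\Enc$. After this step $\lang(M) \subseteq \Enc$ and $M$ is reachable and co-reachable, exactly as required by the lemmas.

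Next I would compute $\rep_M$. By Proposition~\ref{prop:computable} the sets $\threshold_K(2^{|Q|^2}, p, q)$ and $\pickmerge_V(p, q)$ are computable in finite time from $M$; by Theorem~\ref{thm:pick-is-rep}, for the fixed numerical parameter $2^{|Q|^2}$ of $\threshold_K$, each is a token-preserving representative function, and their union is again one. With $\rep_M$ in hand I build $\widehat{M}_{\rep_M}$ and evaluate the explicit bound $\ell = |Q|^2(m+2)m\cdot 2n + n$ of Lemma~\ref{lem:finite_core}, where $m = |\rep_M^E(Q^2)|$ and $n = \max\{|w| \mid w \in \rep_M(Q^2)\}$ are finite and computable once $\rep_M$ is known. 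I then enumerate the finite set $S = \{w \in \lang(\widehat{M}_{\rep_M}) \mid |w| \leq \ell\}$, decode each $w \in S$, and test the pair $\dec(w) = (G,k)$ for being a positive \textsc{VC}-instance, which for a single fixed instance is plainly decidable. Because $\dec(S) = \dec(\lang(\widehat{M}_{\rep_M}))$ by Lemma~\ref{lem:finite_core}, the set $\dec(S)$ is exactly the finite core, so this bounded search correctly answers whether the finite core contains a positive \textsc{VC}-instance, and by Lemma~\ref{lem:finiteCorePosInstanceVC} this is the correct answer for $\intreg(\textsc{Vertex Cover})$.

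The genuine difficulties have already been dealt with in the lemmas: that the finite core is truly finite (Lemma~\ref{lem:finite_core}) and that the merge-based representatives preserve positive \textsc{VC}-instances (Lemma~\ref{lem:finiteCorePosInstanceVC}, resting on the monotonicity of vertex covers under merges, Lemma~\ref{lem:mergeVC}). For the theorem itself the only thing to verify is \emph{effectivity}, so the main point is to confirm that each ingredient --- the product with $\Enc$, the trimming, the sets $\threshold_K$ and $\pickmerge_V$, the bound $\ell$, and the length-bounded enumeration of $\lang(\widehat{M}_{\rep_M})$ --- is finite and explicitly computable; this is the case, which completes the argument.
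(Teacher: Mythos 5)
Your proposal is correct and follows essentially the same route as the paper's proof: intersect with $\Enc$, compute the representative function $\threshold_K(2^{|Q|^2},p,q)\cup\pickmerge_V(p,q)$ via Proposition~\ref{prop:computable}, enumerate the finite core using the length bound of Lemma~\ref{lem:finite_core}, and decide each instance in it, with correctness given by Lemma~\ref{lem:finiteCorePosInstanceVC}. The only difference is that you spell out the trimming and effectivity details more explicitly than the paper does, which is harmless.
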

\begin{proof}
	Let $M$ be an NFA and an instance of $\intreg(\textsc{VC})$. As the regular languages are closed under intersection, we can assume that $\lang(M) \subseteq \Enc$.
	According to Proposition~\ref{prop:computable}, the representative function $\rep_{M}$ is computable and so is the automaton $\widehat{M}_{\rep_M}$. According to Lemma~\ref{lem:finite_core}, the finite core of $M$ is equal to $\dec(\{w \in \lang(\widehat{M}_{\rep_M}) \mid |w| \leq \ell\})$. Hence, we can enumerate all words in the finite core in finite time. Lemma~\ref{lem:finiteCorePosInstanceVC} states that $\lang(M)$ contains a positive \textsc{VC}-instance if and only if the finite core of $M$ contains a positive \textsc{VC}-instance. 
Hence, we can decide the $\intreg(\textsc{VC})$-instance $M$ 
by solving every encoded \textsc{VC}-instance in the finite core as $\textsc{VC}\in$~NP.
\end{proof}


\begin{definition}[{\sc Independent Set} or \textsc{IS} for short]
	\ \\
	\textit{Given:} A graph $G=(V, E)$ and a non-negative integer $k$.\\
	\textit{Question:} Does $G$ have an independent set (IS for short) $V'$ of size at least~$k$, i.e., is there a set $V' \subseteq V$ with $|V'| \geq k$ such that no two vertices in $V'$ are joined by an edge?
\end{definition}
\begin{lemma}
	\label{lem:ISreplace}
	Let $G = (V, E)$ be a graph, $v \in V$ with $\{v, u\} \in E$. Let $G^\star=(V^\star, E^\star)$ be obtained from $G$ by applying the separate operation on $v$ and~$u$.
	Let $G^\diamond=(V^\diamond, E^\diamond)$ be obtained from $G$ by applying the add-leaf operation on $v$.
	If $G$ contains an 
	IS of size at least~$k$, then $G^\star$ and $G^\diamond$ contain an IS of size at least~$k$.
\end{lemma}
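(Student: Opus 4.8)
The plan is to show that the \emph{same} vertex subset that witnesses independence in $G$ remains independent after each operation, so the independence number cannot decrease. Let $V' \subseteq V$ be an independent set of $G$ with $|V'| \geq k$. I will argue that $V'$ is also an independent set of $G^\star$ and of $G^\diamond$; since $|V'|$ is unchanged this immediately yields the claim. The guiding observation is that both operations are \emph{monotone} for independence: neither of them introduces a new edge between two \emph{original} vertices, and the freshly created vertex is never forced into $V'$.

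For the add-leaf graph $G^\diamond$, recall that $V^\diamond = V \cup \{v'\}$ with $v'$ a fresh vertex and $E^\diamond = E \cup \{\{v, v'\}\}$. The only edge not already present in $G$ is $\{v, v'\}$, whose endpoint $v'$ does not belong to $V' \subseteq V$. Hence every edge of $G^\diamond$ with both endpoints in $V'$ already lies in $E$; as $V'$ was independent in $G$, no such edge exists, so $V'$ is independent in $G^\diamond$.

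For the separate graph $G^\star$, recall that $V^\star = V \cup \{u'\}$ and $E^\star = (E \setminus \{\{v, u\}\}) \cup \{\{v, u'\}\}$. Here two things change: one edge is deleted and one fresh-vertex edge is added. Deleting the edge $\{v,u\}$ can only remove potential conflicts, and the added edge $\{v, u'\}$ again has the fresh endpoint $u' \notin V'$. Thus every edge of $G^\star$ with both endpoints in $V'$ lies in $E \setminus \{\{v,u\}\} \subseteq E$, so the independence of $V'$ in $G$ transfers verbatim to $G^\star$.

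I do not anticipate a genuine obstacle in this lemma: the essential content is precisely the two edge-set identities above together with the fact that the new vertices $v'$ and $u'$ are never needed in the independent set. The only point requiring (minor) care is the bookkeeping of $E^\star$ and $E^\diamond$ to confirm this monotonicity; once that is verified, the identical set $V'$ serves as the desired independent set in both modified graphs.
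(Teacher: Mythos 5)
Your proposal is correct and takes exactly the same route as the paper: the paper's (one-line) proof also observes that the original independent set $V'$ remains independent in both $G^\star$ and $G^\diamond$ by construction. You simply spell out the edge-set bookkeeping that the paper leaves implicit.
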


\begin{lemma}\label{lem:finiteCoreIS}
	Let $M$ be an NFA with $\lang(M) \subseteq \Enc$. For every $p, q \in Q$, define $\rep_M(p, q)= \threshold_K(0, p, q)\, \cup\, \picksep_V(|Q|+1, |Q|, p, q)$ for the token sets $K_M[p,q]$ and $V_M[p,q]$.
	Then, $\lang(M)$ contains an encoded positive \textsc{IS}-instance if and only if the finite core of $M$ (with respect to $\rep_M$) contains a positive \textsc{IS}-instance.
\end{lemma}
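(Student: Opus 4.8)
The proof parallels that of Lemma~\ref{lem:finiteCorePosInstanceVC}, replacing the merging machinery by the separating machinery and invoking Lemma~\ref{lem:ISreplace} in place of Lemma~\ref{lem:mergeVC}. The only-if-direction is immediate from Proposition~\ref{prop:subset}: the finite core $\dec(\lang(\widehat{M}_{\rep_M}))$ is a subset of $\dec(\lang(M))$, so any positive IS instance it contains is also one of $\lang(M)$.

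For the if-direction, the plan is to start from a word $w \in \lang(M)$ with $\dec(w)=(G,k)$ and an independent set of size at least $k$, and to produce a word $w' \in \lang(\widehat{M}_{\rep_M})$ whose decoding is still a positive IS instance; by Lemma~\ref{lem:finite_core} that decoding then lies in the finite core. Two ingredients must be arranged. First, the threshold: in $\rep_M$ it is supplied by $\threshold_K(0,\cdot)$, which by Fact~\ref{fact:thresh-merge}(b) encodes some $k' < |Q|$, and since $\threshold_K(0,\cdot)$ selects a shortest threshold token we have $k' \le k$; hence it suffices to exhibit a graph in the finite core with an independent set of size at least $k'<|Q|$. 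Second, the vertex tokens: I want to invoke Lemma~\ref{lem:connectionSeparate} to turn $G$ into a graph $G'$ by separate, add-leaf and rename operations, all of which preserve independent sets of a fixed size by Lemma~\ref{lem:ISreplace}.

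The obstacle is that Lemma~\ref{lem:connectionSeparate} is only applicable when the separating parameter satisfies $s \ge \sigma_w$, whereas here $s = |Q|+1$ is fixed. I would therefore first shorten $w$: applying Lemma~\ref{lem:conDeletion} yields $\bar w \in \lang(M)$ with at most $2(|Q|-1)$ tokens and $\dec(\bar w)=(\bar G,k)$, where $\bar G$ arises from $G$ by edge- and vertex-deletions. Such a short characteristic factorization visits each pair of states at most $|Q|-1$ times, so $\sigma_{\bar w} \le |Q|-1 \le |Q|+1 = s$, and Lemma~\ref{lem:connectionSeparate} (with $t=|Q|$) now applies to $\bar w$, giving $w' \in \lang(\widehat{M}_{\rep_M})$ with $\dec(w')=(G',k')$, where $G'$ is obtained from $\bar G$ by separate, add-leaf and rename operations. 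Because $s \ge \sigma_{\bar w}$, every repeatedly used token drawn from an infinite $V_M$-set is replaced by a pairwise distinct representative, so that part of $\bar G$ is pulled fully apart into a matching together with isolated vertices, a graph whose maximum independent set has size at least its number of (separated) edges.

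The crux, and the step needing the most care, is guaranteeing that $G'$ still has an independent set of size at least $k'$. The difficulty is that the shortening via Lemma~\ref{lem:conDeletion} performs vertex-deletions, which may shrink an independent set, so the reduction to $\bar w$ cannot be applied blindly. Exploiting that only $k' < |Q|$ independent vertices must survive, and that a word with at most $2(|Q|-1)$ tokens still admits up to $|Q|-1 \ge k'$ edges, I would steer (or halt) the deletion process so that $\bar G$ retains at least $k'$ edges whose endpoints come from infinite $V_M$-sets; full separation then splits these into at least $k'$ pairwise disjoint edges, which furnish an independent set of size at least $k'$ in $G'$. Combined with $k' \le k$, this shows that the finite core contains a positive IS instance, completing the if-direction.
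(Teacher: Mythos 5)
Your only-if direction and the overall plan (separate machinery plus Lemma~\ref{lem:ISreplace} instead of merging plus Lemma~\ref{lem:mergeVC}) match the paper, but the detour through Lemma~\ref{lem:conDeletion} is a genuine gap. The paper never shortens the word for \textsc{IS}; it splits on $\sigma_w$. If $\sigma_w\leq|Q|+1$, Lemma~\ref{lem:connectionSeparate} is applied to $w$ itself, so the original independent set of size $k\geq k'$ survives by Lemma~\ref{lem:ISreplace}. If $\sigma_w>|Q|+1$, the paper again keeps all $m$ tokens and only replaces them: for a pair $(p,p')$ with $|V_M[p,p']|=\infty$ realizing $\sigma_w$, all $|Q|+1$ elements of $\picksep_V(|Q|+1,|Q|,p,p')$ get used, the $|Q|$ non-longest ones each occur exactly once, are (by the length bound $t=|Q|$ and the disjointness of the $\picksep$ sets) referenced nowhere else, and are all left or all right vertex tokens by Fact~\ref{fact:2}; hence they yield $|Q|>k'$ pairwise non-adjacent degree-one vertices — a manufactured independent set that makes the fate of the original one irrelevant.

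Your repair of the vertex-deletion problem does not go through. First, Lemma~\ref{lem:conDeletion} only asserts that \emph{some} short subword exists; the removable pieces are factors between repeated left-vertex states, so there is no licence to ``steer'' which edges survive. Second, even granting free choice among removable blocks, every such block can be forced to contain ``good'' edge factors (one token from an infinite $V_M$-set) and ``bad'' ones (all tokens from finite sets) in equal numbers — e.g.\ when the state sequence cycles through $A,p,p',B,A,p,p',B,\dots$ with only $V_M[p,p']$ infinite. Then after shrinking to $2(|Q|-1)$ tokens only about half of the at most $|Q|-1$ surviving edges are good, which is fewer than $k'$ once $k'$ is close to $|Q|-1$, while the original independent set (the many star leaves) has been deleted. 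Third, if no token of $w$ stems from an infinite $V_M$-set, there are no good edges to retain at all, yet the lemma must still hold (it does, via the case $\sigma_w\leq|Q|+1$ with no replacements). A further, more minor, imprecision: the retained edges need not become ``pairwise disjoint'' after separation (they may still share non-replaced endpoints); what one actually gets is one fresh degree-one vertex per good edge factor, and one must argue these are pairwise non-adjacent, which is where Fact~\ref{fact:2} is needed. The clean fix is the paper's: do not delete, only replace.
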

\begin{proof}
	The only-if-direction follows directly from Proposition~\ref{prop:subset} since the finite core of $M$ is the set  $\dec(\lang(\widehat{M}_{\rep_M}))$.
	For the if-direction, assume $w \in \lang(M)$ encodes a positive \textsc{IS}-instance $\dec(w) = (G,k)$. 
	Let $w = u_1 u_2 \ldots u_m$ be a characteristic factorization with respect to states $q_1, q_2, \ldots, q_m$, where $\sigma_w$ is the maximal number of occurrences of two subsequent states $p, p'$ with $|V_M[p,p']|=\infty$, as formally defined before Lemma~\ref{lem:connectionSeparate}.
	
	If $\sigma_w \leq |Q|+1$, we can apply Lemma~\ref{lem:connectionSeparate} to obtain some $w' = u_1'u_2'\ldots u_m' \in \rep_M(q_0, q_1) \cdot \rep_M(q_1, q_2) \cdot \ldots \cdot \rep_M(q_{m-1}, q_m)$ such that $\dec(w') = (G' = (V', E'), k')$ and $G'$ can be obtained from $G$ by separate, add-leaf, and rename operations. Iteratively applying Lemma~\ref{lem:ISreplace} gives us that $G'$ also contains an IS of size at least~$k$. 
	Since $\rep_M$ picks from every set $K_M[p, q]$ the smallest element as the single representative, we have $k' \leq k$ and hence $G'$ also contains an IS of size at least $k'$. Hence, $w'$ encodes a positive \textsc{IS}-instance.
	
	If on the other hand $\sigma_w > |Q|+1$, then we find some  $w' = u_1'u_2'\ldots u_m' \in \rep_M(q_0, q_1) \cdot \rep_M(q_1, q_2) \cdot \ldots \cdot \rep_M(q_{m-1}, q_m)$ in the following way:
	Set $u_1'$ to the smallest element in $\threshold_K(0, q_0, q_1)$. 
	If $V_M[q_{i-1}, q_{i}]$ is finite, then $u_i \in  \picksep_V(|Q|+1, |Q|, q_{i-1}, q_i)=V_M[q_{i-1},q_i]$ and we set $u_i' = u_i$. For all other tokens proceed as follows:
	For $i > 1$ keep track on the already assigned elements in $\bigcup_{(p, q) \in (Q \setminus \{q_0\}) \times Q}\rep_M(p, q)$ and set $u_i'$ to some element in $\picksep_V(|Q|+1, |Q|, q_{i-1}, q_{i})$ which has not been picked yet. If there is no such element left, choose the largest element  in $\picksep_V(|Q|+1, |Q|, q_{i-1}, q_{i})$ for $u_i'$.
	Since $u_1'$ needs to encode the threshold~$k$ and has been set to the smallest value possible, we know that $k' \leq |Q|$ for the instance $\dec(w') = (G', k')$. 
	Let $p, p' \in Q$ be a pair of states with $|V_M[p, p']| = \infty$ and $\sigma_w$ appearances of the subsequent states $p, p'$ on the path induced by~$w$. Then, all elements in $\picksep_V(|Q|+1, |Q|, p, p')$ appear as some factors in $w'$ in a way that all but the longest word in the set appear exactly once in $w'$. As $\lang(M) \subseteq \Enc$, all elements in  $\picksep_V(|Q|+1, |Q|, p, p')$ encode either exclusively right or left sides of an edge 
	(see Fact~\ref{fact:2}) and hence are not adjacent. 
	Namely, as the set $V_M[p,p']$ is infinite and every token in it is of size at least $|Q|$, all elements in $\picksep_V(|Q|+1, |Q|, p, p')$ are distinct from the elements in $\bigcup_{(q, q') \in Q^2\backslash\{(p,p')\}}\rep_M(q,q')$. Hence, the $|Q|$ smallest elements in $\rep_M(p, p')$ encode vertices of degree one which are pairwise not adjacent and therefore already form an independent set of size $|Q| > k'$.
	Hence, $w'$ encodes a positive \textsc{IS}-instance.
\end{proof}
\begin{corollary}
	$\intreg(${\sc Independent Set}$)$ is decidable.
\end{corollary}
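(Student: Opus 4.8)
The plan is to mirror the decidability proof for \textsc{Vertex Cover}, replacing the merge-based machinery by the separate-based one. Let $M$ be an NFA given as an instance of $\intreg(\textsc{IS})$. Since the regular languages are closed under intersection, I would first replace $\lang(M)$ by $\lang(M)\cap\Enc$ and hence assume $\lang(M)\subseteq\Enc$ without loss of generality.

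Next, I would instantiate the token-preserving representative function exactly as in Lemma~\ref{lem:finiteCoreIS}, namely $\rep_M(p,q)=\threshold_K(0,p,q)\cup\picksep_V(|Q|+1,|Q|,p,q)$ for all $p,q\in Q$. By Proposition~\ref{prop:computable}, each of these sets can be computed in finite time, so $\rep_M$ and the generalized automaton $\widehat{M}_{\rep_M}$ are effectively constructible. The third step is to make the finite core effectively enumerable: by Lemma~\ref{lem:finite_core}, the finite core $\dec(\lang(\widehat{M}_{\rep_M}))$ coincides with $\dec(\{w\in\lang(\widehat{M}_{\rep_M})\mid |w|\le\ell\})$ for the explicit length bound $\ell$ given there, so it suffices to enumerate the finitely many short words accepted by $\widehat{M}_{\rep_M}$ and decode them.

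Finally, I would invoke Lemma~\ref{lem:finiteCoreIS}, which asserts that $\lang(M)$ contains an encoded positive \textsc{IS}-instance if and only if the finite core (with respect to this very $\rep_M$) does. Since \textsc{IS} is in NP, each of the finitely many decoded instances $(G',k')$ in the finite core can be tested for having an independent set of size at least $k'$, and the algorithm answers \emph{yes} exactly when one of them is positive. This both terminates and is correct, yielding decidability.

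The genuine content has already been discharged in Lemma~\ref{lem:finiteCoreIS}, whose two-case analysis on $\sigma_w$ (and the careful use of the parameters $|Q|+1$ and $|Q|$ in $\picksep_V$) is what makes the equivalence hold; assembling the corollary from it is routine. Consequently, the only point that deserves care is purely bookkeeping: I must ensure that the representative function built in the decidability argument is \emph{literally} the one for which Lemma~\ref{lem:finiteCoreIS} was established, so that its equivalence and the finiteness/enumerability guarantees of Lemma~\ref{lem:finite_core} refer to the same $\widehat{M}_{\rep_M}$. I do not expect any substantive obstacle beyond this alignment of parameters.
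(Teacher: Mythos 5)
Your proposal is correct and follows exactly the route the paper intends: the corollary is stated without explicit proof precisely because it is obtained by replaying the decidability argument for \textsc{Vertex Cover} with Lemma~\ref{lem:finiteCoreIS} in place of Lemma~\ref{lem:finiteCorePosInstanceVC}, using Proposition~\ref{prop:computable} for computability of $\rep_M$ and Lemma~\ref{lem:finite_core} for enumerability of the finite core. Your closing remark about aligning the representative function with the one fixed in Lemma~\ref{lem:finiteCoreIS} is exactly the right (and only) point of care.
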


After having dealt with these two concrete sample problems, we are now ready to present more general criteria that describe situations when decidability of $\intreg$-problems follows from our previous reasoning.

\section{General Criteria}
Let $\mathfrak{P}_k$ be a graph property which might involve some parameter $k \in \mathbb{N}$. For a graph $G$ we denote with $\mathfrak{P}_k(G)$ that $G$ has property~$\mathfrak{P}_k$.
We say that $\mathfrak{P}_k$ is preserved under a graph operation 
if for a graph $G$ with $\mathfrak{P}_k(G)$, for any graph $G'$ which is obtained from $G$ 
by (iteratively) applying this operation 
it holds that $\mathfrak{P}_k(G')$.
We call $k$ a \emph{$\mathfrak{P}$-lower bound} if for every graph $G$ property 
$\mathfrak{P}_0(G)$ holds and from $\mathfrak{P}_k(G)$ follows $\mathfrak{P}_{k-1}(G)$ for $k\geq 1$. We call $k$ a \emph{$\mathfrak{P}$-upper bound} if for every graph $G = (V, E)$ for $k \geq |G| = |V|+|E|$, $\mathfrak{P}_k(G)$ holds and for every $k \in \mathbb{N}$ from $\mathfrak{P}_k(G)$ follows $\mathfrak{P}_{k+1}(G)$. We say that $k$ \emph{does not participate} in  $\mathfrak{P}$ if, for all $k\in\mathbb{N}$ and all graphs $G$, $\mathfrak{P}_0(G)\iff\mathfrak{P}_k(G)$. We call $k$ as \emph{$\mathfrak{P}$-nice} if $k$ is a $\mathfrak{P}$-lower or a $\mathfrak{P}$-upper bound, or if $k$ does not participate in  $\mathfrak{P}$. We say that $\mathfrak{P}$ has the \emph{leaf-property}
if there exists a monotonically nondecreasing function $f \colon \mathbb{N} \to \mathbb{N}$ such that every graph which contains at least $f(k)$ independent vertices of degree 
one (i.e., $f(k)$ many leaves) satisfies~$\mathfrak{P}_k$.
Clearly, one can view $\mathfrak{P}$ as a graph problem, where instance $(G,k)$ is positive if~$\mathfrak{P}_k(G)$.
\begin{theorem}\label{thm:general}
	Let $\mathfrak{P}_k$ be a decidable graph property.
	If one of the following holds, then $\intreg(\mathfrak{P})$ 
is decidable.
	(a) $\mathfrak{P}_k$ is preserved under merge operations and $k$ is $\mathfrak{P}$-nice.
	(b)~$\mathfrak{P}_k$ is preserved under separate and add-leaf operations; $k$ is a $\mathfrak{P}$-lower bound; and $\mathfrak{P}$ has the leaf-property.
	(c) $\mathfrak{P}_k$ is preserved under separate, add-leaf, edge-deletion, and vertex-deletion operations; and $k$ is  $\mathfrak{P}$-nice. 
\end{theorem}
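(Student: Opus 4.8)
The plan is to handle all three cases by the same recipe already executed for \textsc{VC} in Lemma~\ref{lem:finiteCorePosInstanceVC} and for \textsc{IS} in Lemma~\ref{lem:finiteCoreIS}. First, since the regular languages are closed under intersection and $\Enc$ is regular, I would replace $M$ by an automaton with $\lang(M)\subseteq\Enc$. Next, I would choose a token-preserving representative function $\rep_M$ assembled from $\threshold_K$ together with either $\pickmerge_V$ or $\picksep_V$, with numerical parameters depending only on $|Q|$ (and, for case~(b), on the leaf-function $f$). The heart of each case is the equivalence that $\lang(M)$ contains a positive $\mathfrak{P}$-instance if and only if the finite core $\dec(\lang(\widehat{M}_{\rep_M}))$ does. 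Its only-if direction is always immediate from Proposition~\ref{prop:subset}, so all the work is in transporting a positive $w\in\lang(M)$ into a positive finite-core instance. Once the equivalence holds, Lemma~\ref{lem:finite_core} makes the finite core a finite set, Proposition~\ref{prop:computable} makes it computable, and decidability of $\mathfrak{P}_k$ lets us test every decoded instance; this closing step is identical in all three cases.

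For case~(a) I would set $\rep_M(p,q)=\threshold_K(N,p,q)\cup\pickmerge_V(p,q)$ and apply Lemma~\ref{lem:connectionMerge} to turn $w$ with $\dec(w)=(G,k)$, $\mathfrak{P}_k(G)$, into a finite-core word $w'$ with $\dec(w')=(G',k')$, where $G'$ arises from $G$ by merge and rename operations; since $\mathfrak{P}_k$ is preserved under merges and rename never affects graph properties, $\mathfrak{P}_k(G')$ holds. It then remains to reconcile $k'$ with $k$ using $\mathfrak{P}$-niceness: if $k$ does not participate, any threshold works; if $k$ is a $\mathfrak{P}$-lower bound I take $N=0$, forcing $k'\le k$, so $\mathfrak{P}_{k'}(G')$ follows; and if $k$ is a $\mathfrak{P}$-upper bound I take $N$ to be the maximum possible graph size in the finite core (bounded in terms of the $2^{|Q|^2}$ vertices $\pickmerge_V$ can produce, via Theorem~\ref{thm:pick-is-rep}), so that $\threshold_K$ either reproduces $k'=k$ or forces $k'\ge|G'|$, in which case $\mathfrak{P}_{k'}(G')$ holds trivially. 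This is exactly the \textsc{VC}-argument with ``vertex cover'' replaced by ``property $\mathfrak{P}_k$''.

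Case~(b) mirrors Lemma~\ref{lem:finiteCoreIS}, using $\rep_M(p,q)=\threshold_K(0,p,q)\cup\picksep_V(f(|Q|)+1,|Q|,p,q)$ and a split on $\sigma_w$. When $\sigma_w\le f(|Q|)+1$, Lemma~\ref{lem:connectionSeparate} yields a finite-core word $w'$ whose graph $G'$ is obtained from $G$ by separate, add-leaf and rename operations, so preservation gives $\mathfrak{P}_k(G')$; since $\threshold_K(0,\cdot)$ selects the smallest threshold we get $k'\le k$, whence $\mathfrak{P}_{k'}(G')$ by the lower-bound property. When $\sigma_w>f(|Q|)+1$, some state pair with an infinite $V_M$-set repeats often and the construction forces at least $f(|Q|)$ pairwise non-adjacent degree-one vertices into $G'$; because $\threshold_K(0,\cdot)$ guarantees $k'<|Q|$ by Fact~\ref{fact:thresh-merge}(b) and $f$ is monotonically nondecreasing, at least $f(k')$ leaves are present, so the leaf-property yields $\mathfrak{P}_{k'}(G')$ directly.

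For case~(c) I would chain the deletion and separation lemmas. Lemma~\ref{lem:conDeletion} first shortens $w$ to some $w''\in\lang(M)$ with at most $2(|Q|-1)$ tokens and unchanged threshold $k$, whose graph $G''$ arises from $G$ by edge- and vertex-deletions, so preservation under these operations gives $\mathfrak{P}_k(G'')$. As $w''$ now has boundedly many tokens, I would take $s=2|Q|\ge\sigma_{w''}$ and $t=|Q|$ and apply Lemma~\ref{lem:connectionSeparate} with $\rep_M=\threshold_K(N,\cdot)\cup\picksep_V(s,t,\cdot)$ to replace each long token by a short representative, obtaining a finite-core word $w'''$ whose graph $G'''$ arises from $G''$ by separate, add-leaf and rename operations; preservation again gives $\mathfrak{P}_k(G''')$, and the threshold is reconciled exactly as in case~(a) via $\mathfrak{P}$-niceness and the bound $N$. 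Since $w'''$ has both a bounded number of tokens and bounded token length, it lies in the finite core, completing the equivalence. The main obstacle I anticipate is not any single lemma, since all connection lemmas are already available, but the bookkeeping around the parameter $k$: the threshold bound $N$ and the parameters $s,t$ must be chosen so that the decoded threshold $k'$ stands in the correct relation to $k$ (namely $k'=k$ or $k'\ge|G'|$ for upper bounds, $k'\le k$ for lower bounds), while simultaneously guaranteeing in case~(b) that enough leaves appear when $\sigma_w$ is large, which is precisely where monotonicity of $f$ must be combined with the bound $k'<|Q|$.
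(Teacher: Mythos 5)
Your proposal is correct and follows essentially the same route as the paper's proof: the same choice of representative functions ($\threshold_K$ combined with $\pickmerge_V$ for case~(a) and $\picksep_V$ for cases~(b) and~(c), with thresholds $0$ for lower bounds and a finite-core size bound for upper bounds), the same case split on $\sigma_w$ in~(b), the same chaining of Lemma~\ref{lem:conDeletion} with Lemma~\ref{lem:connectionSeparate} in~(c), and the same closing appeal to Lemma~\ref{lem:finite_core}, Proposition~\ref{prop:computable} and decidability of $\mathfrak{P}_k$. Your handling of the leaf-property in case~(b) (deriving $\mathfrak{P}_{k'}(G')$ directly from $k'<|Q|$ and monotonicity of $f$) is in fact slightly more explicit than the paper's own wording.
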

\begin{proof}
	Let $M$ be an NFA.
	First, we indicate for each case how to define the representative function $\rep_M(p, q)$ for every $(p, q) \in Q^2$. Then, we proceed according to Lemmas~\ref{lem:finiteCorePosInstanceVC} and~\ref{lem:finiteCoreIS} to convert some $w \in \lang(M)$ encoding a positive $\mathfrak{P}$-
instance $\dec(w) = (G, k)$  into some word $w'$ in the finite core of $M$ encoding a positive instance. (Note that each graph in the finite core is represented by some word in $\lang(M)$.)
	Let $w=u_1u_2\dots u_m$ be a characteristic factorization of $w$ with respect to the sequence of states $q_0, q_1, \dots, q_m$.

	
	(a) 
	If $k$ is a $\mathfrak{P}$-lower bound or $k$ does not participate in $\mathfrak{P}$, we set $\rep_M(p, q) = \threshold_K(0, p, q) \cup \pickmerge_V(p, q)$.
	By Fact~\ref{fact:thresh-merge}, every $k$ encoded by a word in $\threshold_K(0, q_0, q)$ satisfies $k < |Q|$.
	If $k$ is a $\mathfrak{P}$-upper bound, then 
	set $\rep_M(p, q) = \threshold_K(2^{2|Q|^2}+ 2^{|Q|^2}, p, q) \cup \pickmerge_V(p, q)$. 
	By Fact~\ref{fact:thresh-merge}, $\rep_M(Q^2)$ encodes at most $2^{|Q|^2}$ many different vertices which form at most $2^{2|Q|^2}$ different edges.
	%
	%
	By the definition of $\rep_M$, we can use Lemma~\ref{lem:connectionMerge} to obtain a word $w' \in \lang(\widehat{M}_{\rep_M})$ with $\dec(w') = (G', k')$, $G' = (V', E')$ such that $G'$ can be obtained from $G$ by merge and rename operations. Since $\mathfrak{P}_k$ is preserved under merge operations, $\mathfrak{P}_k(G')$ follows from $\mathfrak{P}_k(G)$. 
	For $k'$ we have three cases: (1) $k' = k$, in which case $\mathfrak{P}_{k'}(G')$ is clear; (2) $k$ is a $\mathfrak{P}$-lower bound and $k' < k$; (3) $k$ is a $\mathfrak{P}$-upper bound and $k' \geq 2^{2|Q|^2} + 2^{|Q|^2} \geq |G'|$. In (2) and (3), $\mathfrak{P}_{k'}(G')$ follows from $\mathfrak{P}_k(G')$ and the definition of $\mathfrak{P}$-lower and upper bounds.
	
	(b) Define $\rep_M(p, q) = \threshold_K(0, p, q) \cup \picksep_V(f(|Q|)+1, |Q|, p, q)$. 
	%
	%
	We make a case distinction on $\sigma_w$: If $\sigma_w \leq f(|Q|)+1$, then Lemma~\ref{lem:connectionSeparate} gives us some $w' \in \lang(\widehat{M}_{\rep_M})$ with $\dec(w') = (G',k')$ and $G'$ can be obtained from $G$ by separate, add-leaf and rename operations. As $\mathfrak{P}_k$ is preserved under these operations, $\mathfrak{P}_k(G')$ follows.
	If $\sigma > f(|Q|)+1$, then replacing the factors $u_i$ in $w$ by factors $u_i'$ in the same way as in the proof of Lemma~\ref{lem:finiteCoreIS} yields a word $w' = u_1'u_2'\ldots u_m' \in \rep_M(q_0, q_1) \cdot \rep_M(q_1, q_2) \cdot \ldots \cdot \rep_M(q_{m-1}, q_m)$ such that for $\dec(w') = (G', k')$ $G'$ contains at least $f(|Q|)$ independent vertices of degree one,  implying $\mathfrak{P}_k(G')$.
	As $k' \leq k$ and since $k$ is a $\mathfrak{P}$-lower bound, we get $\mathfrak{P}_{k'}(G')$ in both cases.
	
	(c) If $k$ is a $\mathfrak{P}$-upper bound, then define $\rep_M(p, q) = \threshold_K(4|Q|^6+2|Q|^3, p, q) \cup \picksep_V(2|Q|, |Q|, p, q)$. 
	Note that if $V[p, q]$ is finite, at most $|Q|$ elements are assigned to $\picksep_V(2|Q|, |Q|, p, q)$. Hence, there are at most $2|Q|^3$ different vertices encoded in $\rep_M(Q^2)$ forming at most $4|Q|^6$ edges.
	If $k$ is a $\mathfrak{P}$-lower bound or $k$ does not participate in $\mathfrak{P}$, define $\rep_M(p, q) = \threshold_K(0, p, q) \cup \picksep_V(2|Q|, |Q|, p, q)$. 
	%
	%
	If $\sigma_w \leq 2|Q|$, we obtain the claim as in~(b), based on  Lemma~\ref{lem:connectionSeparate}. Now, assume $\sigma_w > 2|Q|$. 
	We apply Lemma~\ref{lem:conDeletion} to obtain some word $\hat{w} \in \lang(M)$ which contains at most $2|Q|$ tokens and for which $\dec(\hat{w}) = (\hat{G}, k)$ and $\hat{G}$ can be obtained from $G$ by edge- and vertex-deletion operations. 
	As $\mathfrak{P}_k$ is preserved under this operation, we get 
	$\mathfrak{P}_k(\hat{G})$ (note that we did not change the token encoding $k$).
	As the number of tokens in $\hat{w}$ is at most $2|Q|$, this implies $\sigma_{\hat{w}} \leq 2|Q|$ and 
	we can use Lemma~\ref{lem:connectionSeparate} to obtain some $w' \in \lang(\widehat{M}_{\rep_M})$ with $\dec(w') = (G',k')$ such that $G'$ can be obtained from $\hat{G}$ via separate, add-leaf and rename operations, which gives us $\mathfrak{P}_k(G')$ by our assumptions. 
	It remains to consider $k'$. 
	For $k'$, we either have $k' = k$; $k$ is a $\mathfrak{P}$-upper bound and $k' \geq 4|Q|^6+2|Q|^3 \geq |G'|$; or $k$ is a $\mathfrak{P}$-lower bound and $k' \leq k$. In all cases we get $\mathfrak{P}_{k'}(G')$.

	As in each case the definition of $\rep_M$ is constructive and as we can enumerate the finite core of $M$, we can decide $\intreg(\mathfrak{P})$ 
by testing $\mathfrak{P}_k(G)$ for each 
$(G, k)$ in the finite core. 
\end{proof}

\section{Conclusions}

We showed how to determine if within a (potentially) infinite set of instances of a graph problem, (at least) one graph with a particular property exists. 
This approach offers new connections between the area of formal languages (as we need finite descriptions for the mentioned infinite sets of instances) and graph theoretic problems.
We focused on regular languages (as the basic class of languages where many algorithmic problems are still decidable) and on a specific (but natural) encoding of graphs in the form of edge lists.
As the regular languages are closed under rational transductions (i.e., transformations defined by finite automata), similar decidability results hold for encodings obtainable by such transductions.

Let us summarize by presenting in Table~\ref{tab:corollarProblems} a list of well-known graph problems where we can conclude decidability results with our main decidability result in Theorem~\ref{thm:general}. 
In the appendix, we have collected quite a number of additional graph problems, proving the applicability of our approach to different types of problems. We mention again that it is not obvious that
$\intreg(P)$ is decidable for polynomial-time solvable problems $P$; examples in the table comprise \textsc{Connectedness}, \textsc{Emptiness} (edge-less),  \textsc{Forest} (acyclic).
\begin{table}[b]
	\centering
\scalebox{.95}{\begin{tabular}{lm{12.8cm}}
	Case& Covered Problems\\\toprule
	(a) & 
	\textsc{Connectedness},
	\textsc{Connected Vertex Cover},
	\textsc{Connected Dominating Set},	
	\textsc{Diameter},	
	\textsc{Dominating Set}, 
	\textsc{Emptiness}, 
	\textsc{Partition Into Connected Components},
	\textsc{VC}
	\\\midrule
	(b) & 
	\textsc{Acyclic Induced Subgraph}, 
	\textsc{Acyclic Subgraph}, 
	\textsc{Bipartite Induced Subgraph}, 	
	\textsc{Bipartite Subgraph},	
	\textsc{IS}, 
	\textsc{Irredundant Set}, 
	\textsc{MaxCut}, 
\textsc{Nonblocker}
	\\\midrule
	(c) & 
	\textsc{Bipartiteness},
	\textsc{Coloring}, 
	\textsc{Edge Bipartization}, 
	\textsc{Feedback Edge Set},		
	\textsc{Feedback Vertex Set}, 
	\textsc{Forest},
	\textsc{$k$-Coloring}, 	
	\textsc{Odd Cycle Transversal},	
	\textsc{Partition Into Forests}  
	\\
\end{tabular}}
	\caption{Graph-problems with decidable $\intreg$-problem according to Theorem~\ref{thm:general} listed by the applied case (a), (b), or (c), in alphabetical order.}
	\label{tab:corollarProblems}
\end{table}

There is another quite natural decoding (interpretation) of the language of encodings~$\Enc$ in terms of bipartite graphs. As these bipartite graphs have a fixed bipartition
(while otherwise a graph might have different 2-colorings), we call them \emph{red-blue graphs} in the following. Hence, the possible vertices are either red (of the form $r_i$) 
or blue (of the form $b_i$). This leads us to the following modification of the decoding function:

$$\dec_\text{red-blue}(\triangleright \text{\textoneoldstyle}^k \$ \prod^m_{i = 1}(\triangleright \ta^{p_i}\#\, \triangleright\ta^{q_i}\$)) = (G, k)\,,$$
	where $G = (V, E)$ with 
$V=R\cup B$ with $R=\{r_{p_i}\mid i\in [m]\}$,  $B=\{b_{q_i}\mid i\in [m]\}$, $E=\{\{r_{p_i},b_{q_i}\}\mid i\in [m]\}\,.$
As red-blue graphs are a natural model of hypergraphs, we can hence model $\intreg$-problems for hypergraphs, as well.
For instance, $\intreg(\textsc{Hitting Set})$ is decidable.
Alternatively, we can view~$\Enc$ as an encoding for directed graphs. Left vertex tokens would then denote the tail vertex of an arc, while right vertex tokens denote the target vertex.
Some first results on both interpretations can be found in the appendix.
Notice that in each of these new interpretations of our encoding, we need variations on the graph operations and hence on Theorem~\ref{thm:general}.
We leave it for future work to look into these interpretations in more detail, and also into  interpreting $\Enc$ as (directed) multi-graphs.
\newpage


\clearpage
\bibliography{mylit}
\clearpage
\appendix	
\section{Application of Theorem 21}

In this section, we are collecting both the definitions of quite a number of graph problems~$P$ and discuss if they possess a decidable $\intreg(P)$-problem, using the techniques presented above.
We also mention the classical complexity status of each problem~$P$.

\subsection*{Simple Basic Problems}

We start our discussion with some problems $P$ that are polynomial-time solvable on graphs.
Recall that even for such simple problems, it is not clear if $\intreg(P)$ is decidable.

\begin{definition}[{\sc Emptiness}]
	\ \\
	\textit{Given:} Graph $G = (V, E)$ (and an integer $k$).\\
	\textit{Question:} Is  $G$ empty, i.e., is $G$ edgeless?
\end{definition}

Observe that this property is maintained when merging vertices, because they are isolated. Hence, part (a) of Theorem~\ref{thm:general} applies, so that $\intreg(\textsc{Emptiness})$  is decidable.

\begin{definition}[{\sc Connectedness}]
	\ \\
	\textit{Given:} Graph $G = (V, E)$ (and an integer $k$).\\
	\textit{Question:} Is  $G$ connected, i.e., is there is a path from $u$ to $v$ within~$V'$ between any two vertices $u,v\in V'$?
\end{definition}

Again, this property is maintained when merging vertices, as merging can never create additional connected components. Hence, part (a) of Theorem~\ref{thm:general} applies, which implies that $\intreg(\textsc{Connectedness})$  is decidable.

\begin{definition}[{\sc Forest}] 
	\ \\
	\textit{Given:} Graph $G = (V, E)$ (and an integer $k$).\\
	\textit{Question:} Is  $G$ a forest, i.e., is $G$ acyclic?
\end{definition}

\begin{definition}[{\sc Bipartiteness}]
	\ \\
	\textit{Given:} Graph $G = (V, E)$ (and an integer $k$).\\
	\textit{Question:} Is  $G$ bipartite?
\end{definition}

Obviously, the integer $k$ is irrelevant in both problem definitions. Also, {\sc Forest} and {\sc Bipartiteness} can be solved in polynomial time. 
Observe that a graph stays a forest (or bipartite, resp.) after deleting edges or vertices, and also adding leaves or separating vertices will not introduce cycles (or destroy bipartiteness, resp.). 
Hence, part (c) 
of Theorem~\ref{thm:general} applies, so that $\intreg(\textsc{Forest})$ and $\intreg(\textsc{Bipartiteness})$ are decidable.

As the previous four problems have no (useful) integer parameter, we can employ one of the ideas expressed in the introduction and define a distance measure to empty graphs or to bipartite graphs: delete at most $k$ vertices in order to produce such a target graph. These problems are well-known under the names \textsc{Vertex Cover} (with respect to \textsc{Emptiness}),  \textsc{Feedback Vertex Set} (with respect to \textsc{Forest}) and \textsc{Odd Cycle Transversal} (with respect to \textsc{Bipartiteness}). We will discuss these problems below, together with an according variation of {\sc Connectedness} that we call {\sc Nearly Connected}.

The reader might have wondered why we consider {\sc Forest} rather than the (formally undefined, but seemingly simpler) problem {\sc Tree}.
However, {\sc Tree} could be seen as a combination of the basic properties underlying {\sc Forest} and {\sc Connectedness}. Therefore, neither the merging nor the separate operations preserve the tree property, i.e., our approach does not work here.

Yet, before considering these problems, let us first continue with our discussion of simple problems, now such problems which have a natural numerical parameter. We start discussing one simple problem where the status of its
$\intreg$ variant is (possibly surprisingly) unknown.

\begin{definition}[{\sc Large Vertex Degree}]
	\ \\
	\textit{Given:} Graph $G = (V, E)$ and a non-negative integer $k$.\\
	\textit{Question:} Is there a vertex  of degree $k$ or more in $G$?
\end{definition}

Clearly, \textsc{Large Vertex Degree} can be tested in polynomial time. Yet, it is not that clear at first glance if $\intreg(\textsc{Large Vertex Degree})$ is decidable.
Every graph has some vertex
of degree at least zero and if $G$ has  a vertex of degree at least $k$, it also has a vertex of degree at least $k-1$. Yet, observe that the corresponding graph property ``has a vertex of degree at least $k$'' is not preserved under merge operations, 
because edges may disappear when merging neighbors of a high-degree vertex. Hence, part (a) of Theorem~\ref{thm:general} does not apply. Part (b) and (c) does not apply either since by an arbitrary separate operation on some edge $e$ the degree of one vertex in $e$ will decrease, so that we do not know if 
 $\intreg(\textsc{Large Vertex Degree})$ is decidable and have to leave this as an open problem.

\begin{definition}[{\sc Small Vertex Degree}]
	\ \\
	\textit{Given:} Graph $G = (V, E)$ and a  non-negative integer $k$.\\
	\textit{Question:} Is there a vertex  of degree $k+1$ or less in $G$?
\end{definition}

Clearly, \textsc{Small Vertex Degree} can be tested in polynomial time. Again, the question is if $\intreg(\textsc{Small Vertex Degree})$ is decidable.
Observe that the corresponding graph property ``has a vertex of degree at most $k+1$'' is preserved under separate operations and add-leaf operations (since $k+1>0$) and moreover, every graph has some vertex
of degree at most $|V|-1$ and if $G$ has  a vertex of degree at most $k$, it also has a vertex of degree at most $k+1$. As $k+1>0$, the leaf-property holds for {\sc Small Vertex Degree}. 
Hence, part (b) of Theorem~\ref{thm:general} applies, so that indeed 
 $\intreg(\textsc{Large Vertex Degree})$ is decidable.

Even simpler decision problems belong to the graph properties ``has at least / most $k$ vertices'' or ``has at least / most $k$ edges''. Again, for each of these properties $\mathfrak{P}_k$, 
we find that $\intreg(\mathfrak{P})$ is decidable due to Theorem~\ref{thm:general}, part (a) or (b). If we want to refer to these problems explicitly in the following, we will call them \textsc{Many Vertices} / \textsc{Few Vertices} or \textsc{Many Edges} / \textsc{Few Edges}, respectively.



%
%
%

\subsection*{Large (Induced) Subgraphs}

We now consider a set of problems that can be subsumed as follows: Given a graph $G$ and a non-negative integer $k$, does there exist a set of vertices or edges of size at least~$k$ that induce a subgraph 
with a certain basic property? We have encountered one such problem before: \textsc{Independent Set} can be viewed as the problem to find a set of vertices of size at least~$k$ that induce an empty subgraph.
Clearly, for this property, the edge variant is not meaningful.

\begin{definition}[{\sc Acyclic  Subgraph}]
	\ \\
	\textit{Given:} Graph $G = (V, E)$ and a non-negative integer $k$.\\
	\textit{Question:} Does there exist a set $E' \subseteq E$ with $|E'| \geq k$ such that $G' = (V, E')$ is acyclic?
\end{definition}
\begin{definition}[{\sc Acyclic Induced  Subgraph}]
	\ \\
	\textit{Given:} Graph $G = (V, E)$ and a non-negative integer $k$.\\
	\textit{Question:} Does there exist a set $V' \subseteq V$ with $|V'| \geq k$ such that the induced graph $G[V'] = (V', E')$ is acyclic?
\end{definition}

Notice that both problems are better known in their graph edit variation (discussed below) under the names \textsc{Feedback Edge Set} and \textsc{Feedback Vertex Set}, respectively, which can be viewed as a ``dual parameterization'' of the subgraph problems we just defined. More precisely, \textsc{Feedback Edge Set} asks  if  there exists a set $E' \subseteq E$ with $|E'| \geq |E|-k$ such that $G' = (V, E')$ is acyclic, and  \textsc{Feedback Vertex Set} asks   if  there exists a set  $V' \subseteq V$ with $|V'| \geq |V|-k$ such that the induced graph $G[V'] = (V', E')$ is acyclic. This reasoning also shows that the edge variation is solvable in polynomial time, because the largest acyclic subgraph of any connected graph with $n$ vertices has $n-1$ edges and is a spanning tree; also see~\cite{DBLP:conf/coco/Karp72}. Conversely, the vertex variant is NP-complete, also see~\cite{garey1979computers,DBLP:conf/stoc/Yannakakis78}. 

For both, \textsc{Acyclic Subgraph} and \textsc{Acyclic Induced Subgraph}, we can argue that $k$ is a lower bound and that positive instances are preserved under separate and add-leaf operations.  If a graph $G$ contains at least $k$ leaves, then setting $E'$ to the set of edges incident with any leaf node yields an acyclic subgraph $G' = (V, E')$ with $|E'|\geq k$. Just taking the leaves themselves produces an induced acyclic subgraph on at least $k$ vertices. Hence, we can apply part (b) of Theorem~\ref{thm:general} in both cases.
Returning to the discussion of ``dual parameterization'' commenced above, it is interesting to note that below, we also prove decidability of the \intreg-variants of both dual problems, but that time, we will apply  part (c) of Theorem~\ref{thm:general}.

We now consider the problem to find large bipartite subgraphs; these decision problems are both known to be NP-complete; see~\cite{garey1979computers}.

\begin{definition}[{\sc Bipartite Subgraph}]
	\ \\
	\textit{Given:} Graph $G = (V, E)$ and a non-negative integer $k$.\\
	\textit{Question:} Does there exist a set $E' \subseteq E$ with $|E'| \geq k$ such that $G' = (V, E')$ is bipartite?
\end{definition}
First, note that $k$ is a {\sc Bipartite Subgraph}-lower bound and that {\sc Bipartite Subgraph} is preserved under separate and add-leaf operations, as for a vertex $v$, any  leaf $u$ added  to $v$ can be assigned to the opposite partition set (not containing $v$) within the bipartition. If a graph $G$ contains at least $k$ leaves, then setting $E'$ to the set of edges incident with any leaf node yields a bipartite graph $G' = (V, E')$ with $|E'|\geq k$ and hence part (b) applies. 

\begin{definition}[{\sc Bipartite Induced Subgraph}]
	\ \\
	\textit{Given:} Graph $G = (V, E)$ and a non-negative integer $k$.\\
	\textit{Question:} Does there exist a set  $V' \subseteq V$ with $|V'| \geq k$ such that the induced graph $G[V'] = (V, E')$ is bipartite?
\end{definition}

Basically, the same arguments as in the edge case apply, apart from the leaf-property which is now seen by considering the (empty, hence bipartite) graph induced by $k$ leaves.

Below, we will also discuss graph edit variants of both problems. Again, they can be viewed as ``dual parameterizations'', and instead of  part (b) of Theorem~\ref{thm:general}, we will apply  part (c) of Theorem~\ref{thm:general} again.

Notice that one could discuss quite a number of further problems of finding large (induced) subgraphs, but the presented problems should suffice to give the reader an idea about how the arguments work.

\subsection*{Graph Edit Problems}

Already in the introduction, we mentioned this class of problems. We are focussing here on two variations thereof: Delete at most~$k$ vertices or edges to obtain a graph with a certain property. Again,
we have seen  one such problem before: \textsc{Vertex Cover} can be viewed as the problem to find a set of vertices of size at most~$k$ whose deletion produces an empty subgraph.
The edge variant is equivalent to the polynomial-time solvable problem \textsc{Few Edges}.

We now consider the problem of deleting few edges or vertices to arrive at a bipartite graph. These problems are known to be NP-complete;  see  \cite{DBLP:journals/tcs/GareyJS76,DBLP:conf/stoc/Yannakakis78}.
 
\begin{definition}[{\sc Edge Bipartization}]
	\ \\
	\textit{Given:} Graph $G = (V, E)$ and a non-negative integer $k$.\\
	\textit{Question:} Does there exist a set $E' \subseteq E$ with $|E'| \leq k$ such that $G - E'=(V,E\setminus E')$ is bipartite?
\end{definition}
Despite the similarity of {\sc Edge Bipartization} with {\sc Bipartite Subgraph}, we cannot apply case (b) since {\sc Edge Bipartization} does not have the leaf-property.
Again, the property of containing a bipartite subgraph is maintained under separate and add-leaf operations. Since $k$ is an {\sc Edge Bipartization}-upper bound removing edges and vertices preserves the {\sc Edge Bipartization}-property and hence we can apply case (c).
\begin{definition}[{\sc Odd Cycle Transversal}]
	\ \\
	\textit{Given:} Graph $G = (V, E)$ and a non-negative integer $k$.\\
	\textit{Question:} Does there exist a set $V' \subseteq V$ with $|V'| \leq k$ such that $G - V'=G[V\setminus V]$ is bipartite?
\end{definition}
With the same considerations as for {\sc Edge Bipartization} we can apply case (c).

We now consider the same type of graph edit problems for the property ``acyclic'' instead of ``bipartite''. The complexity status of these graph edit problems was discussed above.

\begin{definition}[{\sc Feedback Vertex Set}]
	\ \\
	\textit{Given:} Graph $G = (V, E)$ and a non-negative integer $k$.\\
	\textit{Question:} Does there exist a set $V' \subseteq V$ with $|V'| \leq k$ such that $G - V'=G[V\setminus V]$ is a forest?
\end{definition}
\begin{definition}[{\sc Feedback Edge Set}]
\ \\
\textit{Given:} Graph $G = (V, E)$ and a non-negative integer $k$.\\
\textit{Question:} Does there exist a set $E' \subseteq E$ with $|E'| \leq k$ such that $G - E'=(V,E\setminus E')$ is  a forest?
\end{definition}
First, note that adding a leaf does not create a cycle. Hence, the property of being acyclic (i.e., being a forest) is preserved under separate, add-leaf, edge-deletion, and vertex-deletion operations.
 Therefore, the set of edges / vertices which have to be removed in order to make a graph acyclic will only shrink under these operations. This together with $k$ being an upper-bound for
{\sc Feedback Vertex Set} and {\sc Feedback Edge Set}, satisfies all premises for case (c). 
In the following we will see that also the directed versions of these problems have a decidable \intreg-problem.




Finally, we discuss the property ``connected''. Observe that deleting edges in order to make a graph connected is not meaningful; therefore, we only discuss the vertex variant.

\begin{definition}[{\sc Nearly Connected}]
	\ \\
	\textit{Given:} Graph $G = (V, E)$ and a non-negative integer $k$.\\
	\textit{Question:} Does there exist a set $V' \subseteq V$ with $|V'| \leq k$ such that $G - V'$ is connected?
\end{definition}

As we can determine all connected components in polynomial time, {\sc Nearly Connected} is polynomial-time solvable; also see~\cite{DBLP:conf/stoc/Yannakakis78}. 
The merge operation does not increase the number of vertices in a graph and further preserves connectedness of a graph. Hence, part (a) applies.

\subsection*{Partition Problems}

We are now considering the problem(s) of partitioning the vertex set of a graph into parts that induce graphs satisfying one of the properties ``connected'', ``acyclic'' or ``empty''.
We refrain from discussing similar edge problems here.

\begin{definition}[{\sc Partition Into Connected Components}]
	\ \\
	\textit{Given:} Graph $G = (V, E)$ and a non-negative integer $k$.\\
	\textit{Question:} Can $V$ be partitioned into $K \leq k$ disjoint sets $V_1, V_2, \dots, V_K$ such that for $1 \leq i \leq K$,  $V_i$ is connected?
\end{definition}

Recall that one can compute all connected components of a graph in polynomial time, so that we can determine in  polynomial time the smallest $k$ such that $(G,k)$ is a positive {\sc Partition Into Connected Components}-instance.
Obviously, $k$ is a {\sc Partition Into Connected Components}-upper bound, as for $k \geq |V|$ we can put each vertex in its own set. As already observed when discussing \textsc{Connected}, merging vertices can only reduce the number of connected components, so that  we can apply case (a).

\begin{definition}[{\sc Partition Into Forests}]
	\ \\
	\textit{Given:} Graph $G = (V, E)$ and a non-negative integer $k$.\\
	\textit{Question:} Can $V$ be partitioned into $K \leq k$ disjoint sets $V_1, V_2, \dots, V_K$ such that for $1 \leq i \leq K$, the subgraph induced by $V_i$ is a forest, i.e., it contains no cycles?
\end{definition}

This problem is again NP-complete; see~\cite{garey1979computers}. 
Clearly, $k$ is a {\sc Partition Into Forests}-upper bound, as for $k \geq |V|$ we can put each vertex in its own set. Further, none of the operations separate, add-leaf, edge-deletion, and vertex-deletion will produce an additional cycle and hence {\sc Partition Into Forests} is preserved under these operations and we can apply case (c).

\begin{definition}[{\sc Coloring}]
	\ \\
	\textit{Given:} Graph $G = (V, E)$ and a non-negative integer $k$.\\
	\textit{Question:} Does there exist a coloring $c \colon V \to [k]$ such that $c(u) \neq c(v)$ for every $\{u, v\} \in E$?
\end{definition}

First, it might be surprising to list this (well-known) NP-complete problem  here. Yet, one could rephrase it by asking to partition $V$ into at least $k$ subsets $V_i$ each of which induces an empty graph.
Clearly, $k$ is a {\sc Coloring}-upper bound and the property of admitting a $k$-coloring is preserved under separate, edge-deletion, and vertex-deletion operations. For $k \geq 2$, {\sc $k$-Coloring} it is also preserved under the add-leaf operation, as a leaf has only one neighbor and hence for $k \geq 2$, a color can be assigned to any additional leaf. Hence, case (c) applies.
The same type of argument works when fixing $k$ to some constant; more formally, this leads us to the following family of problems.

\begin{definition}[{\sc $k$-Coloring}]
	\ \\
	\textit{Given:} Graph $G = (V, E)$.\\
	\textit{Question:} Does there exist a coloring $c \colon V \to [k]$ such that $c(u) \neq c(v)$ for every $\{u, v\} \in E$?
\end{definition}

Observe that \textsc{2-Coloring} and \textsc{Bipartiteness} are equivalent, as well as  \textsc{1-Coloring} and \textsc{Emptiness}. 
Hence, these problems are solvable in polynomial time.
By way of contrast, {\sc $k$-Coloring} is known to be NP-complete for $k\geq 3$; see~\cite{garey1979computers}.

\subsection*{Cut Problems}


We now study cut problems, more precisely, edge cut problems. Also here, one could as well look into vertex cut problems, but this should at least clarify the flavor of these problems. 


\begin{definition}[{\sc MaxCut}]
	\ \\
	\textit{Given:} Graph $G = (V, E)$ and a non-negative integer $k$.\\
	\textit{Question:} Does there exist a partition $A \uplus B$ of $V$ such that at least $k$ edges of $G$ have one endpoint in $A$ and the second endpoint in $B$?
\end{definition}

{\sc MaxCut} is known to be NP-complete~\cite{garey1979computers}. 
Clearly, {\sc MaxCut} has the leaf-property and $k$ is a {\sc MaxCut}-lower bound. It is also clear that the {\sc MaxCut} property is preserved under separate and add-leaf operations which fits case (b).
\begin{definition}[{\sc MinCut}]
	\ \\
	\textit{Given:} Graph $G = (V, E)$ and a non-negative integer $k>0$.\\
	\textit{Question:} Does there exists a set $E' \subseteq E$ with $|E'| \leq k$ such that $G - E'$ is not connected?
\end{definition}

By the famous Max-Flow-Min-Cut theorem, this problem can be solved in polynomial time, using some flow algrorithm, also see~\cite{DBLP:conf/coco/Karp72}.

Despite $k$ being a \textsc{MinCut}-upper bound, \textsc{MinCut} has the leaf property for a constant function $f(k) = c \geq 1$ and $k \geq 1$ as we can cut the edge connecting the leaf with the rest of the graph to obtain an unconnected graph. The property of containing a minimum edge cut of size at most $k$ is further preserved under separate and add-leaf operations. Hence, we can adapt case (b) of Theorem~\ref{thm:general} for a constant function $f(k) = c \geq 1$ and an upper-bound $k$ by defining $\rep_M(p,q) = \threshold_K(((|Q|+c)|Q|^2)^2, p, q) \cup \picksep_V(c, |Q|, p, q)$. 

\subsection*{Distance-Related Graph Properties}

Recall that the distance between two vertices in an undirected graph is defined by the length of a shortest path between them. We now discuss some (only a few) graph properties that are related to this distance notion. 

\begin{definition}[{\sc $r$-Dominating Set}]
\ \\
\textit{Given:} Graph $G = (V, E)$ and a non-negative integer $k$.\\
\textit{Question:} Does there exist a set $V' \subseteq V$ with $|V'| \leq k$ such that every vertex of $G$ is within distance at most $r$ from at least one vertex of $V'$?
\end{definition}

By definition, the case $r=1$ corresponds to \textsc{Dominating Set}, which immediately entails NP-hardness.
Merging vertices will only decrease the distance of any pair of vertices in a graph since shorter paths might be created by contracting edges or merging non-adjacent vertices. As $k$ is an {\sc $r$-Dominating Set}-upper bound, we can apply case (a).

Observe that we can also consider this problem as having two numerical parameters, $r$ and~$k$. It is hence also known as $(k,r)$-\textsc{Center}. 
Observe that our reasoning also applies when fixing $k$ and considering $r$ as part of the input, a scenario often considered in approximation algorithms; see the discussions in~\cite{KATSIKARELIS201990}. The special case $k=1$ has a name of its own in graph-theoretic terminology.
\begin{definition}[{\sc Radius}]
\ \\
\textit{Given:} Graph $G = (V, E)$ and a non-negative integer $r$.\\
\textit{Question:} Is there  a vertex $c$ such that every vertex of $G$ is within distance at most~$r$ from~$c$?
\end{definition}
Notice that {\sc Radius} can be easily solved in polynomial time. Yet, our decidability result for $\intreg(\textsc{Radius})$ is not an immediate consequence of this observation, but rather follows from our reasoning. The same argument applies for the diameter instead of the radius, as merging two vertices never increases neither the radius nor the diameter of a graph.
\begin{definition}[{\sc Diameter}]
\ \\
\textit{Given:} Graph $G = (V, E)$ and a non-negative integer $d$.\\
\textit{Question:} Are all pairs of vertices of $G$ within distance at most~$d$ from each other?
\end{definition}

\subsection*{Further Graph Problems}

We first study two further main parameters of the so-called domination-chain. Both problems are NP-complete; see~\cite{garey1979computers,hedetniemi1985irredundance,DBLP:journals/siamdm/FellowsFHJ94}.
Also confer~\cite{BAZGAN2019} for a more recent survey.

\begin{definition}[{\sc Dominating Set}]
	\ \\
	\textit{Given:} Graph $G = (V, E)$ and a non-negative integer $k$.\\
	\textit{Question:} Is there a dominating set for $G$ of size $k$ or less, i.e., a subset $V' \subseteq V$ with $|V'| \leq k$ such that for all $u \in V \backslash V'$, there is a $v \in V'$ such that $\{u,v\} \in E$?
\end{definition}
Clearly, $k$ is a \textsc{Dominating Set}-upper bound, as every graph $G=(V, E)$ has a dominating set of size $\leq |V|$. Further, it is clear that the property of containing a dominating set of size $\leq k$ is maintained under merge and rename operations. Hence, case (a) of Theorem~\ref{thm:general} applies and $\intreg(\textsc{Dominating Set})$ is decidable.

\begin{definition}[{\sc Irredundant Set}]
	\ \\
	\textit{Given:} Graph $G = (V, E)$ and a non-negative integer $k$.\\
	\textit{Question:} Does there exist a set $V' \subseteq V$ with $|V'| \geq k$
	 such that $ V'$ is irredundant, i.e., each $v\in V'$ has a neighbor $u\in N[v]$ such that $N[u]\cap V'=\{v\}$?
\end{definition}
Recall that $N[v]$ denotes the closed neighborhood of $v$, i.e., the set of all vertices that are adjacent or equal to $v$.
In other words, vertices~$v$ in irredundant sets~$V'$ require a private neighbor (which could be $v$ itself), i.e., a neighbor not adjacent to any other vertex of $V'$.
Hence, for instance each inclusion-wise minimal dominating set is an irredundant set.
As also every independent set is an irredundant set, {\sc Irredundant Set} has the leaf-property with function $f(k)=k$. Moreover, $k$ is an  {\sc Irredundant Set}-lower bound. Finally, if $G$ has an irredundant set of size at least $k$, then so has any graph $G^\circ$ obtained from $G$ by a separate or add-leaf operation. More precisely, looking at Figure~\ref{fig:con-graph-rep}, both with add-leaf and with separate, a leaf $v'$ is created.
If its neighbor used to be the only private neighbor of some vertex $x$ of the irredundant set $V'$ of $G$, then $(V'\setminus\{x\})\cup\{v'\}$ is irredundant in $G^\circ$.
If the $G^\circ$-neighbor $y$ of $v'$ used to be in $V'$, then $V'$ is also irredundant in $G^\circ$, as in particular $y$ has (now) $v'$ as a private neighbor. 
If neither  $y$ nor any of the $G$-neighbors of $y$ have been in the irredundant set $V'$ of $G$, then none of the vertices of $V'$ is affected by the discussed operation, so that $V'$ is also an irredundant set in $G^\circ$.  
Our considerations cover in particular the case when a former edge $\{u,v\}$ in $G$ got replaced by an edge incident to $v'$. Hence, part (b) applies.

\begin{definition}[{\sc Monochromatic Triangle}]
	\ \\
	\textit{Given:} Graph $G = (V, E)$ (and an integer $k$).\\
	\textit{Question:} Is there a partition of $E$ into disjoint sets $E_1, E_2$ such that neither $G_1=(V, E_1)$ nor $G_2=(V, E_2)$ contains a triangle?
\end{definition}
\textsc{Monochromatic Triangle} is known to be NP-complete~\cite{garey1979computers}.
Here, $k$ does not participate in {\sc Monochromatic Triangle}. Adding leaves does not create triangle and neither does deleting edges or vertices. Hence the {\sc Monochromatic Triangle} property is preserved under separate, add-leaf, edge-deletion, and vertex-deletion operations~(case (c)).



\begin{definition}[{\sc Nonblocker}]
	\ \\
	\textit{Given:} Graph $G = (V, E)$ and a non-negative integer $k$.\\
	\textit{Question:} Is there a dominating vertex set of $G$ whose complement has at least $k$ many vertices?
\end{definition}

The complementation operation clearly does not change the classical complexity status, i.e., with \textsc{Dominating Set}, also \textsc{Nonblocker} is NP-complete. 
By adding all newly created leaves into the dominating, we see that the original nonblocker set (as the complement of a dominating set) is maintained, so that 
$k$-{\sc  Nonblocker} is preserved under separate and add-leaf operations.
Clearly,
$k$ is a {\sc  Nonblocker}-lower bound. Moreover, with $f(k)=k$, {\sc  Nonblocker} also possesses the leaf-property. Hence, $\intreg(\textsc{Nonblocker})$ is decidable by part (b).

The reader might have wondered why we do not approach the better known problem of {\sc Max-Leaf Spanning Tree}, which obviously relates to \textsc{Connected Dominating Set} that we also discuss later on. 
However,  {\sc Max-Leaf Spanning Tree} does not seem to be amenable to our approach.

\begin{definition}[{\sc $\ell$-Path Cover}]
	\ \\
	\textit{Given:} Graph $G = (V, E)$ and a non-negative integer $k$.\\
	\textit{Question:} Does there exists a set $V'\subseteq V$ with $|V'| \leq k$ such that, after removing $V'$ a graph remains where no path on $\ell$ vertices remains?
\end{definition}
For $l \geq 2$ the problem {\sc $\ell$-Path Cover} is NP-complete~\cite{DBLP:conf/stoc/Yannakakis78}. 
Notice that {\sc $2$-Path Cover} is another name for \textsc{Vertex Cover}.
Our approach only works for $\ell=2$, because by vertex merging as well as by adding leafs, longer paths can be created.

%
%

\subsection*{Connected Problem Variations}

Many graph problems can be seen as selecting a set of vertices $V'$ with  certain properties; it is possible to add further requirements, for instance, that $V'$ is (also) connected.
We discuss this also NP-complete variation (see~\cite{garey1979computers}) for some of the problems considered above.

\begin{definition}[{\sc Connected Vertex Cover}]
	\ \\
	\textit{Given:} Graph $G = (V, E)$ and a non-negative integer $k$.\\
	\textit{Question:} Is there a \emph{connected vertex cover} for $G$ of size $k$ or less, i.e., a subset $V' \subseteq V$ with $|V'| \leq k$ that is both connected and a vertex cover?
\end{definition}
In other words, $V'$ is a connected vertex cover if for each edge $\{u,v\} \in E$, we find $u \in V' \vee v \in V'$, and if between any two vertices $u,v\in V'$, there is a path from $u$ to $v$ within~$V'$.

If we look carefully at the proof of Theorem~\ref{thm:general} case (a) and Lemma~\ref{lem:finiteCorePosInstanceVC} we observe that we can relax the condition that $\mathfrak{P}_k(G)$ holds for $k \geq |G|$ to the following condition: if for a graph $G$ there is any $k \geq |G|$ such that $\mathfrak{P}_k(G)$ holds, then $\mathfrak{P}_{|G|}(G)$ holds (as we only need an upper bound on the value of $k$ above which the actual value of $k$ does not matter anymore).
As $V$ itself is always a valid connected vertex cover if $G$ is connected, this condition holds.
Further, noticing that the property of containing a connected vertex cover of size at most $k$ is preserved under merge and rename operations we obtain the decidability of $\intreg(\textsc{Connected Vertex Cover})$ analogously to case (a) of Theorem~\ref{thm:general}. 

With  an analogous argument, one can prove the decidability of the $\intreg$ variation of the following problem:

\begin{definition}[{\sc Connected Dominating Set}]
	\ \\
	\textit{Given:} Graph $G = (V, E)$ and a non-negative integer $k$.\\
	\textit{Question:} Is there a \emph{connected dominating set} for $G$ of size $k$ or less, i.e., a subset $V' \subseteq V$ with $|V'| \leq k$ that is both connected and a dominating set?
\end{definition}

One can also consider the problems \textsc{Connected Feedback Vertex / Edge Set}, but here we observe that our techniques do not apply.

\section{Beyond Simple Undirected Graphs}
%
There is another quite natural decoding (interpretation) of the language of encodings~$\Enc$ in terms of bipartite graphs. As these bipartite graphs have a fixed bipartition
(while otherwise a graph might have different 2-colorings), we call them \emph{red-blue graphs} in the following. Hence, the possible vertices are either red (of the form $r_i$) 
or blue (of the form $b_i$). This leads us to the following modification of the decoding function:

$$\dec_\text{red-blue}(\triangleright \text{\textoneoldstyle}^k \$ \prod^m_{i = 1}(\triangleright \ta^{p_i}\#\, \triangleright\ta^{q_i}\$)) = (G, k)\,,$$
	where $G = (V, E)$ with 
$V=R\cup B$ with $R=\{r_{p_i}\mid i\in [m]\}$,  $B=\{b_{q_i}\mid i\in [m]\}$, $E=\{\{r_{p_i},b_{q_i}\}\mid i\in [m]\}\,.$
There are some subtle differences between this interpretation and graphs that are just  bipartite. Most notably, in our definition, there is no encoding for red-blue graphs with isolated vertices. 
Also, with the definition of graph operations, we have to be careful.  By the previous observation, we should pay attention when deleting arbitrary vertices or edges, as this might lead to isolated vertices.
More precisely, we are now facing the following (modified) graph operations: (a) If we delete a vertex $x$ from a red-blue graph, we do not only delete $x$ and all its incident edges, but also all isolated vertices that might be created this way.
In other words, we will delete, in addition, all neighbors of $x$ that have been leaves before deleting~$x$.
(b) The same problem may occur when deleting edges: if we delete an edge~$e$, we also remove all vertices incident to~$e$ that have been of degree one.
 Also, the merge and rename operations should be color-preserving; in particular, red vertices should be merged with red vertices only. The add-leaf operation would implicitly take care of the fact that a leaf added to a blue vertex should be red and vice versa. Finally, a separate operation with respect to $v$ and $w$ adds in particular  a new vertex $w'$ of the same color as $w$.
As a technical remark, the modified edge and vertex deletion is only performed after a merge or separate operation and not as an intermediate step thereof.
To this end, more formally for every $(p, q) \in Q^2$, we define $R_M[p, q] = \lang(M[p, q])\cap \lang(\triangleright \ta^*\#)$ and $B_M[p, q] = \lang(M[p, q])\cap \lang(\triangleright \ta^*\$)$ be the set of red vertex and blue vertex tokens. Finally, when talking about the \emph{finite core} in the following, we now use the decoding function $\dec_\text{red-blue}$ instead of $\dec$.

\begin{lemma}
	\label{lem:connectionRedBlueVertexMerge}
	Let $w \in \lang(M)$ with characteristic factorization $w = u_1 u_2 \ldots u_m$ with respect to states $q_0,q_1,  \ldots, q_m$, and let $\dec_\text{red-blue}(w) = (G, k)$, 
	with $G=(V,E)$ where $V=R\cup B$. 
	Let $\rep_M$ be a token-preserving representative function such that for the token sets $R_M[p,q]$, and $B_M[p,q]$ (with $p, q \in Q$) $(\pickmerge_R(p, q) \cup \pickmerge_B(p, q))\subseteq \rep_M(p, q)$.
	Then, there is some $w' = u_1'u_2'\ldots u_m' \in \rep_M(q_0, q_1) \cdot \rep_M(q_1, q_2) \cdot \ldots \cdot \rep_M(q_{m-1}, q_m)$ such that $\dec_\text{red-blue}(w') = (G' , k')$ and $G'$ can be obtained from $G$ by color-preserving merge and rename operations.
\end{lemma}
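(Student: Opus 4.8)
The plan is to follow the proof of Lemma~\ref{lem:connectionMerge} almost verbatim, the one change being that red and blue vertices must be kept apart; this is automatic because the last symbol of a vertex token ($\#$ for a red, left token, $\$$ for a blue, right token) is never altered by the functions $\pickmerge_R$ and $\pickmerge_B$. First I would record the merging property separately for each color. Writing $R_M^G[p,q]=R_M[p,q]/1$ and $B_M^G[p,q]=B_M[p,q]/1$, the very definition of $\pickmerge$ applied to the red token sets gives: for every $A\subseteq(Q\setminus\{q_0\})\times Q$, if $\bigcap_{(p,q)\in A}R_M^G[p,q]\neq\emptyset$ then $\bigcap_{(p,q)\in A}\pickmerge_R(p,q)\neq\emptyset$, since a single token (the common vertex part followed by $\#$) is added to $\pickmerge_R(p,q)$ for every $(p,q)\in A$; together with $\pickmerge_R\subseteq\rep_M$ this yields $\bigcap_{(p,q)\in A}\rep_M(p,q)\neq\emptyset$. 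The analogous statement holds for $B_M^G$, $\pickmerge_B$, and tokens followed by $\$$.

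Next I would partition the vertex positions according to the \emph{colored} vertex they encode. In the red-blue interpretation a left token $\triangleright\ta^p\#$ denotes the red vertex $r_p$ and a right token $\triangleright\ta^q\$$ denotes the blue vertex $b_q$, so $r_p$ and $b_p$ are distinct; accordingly two positions $i,h\in\{2,\dots,m\}$ encode the same colored vertex precisely when $u_i=u_h$ as strings (same number \emph{and} same final symbol). For the class $P_h=\{i:u_i=u_h\}$ set $A_h=\{(q_{i-1},q_i):i\in P_h\}$. All tokens in $P_h$ share the color of $u_h$, and since $u_i\in R_M[q_{i-1},q_i]$ (respectively $B_M[q_{i-1},q_i]$) for every $i\in P_h$, their common vertex part $\{u_h\}/1$ witnesses that $\bigcap_{(p,q)\in A_h}R_M^G[p,q]$ (respectively $B_M^G$) is nonempty. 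By the first step the corresponding $\pickmerge_R$ (respectively $\pickmerge_B$) token then lies in $\rep_M(q_{i-1},q_i)$ for every $i\in P_h$, and I assign this single token to $u_i'$ for all $i\in P_h$. For $u_1'$ I pick any (threshold) element of $\rep_M(q_0,q_1)$, exactly as in the original proof.

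This produces $w'=u_1'u_2'\dots u_m'$ with $u_i'\in\rep_M(q_{i-1},q_i)$, so $w'$ belongs to the required product $\rep_M(q_0,q_1)\cdots\rep_M(q_{m-1},q_m)$. Since the assignment depends only on the colored vertex, it is a consistent renaming in which red tokens are replaced only by red representatives and blue only by blue. Reading off $\dec_\text{red-blue}(w')=(G',k')$, each edge $\{r_p,b_q\}$ of $G$ becomes the edge between the chosen red representative of $r_p$ and the chosen blue representative of $b_q$, so adjacency is preserved, while distinct colored vertices that happen to receive the same representative are identified. The latter is exactly a color-preserving merge, so $G'$ arises from $G$ by color-preserving merge and rename operations, as claimed.

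The only point that needs genuine care, and hence the main obstacle, is guaranteeing color preservation: I must verify that no class $P_h$ mixes a red and a blue position and that $\pickmerge_R$, $\pickmerge_B$ draw their representatives from the disjoint sets $R_M$, $B_M$. Both follow from the fact that taking $\{\cdot\}/1$ and then re-extending inside the definition of $\pickmerge$ preserves the distinguishing final symbol ($\#$ versus $\$$), so red and blue can never be interchanged.
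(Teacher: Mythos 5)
Your proposal is correct and follows essentially the same route as the paper: the paper's proof likewise proceeds as in Lemma~\ref{lem:connectionMerge} but partitions positions by full token equality $u_i=u_h$ (rather than $\{u_i\}/1=\{u_h\}/1$) and uses $\pickmerge_R$ or $\pickmerge_B$ according to whether the token is a left (red) or right (blue) vertex token. Your additional observation that the trailing $\#$ versus $\$$ automatically keeps the two colors apart is exactly the point the paper relies on implicitly.
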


\begin{proof}
	We proceed as in the proof of Lemma~\ref{lem:connectionMerge} except that we treat red and blue vertex  tokens separately.
	In particular, we consider the partition of $[m]$ given by $P_h=\{i\in[m]\mid u_i=u_h\}$. Depending on whether $u_i$ is a left or a right vertex token, referring to a red or a blue vertex, we are using $\pickmerge_R$ or $\pickmerge_B$ to obtain the representative $u_i'$.
\end{proof}

\begin{definition}[{\sc Red-Blue Dominating Set}, or {\sc RBDS} for short]
	\ \\
	\textit{Given:} Red-blue graph $G = (V, E)$, $V=R\cup B$, and a non-negative integer $k$.\\
	\textit{Question:} Is there a red-blue dominating set (RBDS) for $G$ of size $k$ or less, i.e., a subset $R' \subseteq R$ with $|R'| \leq k$ such that for all $b \in B$ there is a $r \in R'$ such that $\{r,b\} \in E$?
\end{definition}

As we see below in our discussions, \textsc{RBDS} is equivalent to \textsc{Hitting Set} and \textsc{Set Cover} and hence NP-complete.

\begin{lemma}
	\label{lem:mergeRBDS}
	Let $G = (V, E)$ be a red-blue graph, with $V=R\cup B$. Let $G^\star=(V^\star, E^\star)$ be obtained from $G$ by applying the color-preserving merge operation on some arbitrary vertices $v$ and $v'$. If $G$ contains an RBDS of size at most $k$, then $G^\star$ also contains an RBDS of size at most $k$.
\end{lemma}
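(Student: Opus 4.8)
The plan is to mimic the proof of Lemma~\ref{lem:mergeVC}: starting from a red-blue dominating set $R'\subseteq R$ of $G$ with $|R'|\le k$, I exhibit an RBDS of $G^\star$ of size at most $k$. Since the merge operation is color-preserving, the two merged vertices $v,v'$ are either both blue or both red, and I treat these two cases separately. Throughout, write $[v,v']$ for the new merged vertex, and recall that its neighborhood in $G^\star$ is exactly the union of the $G$-neighborhoods of $v$ and $v'$, while every other vertex keeps its former neighbors (with incidences to $v$ or $v'$ redirected to $[v,v']$).

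If $v,v'\in B$ are blue, I claim that $R'$ itself is already an RBDS of $G^\star$. Indeed, the red side is untouched, so $R'\subseteq R=R^\star$. Every blue vertex of $G^\star$ other than $[v,v']$ is unchanged and still has a neighbor in $R'$. For the merged vertex $[v,v']$, note that $v$ was dominated in $G$ by some $r\in R'$, i.e.\ $\{r,v\}\in E$; hence $\{r,[v,v']\}\in E^\star$ and $[v,v']$ is dominated by $r$. Thus $R'$ dominates all blue vertices of $G^\star$ and $|R'|\le k$.

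If $v,v'\in R$ are red, set $R'^\star=R'$ when $R'\cap\{v,v'\}=\emptyset$, and otherwise $R'^\star=(R'\setminus\{v,v'\})\cup\{[v,v']\}$. Then $|R'^\star|\le|R'|\le k$, since we remove at least as many vertices as we add (and strictly more when both $v,v'\in R'$). To see that $R'^\star$ dominates the (unchanged) blue side, take any blue vertex $b$ and a red vertex $r\in R'$ with $\{r,b\}\in E$. If $r\notin\{v,v'\}$, then $r\in R'^\star$ and $\{r,b\}\in E^\star$, so $b$ stays dominated; if $r\in\{v,v'\}$, then $\{[v,v'],b\}\in E^\star$ and $[v,v']\in R'^\star$ dominates $b$. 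Hence $R'^\star$ is an RBDS of $G^\star$ of size at most $k$.

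Finally, I should note that a color-preserving merge in a red-blue graph creates no new isolated vertices, since any neighbor of $v$ or $v'$ simply becomes a neighbor of $[v,v']$; so the modified deletion cleanup described for red-blue graphs does not interfere here and no member of the dominating set is lost. The only point where the color-preserving hypothesis is genuinely used is that in the red case the candidate $[v,v']$ we add is itself red, hence a legal member of $R^\star$; the blue case needs no change to $R'$ at all. I expect no real obstacle, as the whole argument parallels the vertex-cover merge lemma.
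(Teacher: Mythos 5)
Your proof is correct and follows essentially the same route as the paper, which likewise treats the red and blue cases separately and observes that merging two red vertices amounts to replacing them in the dominating set by the merged vertex (mirroring Lemma~\ref{lem:mergeVC}), while merging two blue vertices leaves the dominating set valid. The paper's own argument is only a one-sentence sketch; your write-up just fills in the same case analysis in detail.
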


For the proof, just observe that the size of the RBDS might drop by one if two red vertices are merged, namely, when the two merged vertices belonged to a smallest RBDS, but 
it might also drop if two blue vertices are merged, as then it might be possible to remove one of the vertices from the RBDS, as it might have lost its private neighbor, i.e., its only neighbor that was not adjacent to any other red vertex in the RBDS.
%

\begin{lemma}\label{lem:finiteCorePosInstanceRBDS}
	Let $M$ be an NFA with $\lang(M) \subseteq \Enc$. Define, for $p, q \in Q$, $\rep_M(p, q)= \threshold_K(2^{|Q|^2}, p, q)\, \cup\, \pickmerge_R(p, q)\, \cup\, \pickmerge_B(p, q)$ for the token sets $K_M[p,q]$, $R_M[p,q]$ and $B_M[p,q]$.
	Then, $\lang(M)$ contains an encoded positive \textsc{RBDS}-instance if and only if the finite core of $M$ (with respect to $\rep_M$) contains a positive \textsc{RBDS}-instance.
\end{lemma}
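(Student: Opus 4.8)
The plan is to mirror the proof of Lemma~\ref{lem:finiteCorePosInstanceVC} almost verbatim, substituting the red-blue analogues of the merge machinery. The only-if direction is immediate: by Proposition~\ref{prop:subset} we have $\lang(\widehat{M}_{\rep_M}) \subseteq \lang(M)$, and the finite core of $M$ is by definition $\dec_\text{red-blue}(\lang(\widehat{M}_{\rep_M}))$, so any positive \textsc{RBDS}-instance found in the finite core is already witnessed by a word of $\lang(M)$.

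For the if-direction, I would start from a word $w \in \lang(M)$ with $\dec_\text{red-blue}(w) = (G,k)$ a positive \textsc{RBDS}-instance, fix a characteristic factorization of $w$, and feed it to Lemma~\ref{lem:connectionRedBlueVertexMerge}. Since $\rep_M(p,q)$ contains both $\pickmerge_R(p,q)$ and $\pickmerge_B(p,q)$, that lemma yields a word $w' \in \lang(\widehat{M}_{\rep_M})$ with $\dec_\text{red-blue}(w') = (G',k')$ such that $G'$ arises from $G$ by color-preserving merge and rename operations. Iterating Lemma~\ref{lem:mergeRBDS} along this sequence of merges then shows that $G'$ still admits an \textsc{RBDS} of size at most $k$: a single color-preserving merge can only shrink a smallest \textsc{RBDS}, regardless of whether two red or two blue vertices are identified.

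It remains to reconcile the encoded threshold $k'$ with $k$, and this is the one place where some care is needed. The merge representatives are chosen by $\pickmerge_R$ and $\pickmerge_B$, so by Theorem~\ref{thm:pick-is-rep} the graph $G'$ has at most $2^{|Q|^2}$ distinct red vertices. By the choice of the parameter in $\threshold_K(2^{|Q|^2}, q_0, q)$, the encoded threshold satisfies either $k' = k$ or $k' > 2^{|Q|^2}$. In the first case, the \textsc{RBDS} of size at most $k$ obtained above is directly a witness for $(G',k')$. In the second case I would invoke the structural fact that in every red-blue graph the full red side $R$ is a valid \textsc{RBDS}, since each blue vertex is introduced together with an incident red vertex by $\dec_\text{red-blue}$ and is hence dominated by $R$; as $|R| \le 2^{|Q|^2} < k'$, the instance $(G',k')$ is trivially positive. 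Combining both cases yields the claim.

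The main obstacle---and the only point where the red-blue setting genuinely departs from \textsc{VC}---is the asymmetry of the dominating set: the \textsc{RBDS} lives exclusively on the red side, so the trivial large-threshold argument must bound $|R|$ (through $\pickmerge_R$ alone) rather than the total vertex count, and must exploit that $R$ dominates $B$ by the very construction of $\dec_\text{red-blue}$. I would also double-check that Lemma~\ref{lem:mergeRBDS} really covers the blue-merge case, since merging two blue vertices can drop the \textsc{RBDS} size by freeing a red vertex that has lost its private neighbor; this is precisely the subtlety the statement of that lemma is designed to absorb, and it is what makes the iterated application in the second paragraph legitimate.
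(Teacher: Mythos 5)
Your proposal is correct and takes essentially the same route as the paper: the paper's own proof is just the remark that the argument is analogous to Lemma~\ref{lem:finiteCorePosInstanceVC}, now invoking Lemmas~\ref{lem:connectionRedBlueVertexMerge} and~\ref{lem:mergeRBDS}, which is exactly what you do. Your added care in the large-threshold case (bounding $|R|$ via $\pickmerge_R$ and noting that $R$ dominates $B$ by construction of $\dec_\text{red-blue}$) is precisely the ``peculiarity of red-blue graphs'' the paper alludes to but does not spell out.
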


The proof is quite analogous to the one of Lemma~\ref{lem:finiteCorePosInstanceVC}, taking care of the peculiarities of red-blue graphs, now using Lemmas~\ref{lem:connectionRedBlueVertexMerge} and~\ref{lem:mergeRBDS}.


\begin{theorem}
	$\intreg(\textsc{RBDS})$ is decidable.
\end{theorem}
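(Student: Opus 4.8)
The plan is to mirror the decidability proof for $\intreg(\textsc{VC})$, since essentially all the heavy lifting has already been packaged into the preceding lemmas. First, given an arbitrary NFA $M$ as an instance of $\intreg(\textsc{RBDS})$, I would intersect $\lang(M)$ with the regular language $\Enc$; as the regular languages are closed under intersection, this lets me assume without loss of generality that $\lang(M) \subseteq \Enc$, so that every accepted word decodes to a genuine red-blue instance under $\dec_\text{red-blue}$.

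Next, I would instantiate the representative function exactly as in Lemma~\ref{lem:finiteCorePosInstanceRBDS}, namely $\rep_M(p,q) = \threshold_K(2^{|Q|^2}, p, q) \cup \pickmerge_R(p,q) \cup \pickmerge_B(p,q)$. By Proposition~\ref{prop:computable} each of these component sets is computable, hence $\rep_M$ and the generalized automaton $\widehat{M}_{\rep_M}$ can be constructed effectively. The finite-core bound of Lemma~\ref{lem:finite_core} applies after replacing $\dec$ by $\dec_\text{red-blue}$: its proof only cuts out factors that leave the edge set (and hence the decoded graph) unchanged, and edges are read off identically under both decodings, so the finite core equals $\dec_\text{red-blue}(\{w \in \lang(\widehat{M}_{\rep_M}) \mid |w| \leq \ell\})$ for the same length bound $\ell$, a finite and effectively enumerable set.

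The crux then is Lemma~\ref{lem:finiteCorePosInstanceRBDS}, which (via the color-preserving merge/rename connection of Lemma~\ref{lem:connectionRedBlueVertexMerge} together with the monotonicity of \textsc{RBDS} under color-preserving merges from Lemma~\ref{lem:mergeRBDS}) guarantees that $\lang(M)$ contains a positive \textsc{RBDS}-instance if and only if the finite core does. Combining this equivalence with the enumerability of the finite core yields the decision procedure: enumerate every pair $(G,k)$ in the finite core and test whether it is a positive \textsc{RBDS}-instance. Since \textsc{RBDS} is in NP (being equivalent to \textsc{Hitting Set}), each such test terminates, whence $\intreg(\textsc{RBDS})$ is decidable.

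I expect no genuine obstacle in this final theorem itself, as it is purely a matter of assembling the established components. The only point requiring a moment's care is confirming that the finite-core length bound of Lemma~\ref{lem:finite_core} transfers to the $\dec_\text{red-blue}$ interpretation; this it does, because that lemma's pumping-down argument is entirely syntactic and preserves the multiset of edge factors, while the colors attached to the two endpoints of an edge are determined solely by whether a token is a left or right vertex token, a distinction the construction already respects.
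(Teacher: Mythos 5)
Your proposal is correct and follows essentially the same route the paper takes (and takes for the analogous \textsc{Vertex Cover} theorem): restrict to $\lang(M)\subseteq\Enc$, build the computable representative function from $\threshold_K$ and the color-split $\pickmerge$ functions, invoke the finite-core bound, and reduce to finitely many \textsc{RBDS} tests via Lemma~\ref{lem:finiteCorePosInstanceRBDS}. Your explicit check that Lemma~\ref{lem:finite_core} transfers to $\dec_\text{red-blue}$ because the cutting argument preserves the set of edge factors (from which both decodings read off vertices and colors) is exactly the right point of care, and is if anything spelled out more carefully than in the paper.
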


Observe that there are at least two more natural decodings (interpretations) of the language $\Enc$ of encodings that we defined above for instances of typical graph problems.
\begin{description}
	\item[Hypergraphs] $\dec_\text{hyp}(\triangleright \text{\textoneoldstyle}^k \$ \prod^m_{i = 1}(\triangleright \ta^{p_i}\#\, \triangleright\ta^{q_i}\$)) = (G, k)\,,$ where the hypergraph $G=(V,E)$ is described by the universe (vertex set) $V=\{v_{p_i}\mid i\in [m]\}$ and the hyperedge $e_{q_i}$ collects all $v_{p_j}$ such that $q_i=q_j$, yielding the hyperedge set $E$. 
	
	\item[Directed graphs] $\dec_\text{dir}(\triangleright \text{\textoneoldstyle}^k \$ \prod^m_{i = 1}(\triangleright \ta^{p_i}\#\, \triangleright\ta^{q_i}\$)) = (G, k)\,,$ where the directed graph $G=(V,E)$ is described by the vertex set $V=\{v_{p_i},v_{q_i}\mid i\in [m]\}$ and $E=\{(v_{p_i},v_{q_i})\mid i\in [m]\}$. Clearly, $\dec(w)$ delivers the underlying undirected simple graph of  $\dec_\text{dir}(w)$, obtained from the latter by forgetting arc directions and omitting loops.
\end{description}

Recall that hypergraphs can be interpreted as red-blue graphs, with the vertex set of the hypergraph collecting the red vertices and the hyperedge set collecting the blue vertices. Observe that the red-blue graph obtained from $\dec_\text{hyp}(w)$ by this interpretation equals $\dec_\text{red-blue}(w)$ and likewise, the hypergraph corresponding 
to the red-blue graph  $\dec_\text{red-blue}(w)$ equals $\dec_\text{hyp}(w)$.
Therefore, we can immediately translate our results on red-blue graph problems into results on hypergraph problems. Hence, we get \intreg-decidability for the NP-complete problem \textsc{Hitting Set}~\cite{garey1979computers}.

\begin{definition}[{\sc Hitting Set}]
	\ \\
	\textit{Given:} Collection $E$ of subsets of a finite set $V$, defining a hypergraph $(V,E)$,  and a non-negative integer $k$.\\
	\textit{Question:} Is there a subset $V' \subseteq V$ with $|V'| \leq k$ such that $V'$ contains at least one element from each hyperedge in $E$?
\end{definition}

\begin{theorem}
	$\intreg(\textsc{Hitting Set})$ is decidable.
\end{theorem}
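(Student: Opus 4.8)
The plan is to reduce $\intreg(\textsc{Hitting Set})$ to $\intreg(\textsc{RBDS})$, whose decidability was just established, by exploiting the correspondence between hypergraphs and red-blue graphs spelled out above the statement. First I would make the equivalence between the two problems explicit at the level of single instances: given a hypergraph $(V,E)$, a set $V' \subseteq V$ hits every hyperedge exactly when, in the associated red-blue graph (whose red vertices are the elements of $V$, whose blue vertices are the hyperedges, and with an edge $\{r,b\}$ whenever the element $r$ lies in the hyperedge $b$), the red set $V'$ dominates every blue vertex. Thus $(V,E)$ with bound $k$ is a positive \textsc{Hitting Set} instance if and only if the corresponding red-blue graph with the \emph{same} bound $k$ is a positive \textsc{RBDS} instance.

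Next I would lift this equivalence to the encoding level. By the observation recorded above, for every $w \in \Enc$ the red-blue graph obtained from $\dec_\text{hyp}(w)$ under this interpretation equals $\dec_\text{red-blue}(w)$, and the numerical bound $k$ is read off identically by both decodings. Hence a word $w$ encodes a positive \textsc{Hitting Set} instance with respect to $\dec_\text{hyp}$ precisely when it encodes a positive \textsc{RBDS} instance with respect to $\dec_\text{red-blue}$.

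Finally, I would note that the inputs of both $\intreg$-problems are NFAs over the same alphabet, subject to the same normalization $\lang(M) \subseteq \Enc$; only the fixed decoding used to test positivity differs. Combining this shared input format with the previous paragraph, $\lang(M)$ contains a positive \textsc{Hitting Set} instance if and only if it contains a positive \textsc{RBDS} instance. Since $\intreg(\textsc{RBDS})$ is decidable, so is $\intreg(\textsc{Hitting Set})$.

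The only real work, and hence the main obstacle, is verifying the instance-level equivalence and checking that the two decodings genuinely agree on the encoded object. In particular one must reconcile the peculiarity that $\dec_\text{red-blue}$ never produces isolated vertices with the hypergraph reading, where every element of $V$ in $\dec_\text{hyp}(w)$ occurs in some listed pair and every hyperedge is nonempty; these side conditions match exactly. Once this bookkeeping is done, decidability transfers verbatim from the \textsc{RBDS} result, and no new automata-theoretic or combinatorial argument is needed.
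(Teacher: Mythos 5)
Your proposal is correct and follows essentially the same route as the paper: the authors also derive $\intreg(\textsc{Hitting Set})$ directly from the decidability of $\intreg(\textsc{RBDS})$ by observing that the red-blue graph associated with $\dec_\text{hyp}(w)$ coincides with $\dec_\text{red-blue}(w)$ (and vice versa), so that a word encodes a positive \textsc{Hitting Set} instance iff it encodes a positive \textsc{RBDS} instance with the same bound $k$. Your extra remark on reconciling the absence of isolated vertices in $\dec_\text{red-blue}$ with the hypergraph reading is a point the paper only touches implicitly, but it does not change the argument.
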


\begin{remark}
There are other natural encodings for hypergraphs in particular. Without going into details, one possibility would be to present all hyperedges by listing their vertex tokens. This could be carried out in a way that our encodings for undirected graphs
would appear as a special case (disregarding loops).
Although the previous result is true for both encodings, observe that the encodings do not translate directly into each other, because there is no rational transducer that translates between the two hypergraph encodings.
\end{remark}

We are now turning to another well-known NP-complete problem on hypergraphs, also known as set systems~\cite{garey1979computers}.

\begin{definition}[{\sc Set Cover}]
	\ \\
	\textit{Given:} Collection $E$ of subsets of a finite set $V$, defining a hypergraph $(V,E)$,  and a non-negative integer $k$.\\
	\textit{Question:} Is there a subset $E' \subseteq E$ with $|E'| \leq k$ such that $V=\bigcup_{e\in E'}e$?
\end{definition}

If we interpret this classical problem from the viewpoint of red-blue graphs, this immediately translates into the question of finding a set of at most $k$ blue vertices that dominate all red vertices. Now, interchanging the roles of red and blue vertices, which is nothing else than applying the concept of  hypergraph duality, we immediately deduce by observing that the red and blue vertices are treated alike in all our graph operations:

\begin{theorem}
	$\intreg(\textsc{Set Cover})$	 is decidable.
\end{theorem}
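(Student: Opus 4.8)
The plan is to exploit the symmetry between red and blue vertices that is already built into the red-blue encoding and all associated graph operations, so that decidability of $\intreg(\textsc{Set Cover})$ follows by mirroring the proof of the decidability of $\intreg(\textsc{RBDS})$ with the two colour classes exchanged. First I would fix the correspondence: under $\dec_\text{red-blue}$, a hypergraph $(V,E)$ becomes a red-blue graph in which the red vertices $R$ encode the universe $V$ and the blue vertices $B$ encode the hyperedges $E$, with $\{r_v,b_e\}$ an edge exactly when $v\in e$. A subfamily $E'\subseteq E$ with $|E'|\le k$ covers $V$ if and only if the corresponding set $B'\subseteq B$ of blue vertices has size at most $k$ and every red vertex is adjacent to some vertex of $B'$. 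In other words, \textsc{Set Cover} is precisely the \textsc{RBDS} question read with the roles of $R$ and $B$ interchanged, which is exactly hypergraph duality.

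Second I would retrace the three steps used for \textsc{RBDS}. The encoding $\dec_\text{red-blue}$, the colour-preserving merge and rename operations, and the representative ingredients $\pickmerge_R$ and $\pickmerge_B$ are all invariant under swapping the two colours, so Lemma~\ref{lem:connectionRedBlueVertexMerge} applies verbatim. The one statement that needs to be re-read in the swapped direction is the merge-preservation lemma: I would prove the \textsc{Set Cover} analogue of Lemma~\ref{lem:mergeRBDS}, namely that if $(G,k)$ is a positive \textsc{Set Cover} instance and $G^\star$ arises by a colour-preserving merge, then $(G^\star,k)$ is again positive. Merging two blue vertices $b,b'$ into $[b,b']$ yields a hyperedge whose incidence is the union of the two, so any cover containing $b$ or $b'$ still covers after replacing it by $[b,b']$; merging two red vertices only identifies universe elements and can hence only shrink a cover. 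This is the dual of the argument sketched after Lemma~\ref{lem:mergeRBDS}.

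Third I would set $\rep_M(p,q)=\threshold_K(2^{|Q|^2},p,q)\cup\pickmerge_R(p,q)\cup\pickmerge_B(p,q)$ --- exactly the representative function of Lemma~\ref{lem:finiteCorePosInstanceRBDS} --- and establish the finite-core equivalence for \textsc{Set Cover}. Using Lemma~\ref{lem:connectionRedBlueVertexMerge} one turns a positive witness $w\in\lang(M)$ with $\dec_\text{red-blue}(w)=(G,k)$ into some $w'\in\lang(\widehat{M}_{\rep_M})$ with $\dec_\text{red-blue}(w')=(G',k')$, where $G'$ is obtained from $G$ by colour-preserving merge and rename operations. The threshold $\threshold_K(2^{|Q|^2},\cdot,\cdot)$ forces either $k'=k$, in which case the merge-preservation lemma transfers the cover, or $k'>2^{|Q|^2}$, in which case $G'$ is trivially positive: the number of distinct blue vertices in the finite core is at most $2^{|Q|^2}$ and, since $\dec_\text{red-blue}$ never produces isolated vertices, taking $E'$ to be the full hyperedge family already covers $V$. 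Combining this equivalence with the computability of $\rep_M$ (Proposition~\ref{prop:computable}), the finite-core length bound (Lemma~\ref{lem:finite_core}), and the decidability of \textsc{Set Cover} itself, one enumerates the finite core and tests each instance.

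The main obstacle is not the enumeration but the careful bookkeeping in the swapped merge-preservation step, together with the boundary case $k'>2^{|Q|^2}$, which hinges on the fact that the red-blue encoding admits no isolated vertices, so that the full hyperedge family is always a valid (if large) cover. Everything else is forced by the red-blue symmetry and reduces to the already-established \textsc{RBDS} machinery, which is why the decidability can be \emph{deduced} rather than reproved from scratch.
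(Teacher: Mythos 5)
Your proposal is correct and follows exactly the route the paper takes: the paper also deduces decidability of $\intreg(\textsc{Set Cover})$ from the \textsc{RBDS} machinery by hypergraph duality, observing that red and blue vertices are treated symmetrically in all graph operations. Your write-up merely spells out the details (the dual merge-preservation lemma and the $k'>2^{|Q|^2}$ boundary case) that the paper leaves implicit.
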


Let us now turn our attention towards directed graphs, or digraphs for short. Notice that although we do not allow multiple edges (or better called arcs in this setting) in the same direction, it is usual (and also quite natural) to have loops in the interpretation of a word from $\Enc$, and also there could be an arc from $u$ to $v$ and another arc from $v$ to $u$.
In particular, if we merge two adjacent vertices $u$ and $v$, a loop on $[u,v]$ will result, and if there is an arc from $w$ to $u$ and from $v$ to $w$, then there will
be one arc in either direction between $w$ and $[u,v]$. Recall that the separate and add-leaf operations are realized by pumping parts of the encodings, which means that directions will be maintained. More specifically, if there is an arc from $u$ to $v$ and $v'$ is introduced as a copy of $v$ in an arc factor corresponding to $(u,v)$, then there will be an arc from $u$ to $v'$ in the resulting graph, while if this happens in an arc factor corresponding to $(v,u)$, then we will see an arc from $v'$ to $u$. 
In the case of an arc factor for $(v,v)$, it depends on whether the left vertex token or the right vertex token is involved in the pumping to understand if an arc from $v'$ to $v$ or vice versa is introduced.

Having these rather minor modifications in mind, basically all general lemmas and theorems that we developed in the undirected graph setting can be adapted to the directed setting. These considerations prove the decidability of $\intreg$ for the following problems:\begin{itemize}
\item
\textsc{Directed Forest}: Determine if a digraph is a collection of directed acyclic graphs. 
\item \textsc{Directed Feedback Vertex Set}, \textsc{Directed Feedback Arc Set}, \textsc{Directed Acyclic Subgraph}, \textsc{Directed Acyclic Induced Subgraph}.
\item \textsc{Diameter}, \textsc{Directed Dominating Set} 
\end{itemize}

We only give the formal definition for two NP-complete problems~\cite{DBLP:conf/coco/Karp72} of these cases below.

\begin{definition}[{\sc Directed Feedback Vertex Set}]
\ \\
\textit{Given:} Directed graph $G = (V, E)$ and a non-negative integer $k$.\\
\textit{Question:} Does there exist a set $V' \subseteq V$ with $|V'| \leq k$ such that $G - V'$ is acyclic?
\end{definition}
\begin{definition}[{\sc Directed Feedback Arc Set}]
\ \\
\textit{Given:} Directed graph $G = (V, E)$ and a non-negative integer $k$.\\
\textit{Question:} Does there exist a set $E' \subseteq E$ with $|E'| \leq k$ such that $G - E'$ is acyclic?
\end{definition}

\section{Illustrating Figures}
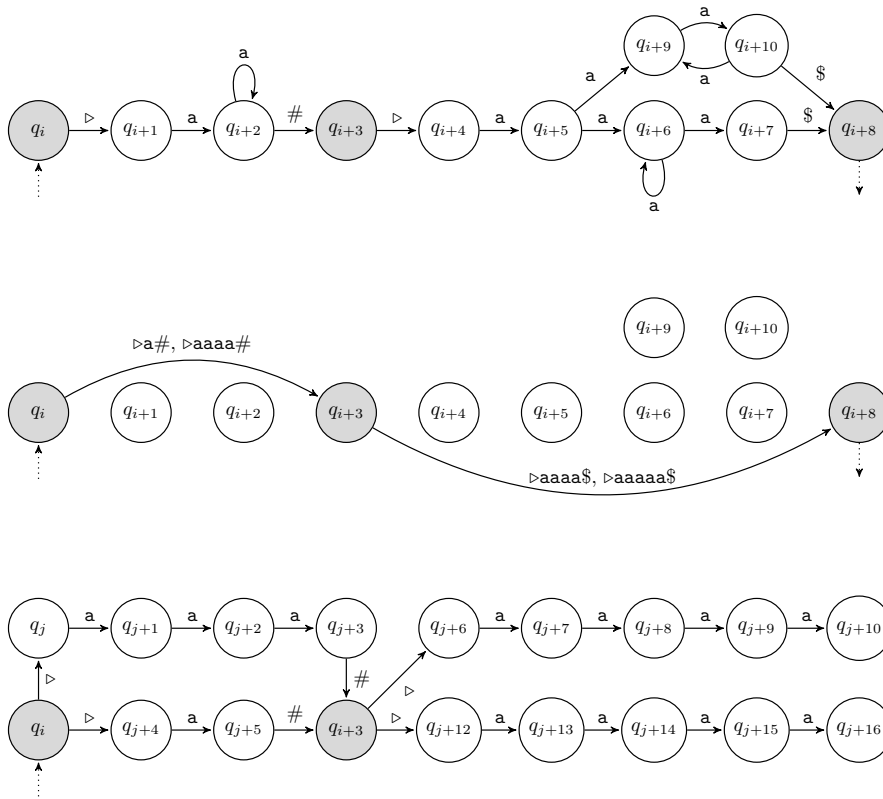
\begin{figure}[H]
	\scalebox{0.75}{
	\begin{tikzpicture}[->,>=stealth',shorten >=1pt,auto,semithick, node distance=1.8cm]
		\tikzstyle{every state} = [minimum size=30pt]
		
		\node[state, fill=gray!30] (0) at (0,0) {$q_i$};
		\node[below of=0, node distance=1.3cm] (-1) {};
		\node[state, right of=0] (1) {$q_{i+1}$};
		\node[state, right of=1] (2) {$q_{i+2}$};
		\node[state, right of=2, fill=gray!30] (3) {$q_{i+3}$};
		\node[state, right of=3] (4) {$q_{i+4}$};
		\node[state, right of=4] (5) {$q_{i+5}$};
		\node[state, right of=5] (6) {$q_{i+6}$};
		\node[state, right of=6] (7) {$q_{i+7}$};
		\node[state, right of=7, fill=gray!30] (8) {$q_{i+8}$};
		\node[state, node distance=1.5cm, above of=6] (9) {$q_{i+9}$};
		\node[state, node distance=1.5cm, above of=7] (10) {$q_{i+10}$};
		\node[below of=8, node distance=1.3cm] (11) {};
		
		\path
		(-1) edge[dotted] (0)
		(0) edge node[above] {$\triangleright$} (1)
		(1) edge node[above] {$\ta$} (2)
		(2) edge[loop above] node {$\ta$} (2)
		(2) edge node {$\#$} (3)
		(3) edge node {$\triangleright$} (4)
		(4) edge node {$\ta$} (5)
		(5) edge node {$\ta$} (6)
		(6) edge node {$\ta$} (7)
		(7) edge node {$\$$} (8)
		(9) edge[bend left] node {$\ta$} (10)
		(10) edge[bend left] node {$\ta$} (9)
		(10) edge node {$\$$} (8)
		(5) edge node {$\ta$} (9)
		(6) edge[loop below] node {$\ta$} (6)
		(8) edge[dotted] (11)
		;
	\end{tikzpicture}
}

\vspace{1cm}
\scalebox{0.75}{
	\begin{tikzpicture}[->,>=stealth',shorten >=1pt,auto,semithick, node distance=1.8cm]
	\tikzstyle{every state} = [minimum size=30pt]
	
	\node[state, fill=gray!30] (0) at (0,0) {$q_i$};
	\node[below of=0, node distance=1.3cm] (-1) {};
	\node[state, right of=0] (1) {$q_{i+1}$};
	\node[state, right of=1] (2) {$q_{i+2}$};
	\node[state, right of=2, fill=gray!30] (3) {$q_{i+3}$};
	\node[state, right of=3] (4) {$q_{i+4}$};
	\node[state, right of=4] (5) {$q_{i+5}$};
	\node[state, right of=5] (6) {$q_{i+6}$};
	\node[state, right of=6] (7) {$q_{i+7}$};
	\node[state, right of=7, fill=gray!30] (8) {$q_{i+8}$};
	\node[state, node distance=1.5cm, above of=6] (9) {$q_{i+9}$};
	\node[state, node distance=1.5cm, above of=7] (10) {$q_{i+10}$};
	\node[below of=8, node distance=1.3cm] (11) {};
	
	\path
	(-1) edge[dotted] (0)
	(0) edge[bend left] node {$\triangleright \ta \#$, $ \triangleright \ta\ta\ta\ta\#$} (3)
	(3) edge[bend right] node {$\triangleright \ta \ta \ta \ta \$$, $ \triangleright \ta \ta \ta \ta \ta \$$} (8)
	(8) edge[dotted] (11)
	;
	\end{tikzpicture}
}

\vspace{1cm}
\scalebox{0.75}{
\begin{tikzpicture}[->,>=stealth',shorten >=1pt,auto,semithick, node distance=1.8cm]
\tikzstyle{every state} = [minimum size=30pt]

\node[state,fill=gray!30] (0) at (0,0) {$q_i$};
\node[below of=0, node distance=1.3cm] (-1) {};
\node[state, right of=0] (1) {$q_{j+4}$};
\node[state, right of=1] (2) {$q_{j+5}$};
\node[state, fill=gray!30, right of=2] (3) {$q_{i+3}$};
\node[state, right of=3] (4) {$q_{j+12}$};
\node[state, right of=4] (5) {$q_{j+13}$};
\node[state, right of=5] (6) {$q_{j+14}$};
\node[state, right of=6] (7) {$q_{j+15}$};
\node[state, above of=6] (9) {$q_{j+8}$};
\node[state, right of=7] (16) {$q_{j+16}$};
\node[state, above of=7] (10) {$q_{j+9}$};
\node[state, right of=16,fill=gray!30] (8) {$q_{i+8}$};
\node[state, above of=5] (17) {$q_{j+7}$};
\node[state, above of=4] (18) {$q_{j+6}$};
\node[state, above of=16] (19) {$q_{j+10}$};
\node[state, above of=8] (20) {$q_{j+11}$};
\node[below of=8, node distance=1.3cm] (11) {};
\node[state, above of=0] (12)  {$q_{j}$};
\node[state, above of=1] (13)  {$q_{j+1}$};
\node[state, above of=2] (14)  {$q_{j+2}$};
\node[state, above of=3] (15)  {$q_{j+3}$};

\path
(-1) edge[dotted] (0)
	(0) edge node[above] {$\triangleright$} (1)
	(1) edge node[above] {$\ta$} (2)
	(2) edge node {$\#$} (3)
	(3) edge node {$\triangleright$} (4)
	(4) edge node {$\ta$} (5)
	(5) edge node {$\ta$} (6)
	(6) edge node {$\ta$} (7)
	(7) edge node {$\ta$} (16)
	(16) edge node {$\$$} (8)
	(3) edge[below right] node {$\triangleright$} (18)
	(18) edge node {$\ta$} (17)
	(17) edge node {$\ta$} (9)
	(9) edge node {$\ta$} (10)
	(10) edge node {$\ta$} (19)
	(19) edge node {$\ta$} (20)
	(20) edge node[right] {$\$$} (8)
(8) edge[dotted] (11)
(0) edge node[right] {$\triangleright$} (12)
(12) edge node[above] {$\ta$} (13)
(13) edge node[above] {$\ta$} (14)
(14) edge node[above] {$\ta$} (15)
(15) edge node[right] {$\#$} (3)
;
\end{tikzpicture}
}
\caption{Examples of subautomata of $M$, $M_{\rep_M}$, and $\widehat{M}_{\rep_M}$ for an NFA $M$ with $\lang(M) \subseteq \Enc$. The top subautomaton shows a part of $M$ where left vertex tokens can be read between $q_i$ and $q_{i+3}$ and right vertex tokens can be read between $q_{i+3}$ and $q_{i+8}$. The subautomaton in the middle shows the corresponding part in $M_{\rep_M}$ where $\rep_M(q_i, q_{i+3}) = \{\triangleright \ta \#$, $ \triangleright \ta\ta\ta\ta\#\}, \rep_M(q_{i+3}, q_{i+8})= \{\triangleright \ta \ta \ta \ta \$$, $ \triangleright \ta \ta \ta \ta \ta \$\}$ and $\rep_M(p, p') = \emptyset$ for all other pairs of drawn states $p, p'$.
The bottom subautomaton shows the generalized NFA $\widehat{M}_{\rep_M}$ obtained from ${M}_{\rep_M}$ where all representatives are interpreted over the original alphabet of $M$ (note that this leads to additional states). 
}
\label{fig:MrepM}
\end{figure}


\begin{figure}[H]
	\centering
	\begin{tabular}{c|c}
		Lemma~\ref{lem:connectionMerge} & Lemma~\ref{lem:connectionSeparate}\\
		\scalebox{0.7}{
	\begin{tabular}{ccc}
		\begin{tikzpicture}
		\node[fill, circle, inner sep = 3pt] (1) at (-4.5,1) {};
		\node[fill, circle, inner sep = 3pt] (2) at (-4.5,-1) {};
		\node[state, minimum size=20pt, fill=white] (3) at (-3.8,0) {$u$};
		\node[state, minimum size=20pt, fill=white] (4) at (-2.3,0) {$v$};
		\node[fill, circle, inner sep = 3pt] (5) at (-1.5,1) {};
		\node[fill, circle, inner sep = 3pt] (6) at (-1.5,-1) {};
		\path[draw,line width=1pt]
		(1) edge (3)
		(2) edge (3)
		(4) edge (5)
		(4) edge (6);		
		\end{tikzpicture}&
		\begin{tikzpicture}
		\node[shape=dart, draw=black, fill=gray, minimum size=6mm] (z) at (0,1) {};
		\node (0,-1) {};
		\end{tikzpicture}&
		\begin{tikzpicture}
		\node[fill, circle, inner sep = 3pt] (7) at (2.5,1) {};
		\node[fill, circle, inner sep = 3pt] (8) at (2.5,-1) {};
		\node[state, minimum size=20pt, fill=white] (9) at (3.5,0) {$[u,v]$};
		\node[fill, circle, inner sep = 3pt] (10) at (4.5,1) {};
		\node[fill, circle, inner sep = 3pt] (11) at (4.5,-1) {};
		
		\path[draw,line width=1pt]
		(7) edge (9)
		(8) edge (9)
		(10) edge (9)
		(11) edge (9);
		\end{tikzpicture}\\
		& & \\
		\begin{tikzpicture}
		\node[fill, circle, inner sep = 3pt] (3) at (-4,0) {};
		\node[state, minimum size=20pt, fill=white] (4) at (-2.5,-1) {$v$};
		\node[state, minimum size=20pt, fill=white] (5) at (-2,0) {$u$};
		\node[fill, circle, inner sep = 3pt] (6) at (-2.5,1) {};
			
		\path[draw,line width=1pt]
		(3) edge (4)
		(3) edge (5)
		(3) edge (6);	
		\end{tikzpicture}&
		\begin{tikzpicture}
		\node[shape=dart, draw=black, fill=gray, minimum size=6mm] (z) at (0,1) {};
		\node (0,-1) {};
		\end{tikzpicture}&
		\begin{tikzpicture}
		\node[fill, circle, inner sep = 3pt] (7) at (2,0) {};
		\node[state, minimum size=20pt, fill=white] (8) at (4,0) {$[u,v]$};
		\node[fill, circle, inner sep = 3pt] (9) at (3.5,1) {};
		\node(4) at (2,-1) {};
		
		\path[draw,line width=1pt]
		(8) edge (7)
		(7) edge (9);
		\begin{scope}[on background layer]
		\path[draw,line width=1pt]
		(2,-0.15) edge[dotted] (4.5, -0.15);
		\end{scope}
	\end{tikzpicture}\\
	& & \\
	\begin{tikzpicture}
	\node[state, minimum size=20pt, fill=white] (4) at (2,0) {$v$};
	\node[state, minimum size=20pt, fill=white] (5) at (0,0) {$u$};
	
	\path[draw,line width=1pt]
	(4) edge (5);	
	\end{tikzpicture}&
	\begin{tikzpicture}
	\node[shape=dart, draw=black, fill=gray, minimum size=6mm] (z) at (0,1) {};
	\end{tikzpicture}&
	\begin{tikzpicture}
	\node[state, minimum size=20pt, fill=white] (8) at (0,0) {$[u,v]$};
	
	\path[draw,line width=1pt]
	(8) edge[loop right,dotted] (8);
	\end{tikzpicture}\\
\end{tabular}
}
&
\scalebox{0.7}{
\begin{tabular}{ccc}\\
	& & \\
	\begin{tikzpicture}
	\node[fill, circle, inner sep = 3pt] (1) at (-4.5,1) {};
	\node[fill, circle, inner sep = 3pt] (2) at (-4.5,-1) {};
	\node[state, minimum size=20pt, fill=white] (3) at (-3.8,0) {$u$};
	\node[state, minimum size=20pt, fill=white] (4) at (-2.3,0) {$v$};
	\node[fill, circle, inner sep = 3pt] (5) at (-1.5,1) {};
	\node[fill, circle, inner sep = 3pt] (6) at (-1.5,-1) {};
	\path[draw,line width=1pt]
	(1) edge (3)
	(2) edge (3)
	(4) edge (5)
	(4) edge (6)
	(3) edge (4);		
	\end{tikzpicture}&
	\begin{tikzpicture}
	\node[shape=dart, draw=black, fill=gray, minimum size=6mm] (z) at (0,1) {};
	\node (0,-1) {};
	\end{tikzpicture}&
	\begin{tikzpicture}
	\node[fill, circle, inner sep = 3pt] (1) at (-4.5,1) {};
	\node[fill, circle, inner sep = 3pt] (2) at (-4.5,-1) {};
	\node[state, minimum size=20pt, fill=white] (3) at (-3.8,0) {$u$};
	\node[state, minimum size=20pt, fill=white] (4) at (-1.3,0) {$v$};
	\node[fill, circle, inner sep = 3pt] (5) at (-0.5,1) {};
	\node[fill, circle, inner sep = 3pt] (6) at (-0.5,-1) {};
	\node[state, minimum size=20pt, fill=white] (7) at (-2.5,0.9) {$v'$};
	\path[draw,line width=1pt]
	(1) edge (3)
	(2) edge (3)
	(4) edge (5)
	(4) edge (6)
	(3) edge (7);		
	\end{tikzpicture}\\
	& & \\
	\begin{tikzpicture}
	\node[state, minimum size=20pt, fill=white] (8) at (0,0) {$u$};
	\node[state, minimum size=20pt, fill=white] (1) at (2,0) {$v$};
	\path[draw,line width=1pt]
	(1) edge[loop above,dotted] (1)
	(8) edge (1);
	\end{tikzpicture}&
	\begin{tikzpicture}
	\node[shape=dart, draw=black, fill=gray, minimum size=6mm] (z) at (0,1) {};
	\end{tikzpicture}&
	\begin{tikzpicture}
	\node[state, minimum size=20pt, fill=white] (7) at (2,0) {$u$};
	\node[state, minimum size=20pt, fill=white] (8) at (4.5,0) {$v$};
	\node[state, minimum size=20pt, fill=white] (9) at (3.5,1) {$v'$};
	
	\path[draw,line width=1pt]
	(8) edge[loop above,dotted] (8)
	(7) edge (9);
	\end{tikzpicture}\\
	& & \\
	\begin{tikzpicture}
	\node[state, minimum size=20pt, fill=white] (8) at (0,0) {$u$};
	\node[state, minimum size=20pt, fill=white] (1) at (2,0) {$v$};
	\path[draw,line width=1pt]
	(8) edge (1);
	\end{tikzpicture}&
	\begin{tikzpicture}
	\node[shape=dart, draw=black, fill=gray, minimum size=6mm] (z) at (0,1) {};
	\end{tikzpicture}&
	\begin{tikzpicture}
	\node[state, minimum size=20pt, fill=white] (7) at (2,0) {$u$};
	\node[state, minimum size=20pt, fill=white] (8) at (3.5,1) {$v'$};
	
	\path[draw,line width=1pt]
	(7) edge (8);
	\end{tikzpicture}\\
	& & \\
	\begin{tikzpicture}
	\node[state, minimum size=20pt, fill=white] (4) at (2,0) {$v$};
	\node[state, minimum size=20pt, fill=white] (5) at (0,0) {$u$};
	\begin{scope}[on background layer]
	\path[draw,line width=1pt]
		(0,0.15) edge (2,0.15)
		(0,-0.15) edge[dotted] (2,-0.15);
	\end{scope}
	\end{tikzpicture}&
	\begin{tikzpicture}
	\node[shape=dart, draw=black, fill=gray, minimum size=6mm] (z) at (0,1) {};
	\end{tikzpicture}&
	\begin{tikzpicture}
	\node[state, minimum size=20pt, fill=white] (7) at (2,0) {$u$};
	\node[state, minimum size=20pt, fill=white] (8) at (4.5,0) {$v$};
	\node[state, minimum size=20pt, fill=white] (9) at (4,1) {$v'$};
	
	\path[draw,line width=1pt]
	(8) edge (7)
	(7) edge (9);
	\end{tikzpicture}\\
	& & \\
	\begin{tikzpicture}
	\node[state, minimum size=20pt, fill=white] (8) at (0,0) {$v$};
	
	\path[draw,line width=1pt]
	(8) edge[loop right,dotted] (8);
	\end{tikzpicture}&
	\begin{tikzpicture}
	\node[shape=dart, draw=black, fill=gray, minimum size=6mm] (z) at (0,1) {};
	\end{tikzpicture}&
	\begin{tikzpicture}
	\node[state, minimum size=20pt, fill=white] (7) at (2,0) {$v$};
	\node[state, minimum size=20pt, fill=white] (8) at (4.5,0) {$v'$};
	\node (9) at (4,1) {};
	\path[draw,line width=1pt]
	(8) edge (7);
	\end{tikzpicture}\\
	& & \\
	\begin{tikzpicture}
	\node[state, minimum size=20pt, fill=white] (8) at (0,0) {$v$};
	\node[fill, circle, inner sep = 3pt] (1) at (2,0) {};
	\path[draw,line width=1pt]
	(8) edge[loop above,dotted] (8)
	(8) edge (1);
	\end{tikzpicture}&
	\begin{tikzpicture}
	\node[shape=dart, draw=black, fill=gray, minimum size=6mm] (z) at (0,1) {};
	\end{tikzpicture}&
	\begin{tikzpicture}
	\node[state, minimum size=20pt, fill=white] (7) at (2,0) {$v$};
	\node[fill, circle, inner sep = 3pt] (8) at (4.5,0) {};
	\node[state, minimum size=20pt, fill=white] (9) at (4,1) {$v'$};
	
	\path[draw,line width=1pt]
	(8) edge (7)
	(7) edge (9);
	\end{tikzpicture}\\
	& & \\
\end{tabular}
}
\end{tabular}
\caption{Connection between replacing a vertex token by a representative and the performed graph operation on the encoded graph. On the left-hand side, all vertex tokens encoding the vertex $v$ are replaced by representatives from $\pickmerge_V$ such that they fall together with the representatives encoding $u$. On the right-hand side, a vertex token $u_i$ encoding the vertex $v$ is replaced by a representative from $\picksep_V$. Edge factors in the word which are omitted by our decoding function are indicated by dotted lines.}
\label{fig:con-graph-rep}
\end{figure}
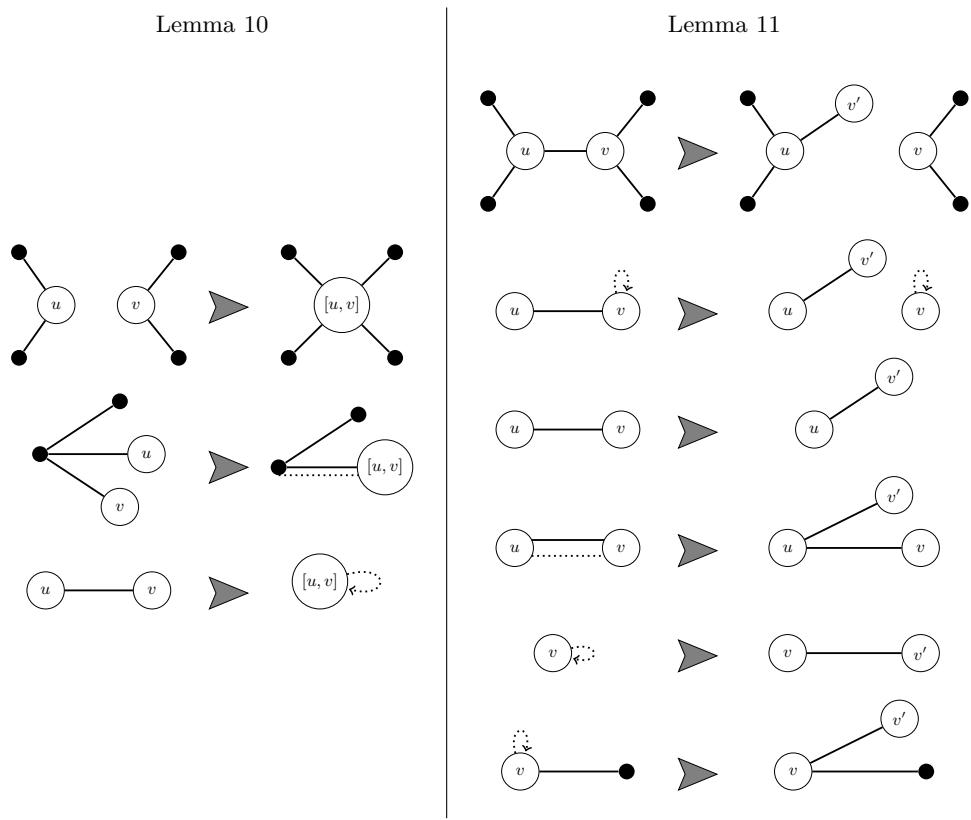
\begin{figure}[H]
	\centering
	\begin{tikzpicture}
	\node[state, minimum size=20pt, fill=white] (1) at (0,0) {$u$};
	\node[state, minimum size=20pt, fill=white] (2) at (1.5,0) {$v$};
	
	\node[shape=dart, draw=black, fill=gray, minimum size=4mm] (z) at (2.5,0.5) {};
	
	\node[state, minimum size=20pt, fill=white] (3) at (3.5,0) {$u$};
	\node[state, minimum size=20pt, fill=white] (4) at (5,0) {$v$};
	\node[state, minimum size=20pt, fill=white] (5) at (5,1) {$v'$};
	
	\node[shape=dart, draw=black, fill=gray, minimum size=4mm] (z) at (6,0.5) {};
	
	\node[state, minimum size=20pt, fill=white] (6) at (7,0) {$u$};
	\node[state, minimum size=20pt, fill=white] (7) at (8.5,0) {$v$};
	\node[state, minimum size=20pt, fill=white] (8) at (7,1) {$u'$};
	\node[state, minimum size=20pt, fill=white] (9) at (8.5,1) {$v'$};
	
	\begin{scope}[on background layer]
	\path[draw,line width=1pt]
	(0,0.15) edge (1.5,0.15)
	(0,-0.15) edge[dotted] (1.5,-0.15)
	(3) edge (4)
	(3) edge (5)
	(6) edge (7)
	(8) edge (9);
	\end{scope}
	\end{tikzpicture}
	\caption{Applying an add-leaf operation followed by a separate operation corresponds to separating a multi-edge.}
	\label{fig:graph-ops-combi}
\end{figure}
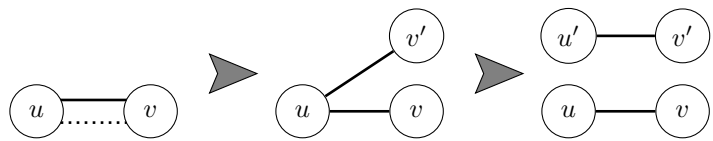

\begin{figure}[H]
	\centering
		\begin{tikzpicture}
	{
			{\node[fill, circle, inner sep = 3pt] (1) at (0,0) {};}
			{\node[fill, circle, inner sep = 3pt] (2) at (1,0.5) {};}
			{\node[fill, circle, inner sep = 3pt] (3) at (2.5,0.3) {};}
			{\node[fill, circle, inner sep = 3pt] (4) at (3.5, -0.2) {};}

			\path[draw,line width=1pt]
			(1) edge (2)
			(3) edge (4);
		}
		
		{
			\draw [draw=gray, line width=3][->] (4.5,0.5) --(6,0.5);
			\node (9) at (5.2,1) {Merging};
			{\node[fill, circle, inner sep = 3pt] (5) at (7,0) {};}
			{\node[fill, circle, inner sep = 3pt] (6) at (8,0.5) {};}
			{\node[fill, circle, inner sep = 3pt] (7) at (9, -0.2) {};}
			\path[draw,line width=1pt]
			(5) edge (6)
			(6) edge (7);
		}
	{
		\draw (1.5,-1) node [rounded corners=8pt, fill=white] {$a\#aaaaaa\$\underline{aa}\#aaa\$$};
	}
	{
		\draw (8,-1) node [rounded corners=8pt, fill=white] {$a\#{aaaaaa}\${aaaaaa}\#aaa\$$};
	}
	\end{tikzpicture}
	
%
%
\vspace{1cm}
	\centering

		\begin{tikzpicture}
{
			{\node[fill, circle, inner sep = 3pt] (1) at (6,0) {};}
			{\node[fill, circle, inner sep = 3pt] (2) at (7,0.5) {};}
			{\node[fill, circle, inner sep = 3pt] (3) at (8.5,0.3) {};}
			{\node[fill, circle, inner sep = 3pt] (4) at (9.5, -0.2) {};}
			
			\path[draw,line width=1pt]
			(1) edge (2)
			(3) edge (4);
		}
{
			\draw [draw=gray, line width=3][->] (3.5,0.5) --(5,0.5);
			
			\node (9) at (4.2,1) {Separating};
			{\node[fill, circle, inner sep = 3pt] (5) at (0,0) {};}
			{\node[fill, circle, inner sep = 3pt] (6) at (1,0.5) {};}
			{\node[fill, circle, inner sep = 3pt] (7) at (2, -0.2) {};}
			\path[draw,line width=1pt]
			(5) edge (6)
			(6) edge (7);
		}

{
		\draw (1,-1) node [rounded corners=8pt, fill=white] {$a\#aa\$\underline{aa}\#aaa\$$};
	}
{
		\draw (7.5,-1) node [rounded corners=8pt, fill=white] {$a\#{aa}\${aaaaaaaa}\#aaa\$$};
	}
	\end{tikzpicture}
	\caption{Pumping labels of vertices and its impact on the encoded graph.}
	\label{fig:beamer-fig}
\end{figure}
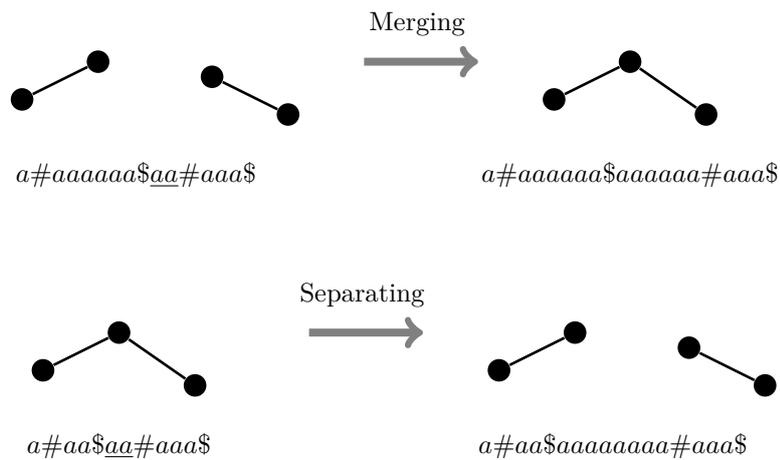
\section{Proofs for Section 4 (Applications -- Decidability Results)}

\begin{proof}[Proof of Lemma~\ref{lem:mergeVC}]
	Since $G$ has a vertex cover, for each edge there has to be one vertex in $V'$. We make a case distinction of the vertices contained in $V'$.
	\begin{itemize}
		\item If $v \notin V'$ and $v' \notin V'$, then for all $\{v, u\}, \{v', u'\} \in E$ the vertices $u$ and $u'$ has to be in $V'$. If $v$ and $v'$ are merged, all edges $\{[v,v'], u \} \in E^\star$ are also covered by $V'' = V'$.
		\item If at least one of $v$ and $v'$ is contained in $V'$, then $V'' = (V'\backslash \{v, v'\}) \cup \{[v,v']\}$ is a vertex cover of size at most $k$.
		\item If $v \in V'$, $v' \in V'$, and $v$ and $v'$ are merged, edges $\{v, u \} \in E$ are replaced by edges $\{[v,v'], u\} \in E^\star$ and edges $\{v', u' \} \in E$ are replaced by edges $\{[v,v'], u'\} \in E^\star$. Hence, both types of edges in $E^\star$ are covered by $V'' = (V' \backslash \{v',v\}) \cup \{[v,v']\}$. 
	\end{itemize}
	Edges not containing $v$ or $v'$ have not been changed and hence $V''$ is a vertex cover for $G^\star$ of size at most $k$. 
\end{proof}

\begin{proof}[Proof of Lemma~\ref{lem:ISreplace}]
	By assumption, $G$ contains an IS $V'$ of size at least $k$. By construction of  $G^\star$ and $G^\diamond$, $V'$ is also independent in those graphs.
\end{proof}
\end{document}